\documentclass[11pt]{article}
\pdfoutput=1

\usepackage[utf8]{inputenc}
\usepackage{amsmath, amsfonts, amssymb, amsthm, mathtools}
\usepackage{slashed}
\usepackage{nicefrac}
\usepackage{mathrsfs}

\usepackage{multirow}
\usepackage{comment}
\usepackage{graphicx, import}
\usepackage{psfrag}
\usepackage[usenames,dvipsnames,svgnames,table]{xcolor}
\usepackage{enumerate}
\usepackage{arydshln}
\usepackage{soul}
\usepackage{array}
\usepackage{caption}
\usepackage{subcaption}
\usepackage{colortbl}
\usepackage{float}
\usepackage{mdframed}
\usepackage{setspace}
\usepackage{pifont}
\usepackage{newunicodechar}

\usepackage{tikz}
\usepackage{tikz-cd}
\usetikzlibrary{shapes.geometric}

\usepackage[most]{tcolorbox}
\usepackage{newfloat}
\DeclareFloatingEnvironment{boxes}

\newcommand{\boxref}[1]{\hyperref[{#1}]{Box~\ref*{#1}}}

\newcommand{\beq}{\begin{equation}}  \newcommand{\eeq}{\end{equation}}
\newcommand{\bal}{\begin{aligned}}   \newcommand{\eal}{\end{aligned}}
\newcommand{\be}{\begin{equation}}
\newcommand{\ee}{\end{equation}}

\def\bea{\begin{eqnarray}}
\def\eea{\end{eqnarray}}

\newcommand{\cT}{\mathcal{T}}

\newcommand{\dd}{\mathrm{d}}

\def\simleq{\; \raise0.3ex\hbox{$<$\kern-0.75em
      \raise-1.1ex\hbox{$\sim$}}\; }
\def\simgeq{\; \raise0.3ex\hbox{$>$\kern-0.75em
      \raise-1.1ex\hbox{$\sim$}}\; }

\numberwithin{equation}{section}

\definecolor{newBLUE}{RGB}{0,102,255}
\definecolor{newGREEN}{RGB}{67,168,0}
\definecolor{dark-gray}{gray}{0.20}
\definecolor{gray}{gray}{0.30}
\definecolor{light-gray}{gray}{0.80}
\definecolor{dark-red}{rgb}{0.7,0,0}
\definecolor{dark-green}{rgb}{0.1,0.4,0}
\definecolor{dark-blue}{rgb}{0.3,0.3,0.7}
\definecolor{light-blue}{rgb}{0.8,0.8,1}
\definecolor{swamp}{RGB}{240, 199, 197}
\definecolor{landscape}{RGB}{180, 250, 199}
\definecolor{undecided}{RGB}{252, 252, 197}

\usepackage{jheppub}
\usepackage{cleveref}

\hypersetup{
	colorlinks=true,
	linkcolor=dark-blue,
	citecolor=dark-red,
	urlcolor=dark-green,
	linktoc=page
}

\theoremstyle{remark}

\crefname{appendix}{Appendix}{Appendices}
\theoremstyle{fact}
\newtheorem{fct}[equation]{Fact}

\title{\centering EFT (String) Tower Building}

\author{Alessandra Grieco$^{1,2}$, Ignacio Ruiz$^{3}$}
\affiliation{$^{1}$Instituto de F\'{i}sica Te\'{o}rica IFT-UAM/CSIC, C/ Nicol\'{a}s Cabrera 13-15, Campus de Cantoblanco, 28049 Madrid, Spain}
\affiliation{$^2$Departamento de F\'{i}sica Te\'{o}rica, Universidad Aut\'{o}noma de Madrid, Cantoblanco, 28049 Madrid, Spain}

\author{and Irene Valenzuela$^{1,2,3}$}
\affiliation{$^3$CERN, Theoretical Physics Department, 1211 Meyrin, Switzerland}

\emailAdd{alessandra.grieco@estudiante.uam.es}
\emailAdd{ignacio.ruiz.garcia@cern.ch}
\emailAdd{irene.valenzuela@cern.ch}

\preprint{IFT-UAM/CSIC-26-96\\\ \vspace*{-0.1cm} 
\hfill  CERN-TH-2026-174}

\abstract{We develop a bottom-up framework to reconstruct the asymptotic spectrum of light towers in four-dimensional $\mathcal{N}=1$ effective field theories directly from their K\"ahler potential. Our construction uses the Integral Scaling Relation, which relates the mass scales of towers becoming light at infinite distance to the tensions of EFT strings, together with the Emergent String Conjecture applied recursively upon decompactification. These conditions organize the tower scaling vectors into a lattice generated by the EFT-string vectors and select the globally consistent tower polytopes. When applied to K\"ahler potentials arising from string compactifications, our algorithm precisely reproduces the known arrangements of towers and duality frames. We then classify the admissible polytopes for asymptotic K\"ahler potentials $K\sim-\log P(s)$, with $P(s)$ a homogeneous polynomial of degree at most seven, and show that certain apparently consistent K\"ahler potentials are incompatible with the assumed quantum-gravity constraints. For general polynomials, additional restrictions arise from gluing the tower arrangements across different growth sectors. Remarkably, every tower polytope allowed by our reconstruction can be obtained as a concrete slice of the polytope associated with M-theory on a Joyce $G_2$-manifold. This provides evidence for a form of string universality in which EFT strings act as the fundamental building blocks of the UV tower structure in the EFT perturbative regimes.
 }

\begin{document}
\hypersetup{pageanchor=false}
\makeatletter
\let\old@fpheader\@fpheader
\makeatother

\maketitle

\hypersetup{
    pdftitle={EFTstrings},
    pdfauthor={Alessandra Grieco, Ignacio Ruiz, Irene Valenzuela},
    pdfsubject={EFT strings and Duality frames}
}

\newcommand{\remove}[1]{\textcolor{red}{\sout{#1}}}

\section{Introduction}

Four-dimensional $\mathcal{N}=1$ effective field theories (EFTs) provide a natural starting point for connecting string theory to real-world physics. But does every 4d $\mathcal{N}=1$ EFT admit a consistent UV completion in string theory? Lessons from the Swampland program \cite{Vafa:2005ui,Brennan:2017rbf,Palti:2019pca,vanBeest:2021lhn,Grana:2021zvf,Harlow:2022ich,Agmon:2022thq} strongly suggest that the answer is no: non-trivial consistency conditions must be satisfied for an EFT to arise from quantum gravity, and thus, from string theory. In this work, we will use recent results from string compactifications—originally uncovered in tests of Swampland conjectures—to constraint the perturbative form of K\"ahler potentials that are UV consistent.

The physics of 4d $\mathcal{N}=1$ EFTs is particularly constrained in perturbative regimes exhibiting (approximate) continuous axionic shift symmetries. In such regimes, one can construct $\frac12$-BPS axionic strings magnetically charged under the corresponding axions. The backreaction of these strings induces non-trivial profiles for their saxionic partners. For K\"ahler potentials of logarithmic form, as commonly encountered in string compactifications, these profiles define distinguished trajectories in field space that drive some of the saxions towards infinite distance. Those strings whose associated trajectories run deeper into the perturbative regime—so that the non-perturbative corrections breaking the axionic symmetry remain suppressed—are known as EFT strings. Their physical properties imply that they become tensionless in Planck units at the corresponding infinite-distance limits.

 The identification of EFT strings in a broad range of 4d $\mathcal{N}=1$ theories arising from string compactifications has proved particularly fruitful in connection with the Distance Conjecture \cite{Lanza:2021udy,Lanza:2022zyg,Grimm:2022sbl,Wiesner:2022qys,Marchesano:2023thx,Marchesano:2024tod,Hassfeld:2025uoy,Grieco:2025bjy,Monnee:2025ynn,Kaufmann:2026tsy,Kaufmann:2026mha}; see also \cite{Marchesano:2022avb,Martucci:2022krl,Cota:2022yjw,Marchesano:2022axe,Martucci:2024trp,Licciardello:2026roi} for other applications. The Distance Conjecture \cite{Ooguri:2006in} is one of the central proposals of the Swampland program. It states that every infinite-distance limit in the moduli space of an EFT admitting a consistent quantum-gravitational UV completion must be accompanied by an infinite tower of states whose characteristic mass scale decreases exponentially with the geodesic field-space distance in Planck units. This behavior has been verified in a wide variety of string compactifications (see, e.g. \cite{Baume:2016psm,Klaewer:2016kiy, Blumenhagen:2017cxt, Grimm:2018ohb,Heidenreich:2018kpg, Blumenhagen:2018nts, Grimm:2018cpv, Buratti:2018xjt, Corvilain:2018lgw, Joshi:2019nzi,  Erkinger:2019umg, Marchesano:2019ifh, Font:2019cxq,  Lee:2019wij,Gendler:2020dfp, Lanza:2020qmt, Klaewer:2020lfg, Lee:2021qkx,Lee:2021usk,Alvarez-Garcia:2023gdd,Alvarez-Garcia:2023qqj,Rudelius:2023mjy,Aoufia:2024awo,Hassfeld:2025uoy,Monnee:2025msf}), as well as in AdS/CFT \cite{Baume:2020dqd,Perlmutter:2020buo,Baume:2023msm,Ooguri:2024ofs,Calderon-Infante:2024oed,Calderon-Infante:2026rkj, Mantegazza:2026spd, Baume:2026svx,Calderon-Infante:2026zmd}.

Remarkably, an analysis of numerous string-theory examples \cite{Lanza:2021udy} revealed that the mass scale $m_\star$ of the leading tower becoming light is related to the tension $\mathcal{T}$ of EFT strings, a relation that was subsequently tested further in \cite{Grieco:2025bjy}. More precisely, along the scalar trajectory sourced by a given EFT string, the leading tower satisfies the integral relation

\begin{equation}\label{scalingweight intro}
\left(\frac{m_{\star}}{M_{\rm Pl,4}}\right)^2\sim \left(\frac{\mathcal{T}}{M_{\rm Pl,4}^2}\right)^w\to0\,,\quad\text{with }\; w=1,2,3\,.
\end{equation}

This can equivalently be expressed as the following condition on the corresponding \textit{scaling vectors} \cite{Lee:2018spm,Gonzalo:2019gjp,DallAgata:2020ino,Andriot:2020lea,Benakli:2020pkm,Etheredge:2023odp,Etheredge:2024tok}:

 where the derivatives are taken with respect to the moduli. Using this,  \eqref{scalingweight intro} becomes
 \begin{equation}
		\vec\zeta_{*} \cdot \vec\zeta_{\cT_{\mathbf{e}}}=w|\vec\zeta_{\cT_{\mathbf{e}}}|^2 \ , \quad \text{ with } w=1,\,2,\,3\;,
\end{equation}

defined as

\begin{equation}
\vec{\zeta}_i=-\vec\nabla\log\frac{m_i(\vec{\varphi})}{M_{\rm Pl,4}}\,,\qquad \vec{\zeta}_{\mathcal{T}}=-\vec\nabla\log\frac{\mathcal{T}(\vec{\varphi})^{1/2}}{M_{\rm Pl,4}}\,,
\end{equation}
where the derivatives and metric involved are taken to be those of the 4d moduli space. For later reference, we refer to this proposal as the \textbf{Integral Scaling Relation}.\footnote{Sometimes also referred to as \textit{Integer Scaling Conjecture}, see, e.g., \cite{Grieco:2025bjy}, or as \emph{Conjecture 2: cut-off asymptotics} in \cite{Lanza:2021udy}.}

The significance of this relation is that it connects an infrared quantity that can be determined directly from the K\"ahler potential—the EFT-string tension—to ultraviolet data: the masses of the towers of states required by quantum-gravity consistency. The latter would ordinarily be accessible only once the microscopic UV completion is known.

In previous work \cite{Grieco:2025bjy}, we tested this relation further in top-down string constructions and found that it is satisfied not only by the leading tower along a given asymptotic trajectory, but also by subleading towers that become dominant in other directions. These are precisely the towers whose scaling vectors generate the convex hull of the towers of states \cite{Calderon-Infante:2020dhm,Etheredge:2022opl,Etheredge:2024tok,Calderon-Infante:2023ler}. Moreover, the relation implies that the light towers are organized into a lattice structure determined by the EFT-string vectors, as discussed in \cite{Grieco:2025bjy}. Interestingly, similar lattice structures relating the mass of light towers and the tension of extended objects have recently been discussed in the context of the Distance Conjecture, see \cite{Etheredge:2024amg,Etheredge:2025ahf}.

The purpose of the present paper is to reverse this logic. We assume that the Integral Scaling Relation is a universal consistency condition for 4d $\mathcal{N}=1$ EFTs and investigate, from a bottom-up perspective, how strongly it constrains both their asymptotic K\"ahler potentials and their UV spectra. In particular, we address the following questions:
\begin{itemize}
\item Given a 4d K\"ahler potential, can the Integral Scaling Relation fully determine the asymptotic masses of the UV towers of states that should emerge at its infinite-distance limits?
\item Is every asymptotic K\"ahler potential compatible with the Integral Scaling Relation, or can the conjecture be used to rule out K\"ahler potentials that cannot arise from a consistent quantum-gravitational UV completion?
\end{itemize}
In a nutshell, we find that the Integral Scaling Relation, when combined with the Emergent String Conjecture, is sufficiently restrictive to provide definite answers to both questions. The Emergent String Conjecture (ESC) \cite{Lee:2019xtm,Lee:2019wij} refines the Distance Conjecture by specifying the microscopic nature of the leading tower: it must arise either from a decompactification limit, giving a Kaluza–Klein tower, or from the oscillator modes of a fundamental critical string. Under certain conditions, this restricts the allowed exponential decay rates of the tower masses, see e.g., \cite{Etheredge:2022opl}. Combined with the Integral Scaling Relation, these restrictions single out specific scaling vectors for the UV towers that precisely match those found in explicit string compactifications. The first question therefore has a positive answer:  the asymptotic K\"ahler potential contains enough information to reconstruct the relevant UV tower data upon imposing the above conjectures.

The answer to the second question is negative. Not every K\"ahler potential gives rise to a structure of EFT strings and UV towers compatible with both the ESC and the Integral Scaling Relation. Subject to the universality of these conjectures, our results therefore provide a criterion for ruling out certain perturbative asymptotic forms of the K\"ahler potential from admitting a UV completion in quantum gravity.

Finally, we investigate the interplay between the Integral Scaling Relation and the taxonomy rules derived in \cite{Etheredge:2024tok}. We find that the former is more restrictive: certain arrangements of UV towers that satisfy the taxonomy rules for particles are nevertheless incompatible with the Integral Scaling Relation. Remarkably, for K\"ahler potentials defined by a logarithm of homogeneous polynomials up to degree seven, every arrangement of UV towers compatible with both the Integral Scaling Relation and the ESC can be realized in string theory. More specifically, these arrangements arise as different slices of the 4d $\mathcal{N}=1$ moduli spaces obtained from M-theory compactified on Joyce $G_2$ manifolds. Within the class of models studied here, this establishes a form of string universality for the boundaries of 4d moduli spaces: every asymptotic structure of UV towers allowed by the proposed quantum-gravity consistency conditions admits an embedding into string theory/M-theory. Such universality does not emerge if one imposes only the Distance Conjecture and the ESC.\\

The outline of the paper is as follows. We start with a review of EFT string solutions from purely 4d EFT data in Section \ref{sec.reviewEFT strings}. In Section \ref{sec.reconstr} we outline our assumptions and the bottom-up reconstruction algorithm of the light tower arrangement given a K\"ahler potential. We then use it in Section \ref{sec.constrain kahler} to put constrains on K\"ahler potentials and possible tower arrangements, finding that all possible consistent convex hulls are realized in top-down string/M-theory examples. We finish in Section \ref{sec.concl} with a summary and outlook. Appendix \ref{app.id} offers a review of some basic results on the $\zeta$-vectors and scaling weights of EFT strings, followed by a review of the top-down duality frames and tower arrangement for 4d $\mathcal{N}=1$ compactifications of string and M-theory from \cite{Grieco:2025bjy} in Appendix \ref{app.topdown}.

\section{Review of EFT strings\label{sec.reviewEFT strings}}

\subsection{EFT strings in 4d \texorpdfstring{$\mathcal{N}=1$}{N=1} theories}

Before describing the interplay with the towers of light states, in this section we will review the construction of EFT strings as $\frac{1}{2}$BPS solutions to the 4d $\mathcal{N}=1$ effective action, as described in \cite[Section 2]{Lanza:2020qmt} and \cite{Greene:1989ya}. Take for this the set of chiral fields $\{t^j=a^j+it^j\}_{j=1}^n$  and consider for simplicity a vanishing scalar potential. 
The relevant part of the bosonic action reads
\begin{equation}
S_{\rm 4d}\supseteq\frac{M_{\rm Pl,4}^2}{2}\int\left\{\mathcal{R}\star1-2K_{i\bar j}\dd t^i\wedge\star\bar t^j\right\}\;,\quad\text{with }\;K_{i\bar j}=\partial_{t^i}\partial_{\bar t^j}K\,,
\end{equation}

where $K$ is the K\"ahler potential. In those regions of the moduli space where the action exhibits some approximate axionic shift symmetry preserved at perturbative level, we can construct string-like solutions magnetically charged under the axions. These are given by
\begin{equation}\label{eq.EFT string metric}
\dd s^2=-\dd \tau^2+\dd x^2+e^{2D(z,\bar{z})}\dd z\dd\bar{z}\,,
\end{equation}
 where $\{\tau,x\}$ parametrize the string worldsheet and $\{z,\bar{z}\}$ its traverse directions. A simple solution for the $\{t^j:\mathbb{C}\to\mathcal{M}\}_{j=1}^n$ is given by holomorphic profiles $\bar{\partial}t^j=0$ (here $\partial=\partial_z\wedge \dd z$), while Einstein equations require $e^{2D(z,\bar{z})}=|f(z)|^2e^{-K}$ with $f(z)$ holomorphic. Requiring invariance under the axionic discrete shift symmetry implies a monodromy $t^j\to t^j+e^j$ as one goes around the codimension-2 string (which for simplicity we will locate at $z=0$).  Here $e_i\in\mathbb{Z}$ are the string charges. This results in the following local solution
 \begin{equation}\label{eq.profilesEFT}
 t^j(z)=t^j_0+\frac{e^j}{2\pi i}\log\Big(\frac{z}{z_0}\Big)\,,\quad\text{so that}\quad\left\{\begin{array}{l}
 a^j=a^j_0+\frac{\theta}{2\pi}e^j\\
 s^j=s^j_0-\frac{e^j}{2\pi}\log\Big(\frac{r}{r_0}\Big)
 \end{array}\,,\right.
 \end{equation}
 with $t_0^j=a_0^j+is^j_0$ and $z=re^{i\theta}$. For later convenience, let us define $\sigma=-\frac{1}{2\pi}\log\Big(\frac{r}{r_0}\Big)$. 
 
 Transverse to the string, the configuration has an energy per unit length given by
 \begin{align}
 \mathcal{E}&=M_{\rm Pl,4}^2\int_{\mathbb{C}}K_{j\bar{k}}\Big(\partial t^i\wedge\star_{\mathbb{C}}\bar\partial\bar{t}^j+\bar\partial t^i\wedge\star_{\mathbb{C}}\partial\bar{t}^j\Big)\notag\\
 &=M_{\rm Pl,4}^2\underbrace{\int_{\mathbb{C}}iK_{j\bar{k}}\dd t^j\wedge\dd \bar{t}^k}_{\rm topological}+2M_{\rm Pl,4}^2\int_{\mathbb{C}}\underbrace{K_{j\bar{k}}\bar\partial t^j\wedge\star_{\mathbb{C}}\partial\bar{t}^k}_{\geq 0}\,,
 \end{align}
 where the topological term is set by boundary conditions on $K$. Note that, seeing $K_{j\bar k}$ as a pullback from the moduli space to the transverse $\mathbb{C}$, the above expression takes the form of a BPS bound, which is saturated only if $\bar{\partial}t^j=0$, i.e., the chiral scalars take holomorphic profiles. This then means that solutions \eqref{eq.profilesEFT} will preserve half of the supercharges, and will therefore correspond to $\frac{1}{2}$BPS-strings.\\
 
From \eqref{eq.profilesEFT}, we see that, close to the core of the EFT string, the saxionic component of the chiral scalars diverge, 
\begin{equation}\label{eq.trajectory}
	t^i\to e^i\cdot\infty\;\Longleftrightarrow\; s^i\to e^i\cdot\infty\quad\text{and }\; a^i\text{ constant}\,, \qquad\text{as }\; r\to 0\;,
\end{equation}
so that it drives the solution towards a boundary of the moduli space. This saxionic path (namely $s^i=s^i_0+e^i\sigma$ with $\sigma\to\infty$) is usually referred to as the \emph{string flow}.

Along these limits, the K\"ahler potential $K$ can become approximately invariant under continuous axionic shifts $a^i\to a^i+c^i$. This is the case if, e.g.,
\begin{equation}\label{K}
	K=-\log\Big[P(s)+\mathcal{O}\big(e^{2\pi im_jt^j}\big)\Big]\,,
\end{equation}
where $\mathcal{O}\big(e^{2\pi im_jt^j}\big)$ are instantonic non-perturbative corrections. For the string solution to be valid, these instantonic solutions must be sufficiently suppressed, which bring us to the definition of EFT strings:

\begin{mdframed}
\textbf{EFT strings:}  $\frac{1}{2}$BPS axionic strings with charges $e_i$ such that $e^im_i\geq 0$ for all allowed instanton charges $m_i$. This way all non-perturbative corrections will be suppressed along the limit \eqref{eq.trajectory} (i.e., when $s^j\to\infty$) and we will remain in the perturbative regime of the moduli space as we approach the string core.
\end{mdframed}
We can define the perturbative regime (i.e., the regime where non-perturbative corrections to the K\"ahler potential and superpotential can be neglected) as the deep interior of the saxionic cone $\Delta$ given by
\beq
\Delta\equiv \{s^i\in N_{\mathbb{R}} \, |\, m_is^i>0, \forall m_i\in \mathcal{C}_I\}
\eeq
where $\mathcal{C}_I$ is the set of instanton charges and $N_\mathbb{R}$ is some real vector space of the appropriate dimension. The above definition of EFT strings implies that cone of EFT string charges $\mathcal{C}_S^{\rm EFT}$ is the discretization of the saxionic cone, namely
\beq
\mathcal{C}_S^{\rm EFT}=\bar\Delta\cap N_{\mathbb{Z}}
\eeq
where $\bar \Delta$ is the closure of $\Delta$ and $N_\mathbb{Z}$ is a lattice obtained by a discretization of $N_\mathbb{R}$. Therefore, the EFT string charges act as generators of the saxionic cone parametrizing the EFT perturbative regime. The string flow \eqref{eq.profilesEFT} gets mapped to a trajectory in moduli space which bring us deeper into the perturbative regime. For K\"ahler potentials of the form \eqref{K}, the limit \eqref{eq.trajectory} will be at infinite distance in moduli space. Notice that not every $\frac{1}{2}$BPS string is an EFT string, as its charges might not satisfied that $e^im_i\geq 0$, so that the string flow would bring us away from the perturbative regime.\\

To determine the string tensions, it is useful to consider a dual formulation of the EFT. In $d=4$, axionic scalars are dual to 2-forms, through
\begin{equation}
\dd B_{2\,j}=-M_{\rm Pl,4}^2\mathsf{G}_{jk}\star\dd a^k,
\end{equation}
where $\mathsf{G}_{jk}=\frac{1}{2}\partial_{s^j}\partial_{s^k}K$
defines the moduli space metric of the chiral scalars in terms of the saxions along these limits. Through the metric \eqref{eq.EFT string metric} and the $\bar{\partial}t^j=0$ condition, we can write 
\begin{equation}
	\dd B_{2\,j}=-M_{\rm Pl,4}\dd t\wedge\dd x\wedge(\mathsf{G}_{ij}\dd s^j)=M_{\rm Pl,4}\dd t\wedge\dd x\wedge\dd l_j\,,
\end{equation}
where $l_j=-\frac{1}{2}\partial_{s^j}K$ are known as \emph{dual saxions}. From this is easy to identify $B_{2\,j}=M_{\rm Pl,4}^2l_j\dd t\wedge \dd x$. Since 4d strings can be charged under (a linear combination of) these 2-forms, their worldvolume $\mathcal{W}$ action is given by
\begin{equation}
	S_{\rm string}=e^i\int_\mathcal{W}B_{2\,i}-\int_\mathcal{W}\mathcal{T}_{\mathbf{e}}\star_{\mathcal{W}}1\,.
\end{equation}
From the BPS condition, this translates in the EFT string tension being given by
\begin{equation}\label{eq.EFTstringTension}
	\mathcal{T}_{\mathbf{e}}=M_{\rm Pl,4}^2e^il_i=-\frac{M_{\rm Pl,4}^2}{2}e^i\partial_{s^i}K\,.
\end{equation}
Note that this derivation is valid for any $\frac{1}{2}$BPS strings in this 4d $\mathcal{N}=1$ theory, which implies that any other string whose charge or tension is not given by the above expression is necessarily non BPS.\\

Using \eqref{K}, EFT strings satisfy that their tension vanishes along the string flow as 
\be\label{univtensionflow}
\cT_{\bf e}\sim \frac{M^2_{\rm P}}{\sigma}\rightarrow 0\, ,\quad~~~ \text{for $\sigma\rightarrow\infty$}\, .
\ee
From a string theory perspective, we can identify the microscopic origin of these EFT strings, which -- as expected from the above analysis -- will indeed always become tensionless in Planck units at the infinite distance boundaries of the moduli space. This was first done in plethora of 4d $\mathcal{N}=1$ string compactifications in  \cite{Lanza:2021udy}, and used to provide a bottom-up explanation for the EFT breakdown predicted by the Swampland Distance Conjecture \cite{Ooguri:2006in}, as the string tension \eqref{univtensionflow} provides an upper bound for the EFT cut-off, $\Lambda_{\rm cut-off}\lesssim  \cT^{1/2}$. In general, there can also be other towers of states becoming light asymptotically, but it was found in  \cite{Lanza:2021udy} that they always satisfy the following integral scaling
\begin{align}
\label{scalingweight}
m_*^2\simeq M_{\rm Pl,4}^2\left(\dfrac{\cT_{\bf e}}{M_{\rm Pl,4}^2}\right)^w \quad \text {with } w=1,2,3\quad
\text{ as }s^i=s^i_0+e^i\sigma\text{ with }\sigma\rightarrow \infty \ .
\end{align}
where $m_*$ is the mass scale of the leading tower of states and $w$ is denoted as the scaling weight. This \textbf{Integral Scaling Relation} or Conjecture was further analyzed in plethora of 4d $\mathcal{N}=1$ string compactifications in \cite{Grieco:2025bjy}, where it was found that also the subleading towers of states (below the species scale) satisfy such relation with the EFT strings.

Following  \cite{Grieco:2025bjy}, one can re-write \eqref{scalingweight} as a local condition in moduli space by defining the following scaling vectors
\begin{equation}\label{eq.zeta vec}
	\vec{\zeta}_i=-\vec\nabla\log\frac{m_i(\vec{\varphi})}{M_{\rm Pl,4}}\,,\qquad \vec{\zeta}_{\mathcal{T}}=-\vec\nabla\log\frac{\mathcal{T}(\vec{\varphi})^{1/2}}{M_{\rm Pl,4}}
\end{equation}
 where the derivatives are taken with respect to the moduli. Using this,  \eqref{scalingweight} becomes
 \begin{equation} \label{eq.intscaling}
		\vec\zeta_{*} \cdot \vec\zeta_{\cT_{\mathbf{e}}}=w|\vec\zeta_{\cT_{\mathbf{e}}}|^2 \ , \quad \text{ with } w=1,\,2,\,3\;,
		\end{equation}
		where the scalar product is made with respect to the moduli space metric $\mathsf{G}_{jk}=\frac{1}{2}\partial_{s^j}\partial_{s^k}K$. This implies that the $\zeta$-vectors of all the light towers of states\footnote{More specifically, it applies to all towers of states with a mass scale below or at the species scale, which generate the tower convex hull \cite{Grieco:2025bjy}} necessarily lie at points of the lattice generated by EFT string vectors, as checked in plethora of string theory examples in \cite{Grieco:2025bjy}. In Figure \ref{f.chullex} we show a qualitative example, where the red arrows are EFT string vectors and the blue dots the  $\zeta$-vectors of the towers. In the same colors, we also show the convex hulls of the vectors associated to the EFT strings or the towers of particles. Notice that the later appear to be located on a lattice generated by the $\zeta$-vectors of EFT strings charged along single saxionic directions, as indicated by \eqref{eq.intscaling}.

 \begin{figure}[!ht]
\begin{center}
\includegraphics[width=0.55\textwidth]{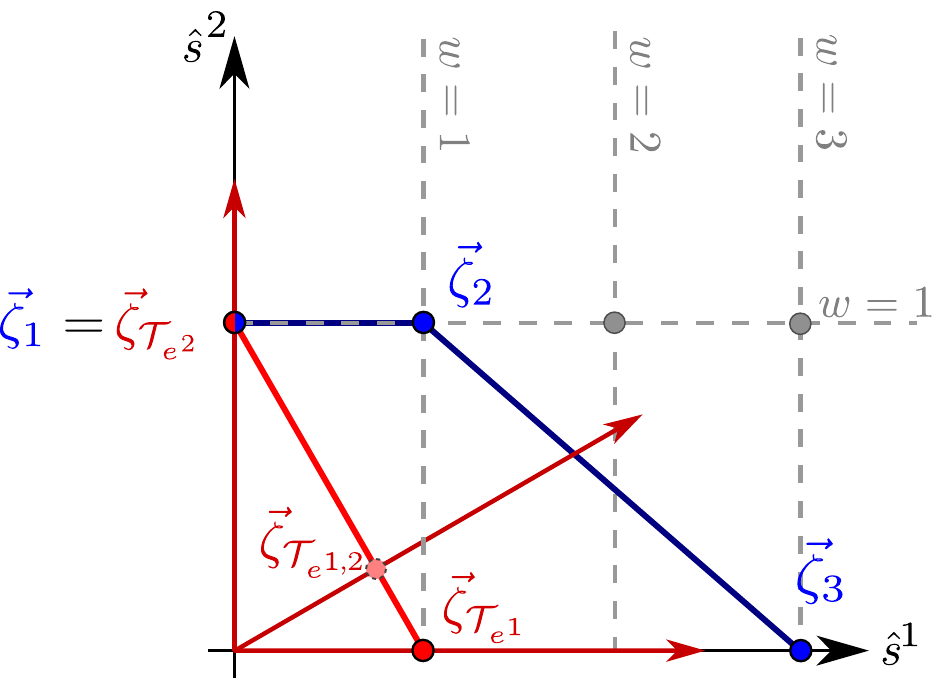}
\caption{Convex hull of $\zeta$-vectors (in blue) corresponding to light towers of states in a given growth sector. In red, $\zeta$-vectors corresponding to EFT strings generating the lattice on which light towers live. Notice that not all points are populated: the ones in grey would correspond to decompactification to a non-integer number of dimensions.
\label{f.chullex}}
\end{center}
\end{figure}

 In this paper, we are going to reverse the story. Rather than performing a top-down analysis to check \eqref{eq.zeta vec} (as done in   \cite{Grieco:2025bjy}), we are going to \emph{assume} \eqref{eq.zeta vec} (or equivalently, \eqref{scalingweight}) as a universal principle of quantum gravity, and derive the possible UV structures of towers consistent with that relation from a bottom-up perspective.  Hence, we will start with generic K\"ahler potentials in supergravity, from which we can derive the EFT string tensions purely from a bottom-up perspective, and put constraints on the UV spectrum of the theory using \eqref{eq.zeta vec}. We will see that the results match the specific structures of UV towers observed in string theory, providing evidence for a sort of string universality.

 \subsection{A comment on 4d \texorpdfstring{$\mathcal{N}=2$}{N=2} theories}
 
The above discussion applies to 4d $\mathcal{N}=1$ theories, where the moduli space $\mathcal{M}$ is naturally endowed with a K\"ahler structure in terms of a K\"ahler potential $K$ determining the metric $\mathsf{G}_{i\bar j}=\partial_i\partial_{\bar j} K$ and the K\"ahler form $\omega=\frac{i}{2}\partial\bar\partial K$. As we have seen above, the EFT string tensions can be computed directly from this K\"ahler potential (i.e., purely EFT data) through \eqref{eq.EFTstringTension}.\\
 
 Now, while in general this is no longer the case for 4d $\mathcal{N}=2$ theories, one can still define a K\"ahler potential function of a subsector of the scalars. The moduli space of 4d $\mathcal{N}=2$ supergravity factorizes into a \emph{vector} and \emph{hypermultiplet} sectors, see e.g., \cite[Chapter 20]{Freedman:2012zz}
 \begin{equation}
 	\mathcal{M}=\mathcal{M}_{\rm V}\times\mathcal{M}_{\rm H}\,,
 \end{equation}
 respectively \emph{(projective) special K\"ahler} and \emph{quaternionic K\"ahler} manifolds of dimensions $2n_{\rm V}$ and $4n_{\rm H}$. The vector sector is parameterized by projective complex coordinates $\{X^I\}_{I=0}^{n_{\rm V}}$, which together with the holomorphic prepotential $\mathcal{F}$ define a symplectic section $\Omega=(X^I,F_I)^{\intercal}$, with $F_I=\partial_{X^I}\mathcal{F}$. Then the K\"ahler potential associated to $\mathcal{M}_{\rm V}$ is defined as \cite{Strominger:1990pd}
 \begin{equation}\label{eq.kahlerprepot}
 	K_{\rm V}=-\log\big(i\langle\Omega |\bar\Omega\rangle\big)=-\log\big[i(\bar{X}^IF_I-X^I\bar{F}_I)\big]\,.
 \end{equation}
 Once $K_{\rm V}$ is known, one can compute the tension of $\frac{1}{2}$-BPS strings charged with respect to the physical scalars $z^I=\frac{X^I}{X^0}$ in a similar way as in the 4d $\mathcal{N}=1$ case through the use of \eqref{eq.EFTstringTension}.
 
 The hypermultiplet case is generally more involved, since $\mathcal{M}_{\rm H}$ is not K\"ahler but \emph{quaternionic K\"ahler}, endowed with a local quaternionic algebra in terms of a $Sp(1)\simeq SU(2)$ structure $\{\vec{J}_X\}_{X=1}^{n_{\rm H}}$ with $[J^a,J^b]=2\epsilon^{abc}J^c$. Generally, we will not be able to define a K\"ahler potential from which to derive the tension of an EFT string from the bottom-up, see \cite[Section 20.2]{Freedman:2012zz} for how to define the metric on $\mathcal{M}_{\rm H}$. Only in particular cases where the $\mathcal{M}_{\rm H}$ manifold presents some isometries (which can be understood as axionic shift symmetries) one can can rewrite the kinetic term of the quaternionic moduli as saxion-axion pair, the latter of which can be dualized to a 2-form under which 4d strings are charged, as well as defining an effective K\"ahler potential for said pair(s), see e.g., \cite{deWit:2001brd,cortes2021symmetries} for more details.
 
 For top-down constructions coming from compactifications of type II string theory on a CY 3-fold $X$, at the classical level $\mathcal{M}_{\rm H}$ is fibered over $\mathcal{M}_{\rm C} (X)$ (the complex structure moduli space of $X$) in type IIA, and over $\mathcal{M}_{\rm K}({X})$ (the moduli space of the the complexified K\"ahler structures, isometric to $\mathcal{M}_{\rm C} (\hat X)$ though Mirror Symmetry) in type IIB, see \cite{Alexandrov:2013yva} for a review. Both of $\mathcal{M}_{\rm C} (X)$ and $\mathcal{M}_{\rm K}({X})$ admit a K\"ahler potential,\footnote{
For  $\mathcal{M}_{\rm C} (X)$ in the type IIA side this is given by $K_{\rm C}=-\log\left(i\int_X\Omega\wedge\bar{\Omega}\right)$ (where $\Omega$ is the holomorphic (3,0)-form), while $\mathcal{M}_{\rm K}({X})$ in type IIB admits a classic prepotential given by $\mathcal{F}=-\kappa_{abc}\frac{X^a X^b X^c}{6X^0}$ (modulo a symmetric quadratic form that does not affect the resulting K\"ahler potential), where $\kappa_{abc}$ are the triple intersection numbers of the 3-fold, from which the K\"ahler potential $K_K$ can be obtained through \eqref{eq.kahlerprepot}.} which modulo an overall rescaling, factorizes in the perturbative metric of $\mathcal{M}_{\rm H}$. The inclusion of quantum corrections through D-instantons generally spoils such clean structure, the same way as in 4d $\mathcal{N}=1$ for the K\"ahler potential in \eqref{K}. We thus conclude that in generic asymptotic regimes of the 4d $\mathcal{N}=2$ moduli space (both in the vector and hypermultiplet sectors) where non-perturbative contributions are suppressed, it is expected that we can define some effective K\"ahler potential on a subsector of the moduli through which the tension of $\frac{1}{2}$-BPS strings can be computed.

 \vspace{0.25cm}
EFT strings have been studied in explicit 4d $\mathcal{N}=2$ scenarios in various instances in the literature, see e.g., \cite{Marchesano:2022avb,Marchesano:2022axe,Grimm:2022sbl,Marchesano:2023thx,Marchesano:2024tod,Hassfeld:2025uoy,Monnee:2025ynn}. As appreciated in, e.g., \cite{Castellano:2023jjt}, the tower arrangements for compactifications down to 4d $\mathcal{N}=2$ and $\mathcal{N}=1$ theories are analogous, even if the microscopic interpretation of the light degrees of freedom varies. While the exhaustive checks from \cite{Grieco:2025bjy} focused in 4d $\mathcal{N}=1$, we expect the Integral Scaling Relation \eqref{scalingweight} to also hold for similar constructions to 4d $\mathcal{N}=2$ with the same effective K\"ahler potential $K\sim -\log P(s)$. Thus, the lessons regarding the reconstruction of towers and duality frame arrangements along this paper should apply to 4d theories with both 4 and 8 supercharges.

\section{Bottom-up reconstruction of string theory dualities}\label{sec.reconstr}

Different string/M-theory compactifications to 4d $\mathcal{N}=1$ supergravity can yield, from the top-down point of view, the same K\"ahler potential with asymptotic form $K\sim -\log P(s)$. These different top-down constructions do not necessarily feature the same arrangement of light towers and types of duality frames along the asymptotic regions of the saxionic cone. For example, as previously shown in \cite{Grieco:2025bjy}, the K\"ahler potential $K\sim -\log[s^0(s^1)^3]$ can be obtained from two families of top-down compactifications. These are respectively given by $E_8\times E_8$ heterotic and type IIA string theories on one hand, and $SO(32)$ heterotic/type I and type IIB string theory/F-theory on the other. The arrangement of towers is different between these two families, but equivalent inside each family (so that both heterotic $SO(32)$ and type I theory, for example, will lead to the same convex hull of $\zeta$-vectors for a given K\"ahler potential). Clearly, the microscopic origin of the light towers and of the EFT strings can be very different even between theories in the same family.  It can also be explicitly checked that the Integral Scaling Relation \eqref{scalingweight} (equivalently \eqref{eq.intscaling} using the $\zeta$-vectors language) holds between any tower generating the convex hull and elementary EFT string, with $w\in\{0,1,2,3\}$. The precise $w$ values for the different towers and EFT strings are the same for each of the two families mentioned above.

The fact that only these two types of arrangements are obtained for the $K\sim -\log[s^0(s^1)^3]$ K\"ahler potential is no coincidence. Indeed, we will show that these are the \emph{only} tower/duality arrangements \emph{consistent} with the Integral Scaling Relation \eqref{scalingweight} and the Emergent String Conjecture for this choice of K\"ahler potential. By reversing the logic, from a bottom-up point of view, we can wonder which tower a duality arrangements are consistent with the Integral Scaling Relation. Note that, since this is formulated in terms of the EFT string tension (which in turn only depends on the K\"ahler potential), this approach relies only on 4d $\mathcal{N}=1$ data, and thus is independent of the microscopic interpretation of the towers or other details of the top-down realization. Then, one could wonder whether all tower/duality arrangements consistent with the Integral Scaling Relation are realized in string and M-theory compactifications. As we will see, this seems to be the case.

In Section \ref{sec.reconstructionalg}, we will outline how to approach this question. Given an asymptotic expression for the K\"ahler potential $K\sim -\log P(s)$, we develop an algorithm that, under some assumptions, returns the possible arrangements for tower/duality frames which are consistent with respect to the Integral Scaling Relation and the ESC recursively. Later in Section \ref{ss.recovering}, we show how, for the choices of K\"ahler potentials analyzed in the top-down constructions studied in \cite{Grieco:2025bjy}, the consistent tower/duality arrangements are in one-to-one correspondence with those observed in string and M-theory compactifications.

\subsection{The reconstruction algorithm\label{sec.reconstructionalg}}
In order to ``reconstruct'' the set of possible consistent tower/duality frames along a given 4d $\mathcal{N}=1$ theory from given a K\"ahler potential $K\sim -\log P(s)$, the following working assumptions are required:
\begin{enumerate}
\item \textbf{The Integral Scaling Relation} \cite{Lanza:2021udy,Lanza:2022zyg}
\begin{equation}\label{e.bottom up integer}
	 	\left(\frac{m_\star}{M_{\rm Pl,4}}\right)^2\sim\left(\frac{\mathcal{T}_{e^i}}{M_{\rm Pl,4}^2}\right)^w
	 	\;\Longleftrightarrow\;{\vec{\zeta}_\star\cdot\vec{\zeta}_{\mathcal{T}_{e^i}}}=w|\vec{\zeta}_{\mathcal{T}_{e^i}}|^2\;,\quad\text{with}\quad w\in\mathbb{Z}_{\geq 0}
	 \end{equation}
\textbf{holds for \emph{all} towers $m_\star$ generating the $\zeta$-vector convex hull and elementary EFT strings $\{\mathcal{T}_{e^i}\}_{i=1}^n$} (this is, with charged only along the $s^i$ direction). As explained in \cite{Grieco:2025bjy} and Appendix \ref{app.id}, the $\zeta$-vector of non-elementary strings $\vec{\zeta}_{\mathcal{T}_{\mathbf{2}}}$, charged along several saxionic directions, ${e}^i\neq 0$ if $i\in I$, is located along the simplex 
\begin{equation}
	\Delta_I=\left\{\sum_{i\in I}\alpha_i\vec{\zeta}_{\mathcal{T}_{e^i}}\;:\;\sum_{i\in I}\alpha_i=1\;\text{and}\;\alpha_i\geq 0\right\}
\end{equation}
generated by the elementary EFT string vectors charged under $I$. The actual position of such vector $\vec{\zeta}_{\mathcal{T}_\mathbb{e}}\in \Delta_I$ depends on the actual asymptotic trajectory taken, having $\vec{\zeta}_{\mathcal{T}_\mathbb{e}}\perp \Delta_I$ when moving along the flow \eqref{eq.trajectory} induced by such string. Under such trajectory, the Integral Scaling Relation \eqref{e.bottom up integer} also follows, but at the same rate (and $w$) as the individual elementary EFT strings in $I$. Thus, it is enough to impose \eqref{e.bottom up integer} for elementary EFT strings, for which the Integral Scaling Relation holds along \emph{any} trajectory.
\item \textbf{The Emergent String Conjecture (ESC) \cite{Lee:2019xtm,Lee:2019wij} holds}. This is, along any infinite distance limit the lightest tower must correspond to either \textbf{(i)} Kaluza-Klein tower associated to some growing dimension(s) or \textbf{(ii)} a tower of oscillator modes of a  \emph{unique critical, weakly coupled fundamental string} (not necessarily BPS). This forbids, e.g., excitation modes from branes different than the F1-string, see \cite{Alvarez-Garcia:2021pxo}. As we will see now, this allows us to strongly constrain the scaling rates of the different light towers.
\item \textbf{All possible warping along the decompactifying internal spaces is diluted as the new dimensions grow, in such a way that the resulting theory is a higher-dimensional Minkowski spacetime}.\footnote{In the language of \cite{Raucci:2026fzp}, for a compact space $\dd s_{d+n}^2=G_{MN}\dd x_M\dd x^N=e^{2\rho(y)}\eta_{\mu\nu}\dd x^\mu\dd x^\nu+e^{2\sigma(y)}h_{ij}\dd y^i\dd y^j$, respectively describing the external $d$-dimensional spacetime and the $n$-compact dimension, we ask for $\rho$ and $\sigma$, as well as other possible scalar profiles over the internal dimension, to have asymptotically vanishing gradients with respect to $\vec{y}$ in the decompactification limit.} This allows for a simple universal expression regarding the norm of the $\zeta$-vector associated to the light towers (corresponding to the exponential rate at which they become light when moving in their gradient flow direction), given by \cite{Etheredge:2022opl,Etheredge:2024tok}
\begin{equation}\label{eq.length}
	 	|\vec{\zeta}_{{\rm KK},n}|=\sqrt{\frac{d+n-2}{n(d-2)}}=\sqrt{\frac{n+2}{2n}}\;,
	 \end{equation}
 where $n$ is the number of decompactifying dimensions and we have set the space-time dimension to $d=4$ in the last step. Moreover, $n=\infty$ recovers the expected result for the oscillator modes of an emergent string, given by
\begin{equation}\label{eq.lengthosc}
	 	|\vec{\zeta}_{{\rm osc}}|=\frac{1}{\sqrt{d-2}}=\frac{1}{\sqrt{2}}\;.
	 \end{equation}
  Since at most we expect to have a Lorentz-invariant vacuum in $D=10$ (for string theory) or $D=11$ from M-theory, this sets $n\in\{1,2,\dots,7,\infty\}$ as the possible values for $n$. This assumption is needed in order to avoid warped decompactifications to running solutions as in \cite{Etheredge:2023odp}, for which the exponential rate is not universal and depends on the warping potential \cite{Raucci:2026fzp}, and generically results in smaller $|\vec{\zeta}_{{\rm KK},n}|$ values.\footnote{It would be interesting to study limits of 4d $\mathcal{N}=1$ theories involving warped compactifications. One possibility would be to start with heterotic string theory on $\mathbb{S}^1$ (a theory with 16 supercharges that has a warped decompactification limit \cite{Etheredge:2023odp}), and then further reduce to 4d on some $Y_5$. Since the resulting $X_6=\mathbb{S}^1\times Y_5$ has $b_1(X_6)\neq 0$, the total $X_6$ space cannot have $SU(3)$ holonomy, see e.g., \cite{Candelas:1985en}, and thus at best one could take $SU(2)$ holonomy and get to 4d $\mathcal{N}=2$. In order to break the supersymmetry down to $\mathcal{N}=1$, one could consider that $Y_5$ supports some NSNS $H_3$ flux, see, e.g., \cite{Fernandez:2008pf}. Due to the bottom-up nature of this paper, we will take as an assumption that all limits are unwarped, though it would be interesting to check whether the Integral Scaling Relation holds in these settings.}
\end{enumerate}
Importantly, we are assuming that the ESC holds \emph{recursively}, so that it should also hold in the higher dimensional theories obtained upon decompactification. This way, further infinite distance limits present in the higher-dimensional theory in $d'=4+n_1$ should contain a new tower (KK from decompactifying to $d''=4+n_1+n_2$ dimensions or string oscillator modes) becoming light. For iterated decompactifications $4\to 4+n_1\to\dots\to 4+\sum_i n_i$, this ensures the existence of KK bound states in the lower 4-dimensional theory. In other words, if we move in the 4d theory along the direction in moduli space at which all the individual internal submanifolds grow homogeneously, we should find a KK tower associated to $n_{\star}=\sum_i n_i$ growing directions. It can be computed that the associated $\zeta$-vector is given by \cite{Castellano:2022bvr,Castellano:2023jjt,Etheredge:2024tok}
\begin{equation}
	 	\vec{\zeta}_{\star}=\frac{1}{n_{\star}}\sum_ in_i\vec{\zeta}_{{\rm KK},n_i}\;,
	 \end{equation}
	 where again $n_i=\infty$ for oscillator modes of an emergent fundamental string. The above relation implies that $\vec{\zeta}_{\star}$ lies at the closest point to the origin in the $\{\vec{\zeta}_{{\rm KK},n_i}\}_i$ polytope generated by the convex hull of the leading towers \cite{Etheredge:2024tok}, and thus scales at the same rate as the individual towers. Note that, from the 4d perspective, some of the $\vec{\zeta}_{{\rm KK},n_i}$ towers may point outside the saxionic cone, and thus the asymptotic direction defined by its $\zeta$-vector takes us out of the perturbative description, while the bounded states with other towers still results in $\vec{\zeta}_\star$ being inside the saxionic cone and should be taken into account.
	 
	 Another important consideration is the fact that fundamental strings can be EFT strings (if BPS) and have their oscillator modes as light towers. To see this, take the NSNS sector (metric $g_{\mu\nu}^{\rm (s)}$, 4d dilaton $\phi_4$ and Kalb-Ramond field $B_2$) of the 4d bosonic action in the string frame and go to Einstein frame through a $g^{\rm (s)}_{\mu\nu}=e^{2\phi_{4}}g^{\rm (E)}_{\mu\nu}$ Weyl transformation:
	 \begin{align}
	 S_{\rm 4d}&\supset\frac{1}{2\kappa_4^2}\int \dd^4 x\sqrt{-g^{\rm (s)}}e^{-2\phi_{\rm 4}}\left[\mathcal{R}^{\rm (s)}+4(\partial\phi_{4})^2-\frac{1}{2}|\dd B_2|^2\right]\notag\\
	 &=\frac{1}{2\kappa_4^2}\int \dd^4 x\sqrt{-g^{\rm (E)}}\Big[\mathcal{R}^{\rm (E)}-2(\phi_4)^2-\frac{1}{2}e^{-4\phi_4}(\partial\sigma)^2\Big]\notag\\
	 &=\frac{1}{2\kappa_4^2}\int\left[\mathcal{R}^{\rm (E)}\star 1-2K_{0\bar{0}}\dd t^0\wedge\star\dd \bar{t}^0\right]\,,
	 \end{align}
	 where we have dualized to the NSNS axion (i.e., $\dd \sigma=\star_{\rm 4d} \dd B_2$)  and defined the following universal chiral scalar and K\"ahler potential:
	 \begin{equation}
	 t^0=e^{-2\phi_4}+i\sigma\,,\quad K=-\log(T+\bar{T})=2\phi_4-\log 2\,.
	 \end{equation}
	 If the fundamental string is BPS, and thus charged under the $B_2$ field, we can follow a similar procedure as in Section \ref{sec.reviewEFT strings}, with a profile for $t^0=e^{-2\phi_4}+i\sigma$ given by \eqref{eq.profilesEFT} with charge $\mathbf{e}=(1,0,\dots,0)$. The fundamental string has thus an interpretation as an EFT string in 4d. The EFT computation results in a tension given by
	 \begin{equation}
	 	\mathcal{T}_{\rm F1}=M_{\rm PL,4}^2l_0=\frac{M^2_{\rm Pl,4}}{2}e^{2\phi_4}\,,
	 \end{equation}
	 precisely as expected for a fundamental string in terms of the 4d dilaton. Since the mass of its oscillator modes is given  $m_{\rm osc}\sim \sqrt{\mathcal{T}_{\rm F1}}$, from \eqref{scalingweight} we have $w=1$ for the scaling weight between EFT string and tower. We thus have that any BPS F1-string is an EFT string with $w=1$ with respect to the tower of its oscillator modes. On the other hand, if we have an EFT string limit where the leading tower scales with $w=1$, we expect to have the QG cut-off ``sandwiched'' between the two scales \cite{Martucci:2024trp},
	 \begin{equation}\label{eq. strings non BPS}
	 \sqrt{\mathcal{T}}\gtrsim \Lambda_{\rm QG}\gtrsim m_{\star}, \quad\text{while} \quad \sqrt{\mathcal{T}}\sim m_{\star}\,.
	 \end{equation}
	 Since from the ESC we only expect the QG cut-off to coincide with the leading tower along emergent string limits \cite{Castellano:2022bvr},\footnote{For decompactification limits the higher-dimensional Planck mass is parametrically heavier than the KK scale, with 
	 \begin{equation}
	 	\frac{M_{{\rm Pl},d+n}}{M_{{\rm Pl},d}}\sim\left(\frac{M_{{\rm KK},n}}{M_{{\rm Pl},d}}\right)^{\frac{n}{d+n-2}}
	 \end{equation}
	 along pure decompactification limits \cite{Castellano:2022bvr}. Note again how in the $n\to\infty$ limit, which we can understood as an emergent string limit, we recover $\Lambda_{\rm QG}=\lim_{n\to\infty}M_{{\rm Pl},d+n}=\lim_{n\to\infty}M_{{\rm KK},n}=m_{\rm osc}$.} then $m_\star=m_{\rm osc}$, and thus we would have a fundamental string with $\mathcal{T}_{\rm F1}\sim \mathcal{T}\sim m_{\rm  osc}^2$. Now, since EFT strings are not purely gauge-theoretical objects, but also couple to the gravitational background, see e.g., \cite[Appendix F]{Lanza:2021udy}, on its excitation states we would find a symmetric spin-2 field. If the EFT string becomes light at the same rate as the fundamental string, then the Emergent String Conjecture forces us to identify both, as otherwise we would find two different weakly coupled strings with a graviton in their spectrum.\footnote{Note that it the EFT string is much heavier than the QG cutoff, $\sqrt{\mathcal{T}}\gg\Lambda_{\rm QG}$, then there is no problem with this, since its spectrum will not contribute to that of the gravitational EFT.} As a result our fundamental string must be BPS in this scenario. See \cite{Kaufmann:2024gqo,Hassfeld:2025uoy} for explicit computations of the EFT string central charge and its matching with that of fundamental heterotic or type II strings. The above can be summarized as
	 \begin{equation}
	 \boxed{\text{EFT string with $w=1$}\Longleftrightarrow \text{BPS F1-string}}
\end{equation}	  
	  
	 In other words, the above arguments imply that \textit{any EFT string with $\vec{\zeta}_{\mathcal{T}}=\vec{\zeta}_{\star}$ must be a BPS fundamental string}. Conversely, given any tower of F1-oscillator modes whose $\zeta$-vector does not correspond to that of an EFT string, one can conclude that such fundamental string is \textit{not BPS}. We will take the above as an assumption for our bottom-up analysis.

\vspace{0.25cm}

Given a K\"ahler potential, the above assumptions imply a \emph{finite} set of possible arrangements of the leading UV towers, together with their scaling along the different asymptotic limits. Equivalently, we can determine the type of asymptotic duality frames on our moduli space. Our three assumptions can be rearranged in a \textit{reconstruction algorithm} that provides the step-by step rules to yield the allowed arrangement of light towers in terms of convex hulls. Since the reconstruction algorithm takes a generic shift-symmetric K\"ahler potential as input, its results will provide a purely bottom-up picture which is, in principle, blind to its top-down realization.
\begin{tcolorbox}[enhanced jigsaw,breakable,pad at break*=1mm,colback=cyan!5!white,colframe=cyan!90!black,title={\centering\large{\textbf{Reconstruction algorithm}}}]
\vspace{0.01\linewidth}
Take a 4d $\mathcal{N}=1$ supergravity theory, with an asymptotically shift symmetric K\"ahler potential with expression  $K\sim -\log P(s)$:
\begin{enumerate}
\item[\textbf{a.}] Split the asymptotic regions of the saxionic cone $\Delta$, parameterized by $n$ saxions, into different \emph{growth sectors} \cite{Grimm:2018cpv,Corvilain:2018lgw} $\mathcal{R}_{i_1\dots i_n}$:
\begin{equation}\label{eq.growth sector}
	\Delta\supseteq\bigcup_{(i_1,\dots,i_n)}\mathcal{R}_{i_1\dots i_n}=\bigcup_{(i_1,\dots,i_n)}\left\{s^{i_1}\gtrsim s^{i_2}\gtrsim\dots \gtrsim s^{i_n}\gg 1\right\}\,,
\end{equation}
under which we can approximate our homogeneous, degree-$p$ polynomial by its leading monomial, $P(s)\sim \prod_{i=1}^n(s^i)^{p_i}$, with $\sum_{i=1}^n p_i=p$. For our purposes of computing asymptotic lengths and angles, this leading term suffices.

	\item[\textbf{b.}] Compute the zeta-vectors $\{\vec{\zeta}_{\mathcal{T}_{e^i}}\}_i$ associated to the mass scale of elementary EFT strings. As shown in Appendix \ref{app.id}, these are all located along the same hyperplane.
	\item[\textbf{c.}] Define the lattice\footnote{Being totally rigorous, $\Lambda_K$ is not a lattice, since $\mathbb{Z}_{\geq 0}$ is not a ring, but for the sake of simplicity we will call it so.}  generated by linear combinations of $\{\vec{\zeta}_{\mathcal{T}_{e^i}}\}_i$ with non-negative integer coefficients,
	\begin{align}\label{eq.lambda K}
		\Lambda_K&=\left\{\vec{x}\cdot\vec{\zeta}_{\mathcal{T}_{e^i}}=w|\vec{\zeta}_{\mathcal{T}_{e^i}}|^2\,,\;\text{with }\,w\in\mathbb{Z}_{\geq 0}\;\forall i=1,\dots,n\right\}\notag\\
		&=\left\{\sum_{i=1}^nw_i\vec{\zeta}_{\mathcal{T}_{e^i}}\,|\,w_i\in\mathbb{Z}_{\geq 0}\;\forall i=1,\dots,n\right\}\,,
	\end{align}
	see Fact \ref{fct.eq} from Appendix \ref{app.id}. We can restrict to  $0\leq w_i\leq\lfloor\sqrt{3p_i}\rfloor$, since in the growth sector $\mathcal{R}_{i_1\dots i_n}$ elementary EFT strings have $|\vec{\zeta}_{\mathcal{T}_{e^i}}|=\frac{1}{\sqrt{2p_i}}$, while $|\vec{\zeta}_\star|\leq \sqrt{\frac{3}{2}}$.
	\item[\textbf{d.}] Out of the points in $\Lambda_K$, take those whose distance from the origin is compatible with the expectation on exponential rates provided by the Emergent String Conjecture and unwarped decompactifications, \eqref{eq.length} and \eqref{eq.lengthosc}, this is
	\begin{equation}\label{eq.lattice}
	\hat{\Lambda}_K=\left\{\vec{x}\in\Lambda_K\,|\, |\vec x|=\sqrt{\frac{2+n}{2n}}\;\text{for}\;n=1,\,2\,\dots\,7\;\text{or}\:\infty\right\}\,.
	\end{equation}
	
	\item[\textbf{e.}]  Consider the possible tower polytopes, convex after adding the origin $\vec 0$, resulting from the union of different simplices generated by elements $\hat{\Lambda}_K$ in such a way that for each $k$-simplex (with $k=1,\dots,\,n-1$) the distance to the origin $\vec 0$ is $\sqrt{\frac{2+n_\star}{2n_\star}}$ with  $n_\star=1,\,\dots,\, 7$ or $\infty$. When such distance is $\frac{1}{\sqrt{2}}$  an oscillator tower is located at the pericenter (the closest point of the $k$-simplex to the origin). Otherwise, for finite $n_\star$ this corresponds to bounded states of KK modes. If the pericenter is in the interior of the saxionic cone $\Delta$, then such configuration is only consistent if $n_\star=\sum_{i=1}^kn_i$, with $\{\vec{\zeta}_{{\rm KK},n_i}\}_{i=1}^k$ the vertices spanning the $k$-simplex.
	 
\end{enumerate}
The resulting possible solutions will be the consistent tower polytopes. Note that all vertices with $|\vec{\zeta}|=\frac{1}{\sqrt{2}}$ that are do not coincide with an EFT string will correspond to a non-BPS fundamental string.
\end{tcolorbox}

Point \textbf{e.} is a technicality that must be imposed for consistency of the global fitting of towers, following the discussion after the assumptions in Section \ref{sec.reconstructionalg}.  Given a set of $k$ different $\zeta$-vectors, when moving perpendicular to the $k$-simplex they span, all the associated towers become light at the same rate, what in \cite{Etheredge:2024tok} was known as ``degenerate'' limits. Two things can happen then. On one hand, if together with the said towers we have a fundamental string becoming light, then the rest of towers are simply KK modes can be seen as excitation modes of the string, since these include the higher-dimensional graviton and other fields. By the universal quantization of the critical string in terms of the dilaton, the decay rate must be $\frac{1}{\sqrt{d-2}}=\frac{1}{\sqrt{2}}$. On the other hand, if all involved towers are KK modes, then when moving perpendicular to the simplex they span, we are decompactifying a set of $n_i$-cycles (with $i=1,\dots,7$) at the same rate. Since the $\{\vec{\zeta}_{{\rm KK},n_i}\}_{i=1}^k$ are linearly independent (as they generate a $k$-simplex), then the associated volumes are too. We thus should have the KK modes associated to the decompactification of the $(n_\star=\sum_{i=1}^kn_i)$-dimensional manifold, and the $k$-simplex must be located at a distance $\sqrt{\frac{2+n_\star}{2n_\star}}$.

One possible caveat occurs when, for a given simplex of KK $\zeta$-vectors, the pericenter to the origin is precisely one of the simplex generators, and it is located on the boundary of the saxionic cone $\Delta$. While directions outside the cone are not accessible within the perturbative regime, there still can be the case that there are towers of states whose $\zeta$-vector points in that direction. While they will never become the leading tower for our trajectories of interest, they still can form bound states with towers within the cone. This way, we can obtain simplices where the pericenter is not associated with a tower given by the bounded states of its generators (which include such pericenter), while still being consistent, as actually the simplex we observe inside the saxionic cone $\Delta$ is actually part of a bigger, consistent simplex partially laying outside the cone. This is e.g. the case for bound states of M2-branes wrapped on a curve and  KK modes of the M-theory interval, as described in \cite{Grieco:2025bjy} for $E_8\times E_8$ Heterotic CY 3-fold/type IIA orientifold compactifications.

 \begin{figure}[h] 	
 	\centering
			\begin{subfigure}[b]{0.41\textwidth}
			\captionsetup{width=.95\linewidth}
\centering
    \resizebox{0.8\textwidth}{!}{%
\begingroup%
  \makeatletter%
  \providecommand\color[2][]{%
    \errmessage{(Inkscape) Color is used for the text in Inkscape, but the package 'color.sty' is not loaded}%
    \renewcommand\color[2][]{}%
  }%
  \providecommand\transparent[1]{%
    \errmessage{(Inkscape) Transparency is used (non-zero) for the text in Inkscape, but the package 'transparent.sty' is not loaded}%
    \renewcommand\transparent[1]{}%
  }%
  \providecommand\rotatebox[2]{#2}%
  \newcommand*\fsize{\dimexpr\f@size pt\relax}%
  \newcommand*\lineheight[1]{\fontsize{\fsize}{#1\fsize}\selectfont}%
  \ifx\svgwidth\undefined%
    \setlength{\unitlength}{163.49774578bp}%
    \ifx\svgscale\undefined%
      \relax%
    \else%
      \setlength{\unitlength}{\unitlength * \real{\svgscale}}%
    \fi%
  \else%
    \setlength{\unitlength}{\svgwidth}%
  \fi%
  \global\let\svgwidth\undefined%
  \global\let\svgscale\undefined%
  \makeatother%
  \begin{picture}(1,1.05407393)%
    \lineheight{1}%
    \setlength\tabcolsep{0pt}%
    \put(0,0){\includegraphics[width=\unitlength,page=1]{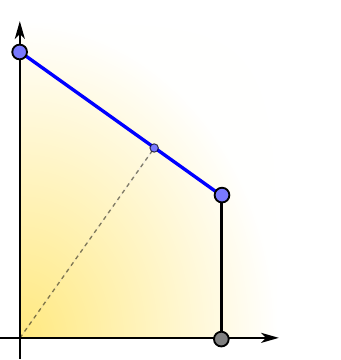}}%
    \put(0.68149575,0.08124525){\color[rgb]{0,0,0}\makebox(0,0)[lt]{\lineheight{1.25}\smash{\begin{tabular}[t]{l}osc\end{tabular}}}}%
    \put(0.09279469,0.90012092){\color[rgb]{0,0,0}\makebox(0,0)[lt]{\lineheight{1.25}\smash{\begin{tabular}[t]{l}$n=2$\end{tabular}}}}%
    \put(0.65234173,0.51782164){\color[rgb]{0,0,0}\makebox(0,0)[lt]{\lineheight{1.25}\smash{\begin{tabular}[t]{l}$n=4$\end{tabular}}}}%
    \put(0.46415379,0.65443306){\color[rgb]{0,0,0}\makebox(0,0)[lt]{\lineheight{1.25}\smash{\begin{tabular}[t]{l}$n_\star=6$\end{tabular}}}}%
    \put(0.81715228,0.04820526){\color[rgb]{0,0,0}\makebox(0,0)[lt]{\lineheight{1.25}\smash{\begin{tabular}[t]{l}$s^1$\end{tabular}}}}%
    \put(0.03465093,0.99657172){\color[rgb]{0,0,0}\makebox(0,0)[lt]{\lineheight{1.25}\smash{\begin{tabular}[t]{l}$s^2$\end{tabular}}}}%
  \end{picture}%
\endgroup%

    }
\caption{1-simplex (in blue) generated by KK modes associated to 2 and 4 compact dimensions. } \label{f.bound1}
\end{subfigure}
\begin{subfigure}[b]{0.57\textwidth}
\captionsetup{width=.95\linewidth}
\centering
    \resizebox{0.8\textwidth}{!}{%
\begingroup%
  \makeatletter%
  \providecommand\color[2][]{%
    \errmessage{(Inkscape) Color is used for the text in Inkscape, but the package 'color.sty' is not loaded}%
    \renewcommand\color[2][]{}%
  }%
  \providecommand\transparent[1]{%
    \errmessage{(Inkscape) Transparency is used (non-zero) for the text in Inkscape, but the package 'transparent.sty' is not loaded}%
    \renewcommand\transparent[1]{}%
  }%
  \providecommand\rotatebox[2]{#2}%
  \newcommand*\fsize{\dimexpr\f@size pt\relax}%
  \newcommand*\lineheight[1]{\fontsize{\fsize}{#1\fsize}\selectfont}%
  \ifx\svgwidth\undefined%
    \setlength{\unitlength}{234.68053382bp}%
    \ifx\svgscale\undefined%
      \relax%
    \else%
      \setlength{\unitlength}{\unitlength * \real{\svgscale}}%
    \fi%
  \else%
    \setlength{\unitlength}{\svgwidth}%
  \fi%
  \global\let\svgwidth\undefined%
  \global\let\svgscale\undefined%
  \makeatother%
  \begin{picture}(1,0.75336234)%
    \lineheight{1}%
    \setlength\tabcolsep{0pt}%
    \put(0,0){\includegraphics[width=\unitlength,page=1]{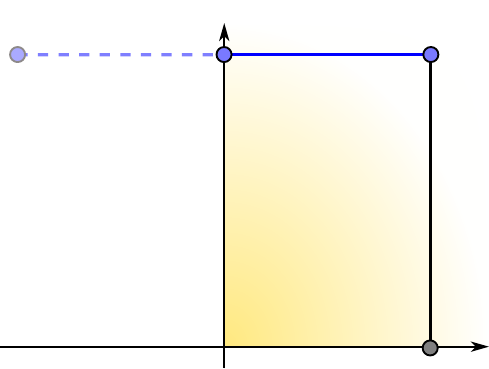}}%
    \put(0.91080435,0.06947193){\color[rgb]{0,0,0}\makebox(0,0)[lt]{\lineheight{1.25}\smash{\begin{tabular}[t]{l}osc\end{tabular}}}}%
    \put(0.90479244,0.66246955){\color[rgb]{0,0,0}\makebox(0,0)[lt]{\lineheight{1.25}\smash{\begin{tabular}[t]{l}$n=1$\end{tabular}}}}%
    \put(1.00039099,0.0313308){\color[rgb]{0,0,0}\makebox(0,0)[lt]{\lineheight{1.25}\smash{\begin{tabular}[t]{l}$s^1$\end{tabular}}}}%
    \put(0.42565557,0.71242167){\color[rgb]{0,0,0}\makebox(0,0)[lt]{\lineheight{1.25}\smash{\begin{tabular}[t]{l}$s^2$\end{tabular}}}}%
    \put(0.02118836,0.66246955){\color[rgb]{0,0,0}\makebox(0,0)[lt]{\lineheight{1.25}\smash{\begin{tabular}[t]{l}$\color{gray}{n=1}$\end{tabular}}}}%
    \put(0.47261381,0.66246955){\color[rgb]{0,0,0}\makebox(0,0)[lt]{\lineheight{1.25}\smash{\begin{tabular}[t]{l}$n_\star=2$\end{tabular}}}}%
  \end{picture}%
\endgroup%

    }
\caption{1-simplex (in blue) where the pericenter, with $n_\star=2$, corresponds to bounded states of two KK-1 towers, one of which has a $\zeta$-vector pointing outside $\Delta$.} \label{f.bound2}
\end{subfigure}
		\caption{Illustration of the bounded KK states that are located in the middle of $k$-simplices, as well as the nuances concerning point \textbf{e.} in the reconstruction algorithm. In both cases, for the black 1-simplex the pericenter corresponds to a fundamental string. In yellow we depict the perturbative regions of the saxionic cone $\Delta$.}\label{fig.bounded}
\end{figure}

The above discussion is illustrated in Figure \ref{fig.bounded}, and it will play an important role in the interplay between the Integral Scaling Relation and the taxonomy rules \eqref{eq. taxonomy} in Section \ref{ss.taxonomy}.

\subsection{Recovering string/M-theory examples\label{ss.recovering}}

Let us illustrate this algorithm and discuss its results for the set of examples analyzed in \cite{Grieco:2025bjy}. The K\"ahler potentials considered in the following should always be understood as the leading behavior of $K$ in a given asymptotic regime, with non-perturbative contributions being neglected. 

\subsubsection*{Reconstruction for $K\sim -\log\left[s^0(s^1)^3\right]$}
We first consider the simple $K\sim \log\left[s^0(s^1)^3\right]$ K\"ahler potential. As discussed in \cite{Lanza:2021udy,Grieco:2025bjy}, this can be obtained from heterotic or type II compactifications on CY 3-folds/orientifolds with homogeneous volume controlled by $(s^1)^3$, or equivalent a slice of M-theory on a Joyce manifold, see Appendix \ref{app.topdown}. We follow the reconstruction algorithm step by step:\\

\noindent \begin{minipage}[b]{0.55\textwidth}
$\bullet$ Given the K\"ahler potential $K\sim \log\left[s^0(s^1)^3\right]$, we compute the $\zeta$-vectors of the elementary EFT strings:
 \begin{equation*}
 \vec{\zeta}_{\mathcal{T}_{e^0}}=\left(\frac{1}{\sqrt{2}},0\right)\,,\quad  \vec{\zeta}_{\mathcal{T}_{e^1}}=\left(0,\frac{1}{\sqrt{2}}\right)\,,
 \end{equation*}
 written in the flat coordinates $\hat{s}^{0}=\frac{1}{\sqrt{2}}\log s^0$ and $\hat{s}^{1}=\sqrt{\frac{3}{2}}\log s^1$.
\end{minipage}
\hfill
\raisebox{-1\baselineskip}{\includegraphics[width=0.44\textwidth]{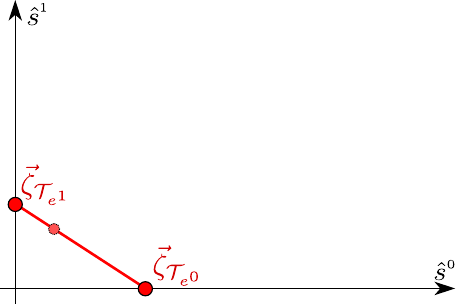}}
\begin{minipage}[b]{0.55\textwidth}
 $\bullet$ We evaluate the lattice $\Lambda_K$ \eqref{eq.lambda K}. Note that we could have restricted to the following weights:
 \begin{equation*}
 w_0\in\{0,1\}\,\quad w_1\in\{0,1,2,3\}\,,
\end{equation*}  
as otherwise the points in $\Lambda_K$ will have too large norms.
\end{minipage}
\hfill
\raisebox{-1\baselineskip}{\includegraphics[width=0.44\textwidth]{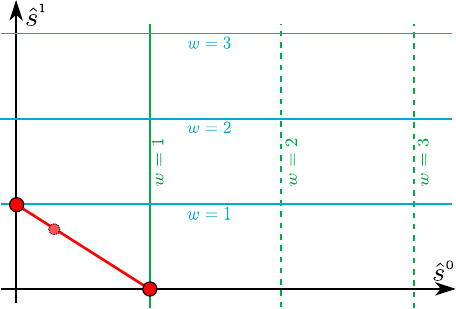}}
\begin{minipage}[b]{0.55\textwidth}
$\bullet$   We compute the refined $\hat{\Lambda}_K$ lattice \eqref{eq.lattice}, by taking only those points with appropriate norms. These are
  \begin{equation*}
  \hat{\Lambda}_K=\left\{\left(\sqrt{\tfrac{3}{2}},0\right),\left(\sqrt{\tfrac{2}{3}},0\right),\left(\sqrt{\tfrac{2}{3}},\tfrac{1}{\sqrt{2}}\right),\left(0,\tfrac{1}{\sqrt{2}}\right)\right\}.
  \end{equation*}
  \vspace{0.2cm}
\end{minipage}
\hfill
\raisebox{-1\baselineskip}{\includegraphics[width=0.44\textwidth]{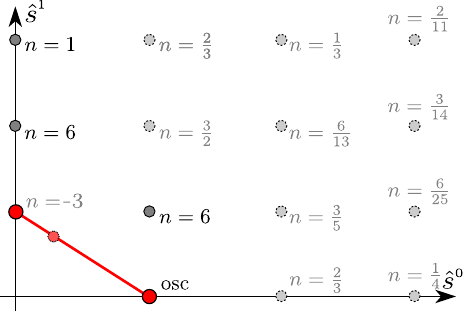}}
\begin{minipage}[b]{0.55\textwidth}
$\bullet$  We compute the admissible simplices by requiring that their pericenter to the origin has the norm of a string oscillator tower (signaling a non-BPS F1-string if such point is not an EFT string $\zeta$-vector) or a KK tower associated to the decompactification of the sum of dimensions given by the generators of the simplex. We obtain two possible arrangements, here depicted in {\color{newBLUE}{\textbf{blue}}} and {\color{newGREEN}{\textbf{green}}}.
\end{minipage}
\hfill
\raisebox{-1\baselineskip}{\includegraphics[width=0.44\textwidth]{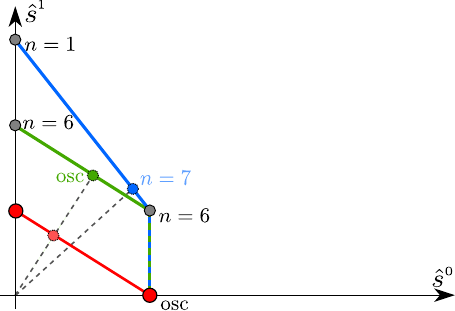}}
\vspace{0.25cm}

We have thus obtained that only two tower polytopes are consistent with our assumptions from Section \ref{sec.reconstructionalg}. This precisely matches the arrangements obtained for top-down 4d $\mathcal{N}=1$ compactifications respectively coming from {\color{newBLUE}heterotic $E_8\times E_8$ and type IIA} string theories, see Figure \ref{fig.bu-HE-1}, or {\color{newGREEN}heterotic $SO(32)$/type I string theory or F-theory}, see Figures \ref{fig.bu-HO} and \ref{fig.bu-IIB} in Appendix \ref{app.topdown}, in all cases with the volume of the compact manifold growing homogeneously. See \cite[Sections 3 and 4]{Grieco:2025bjy} for the details regarding the microscopic identification of the towers and the saxion dependence of their masses.

\subsubsection*{Reconstruction for $K\sim -\log\left[s^0(s^1)^2s^2\right]$}
 For $K\sim \log\left[s^0(s^1)^2s^2\right]$ (the unique quartic K\"ahler potential with three saxions and $P(s)$ monomial), an analogous and slightly more involved procedure can be carried out, as depicted in Figure \ref{f.bu2}. Under our assumptions, only two unique tower arrangements (see Figure \ref{f.bu2-5}) are consistent, precisely corresponding with the HE/type IIA (Figure \ref{fig.bu-HE-2}) and HO/type I/type IIB/F-theory tower polytopes (Figures \ref{fig.bu-HO-2} and \ref{fig.bu-IIB}) found in \cite{Grieco:2025bjy} for this K\"ahler potential, as explained in Appendix \ref{app.topdown}.
 
Note that again, as in the $K\sim -\log[s^0(s^1)^3]$, we obtain two possible consistent arrangements. As a matter of fact, taking the slice for which $s^1\sim s^2$, depicted in dashed lines in Figure \ref{f.bu2-5}, where $(s^1)^2s^2\sim(s^1)^3$, one recovers the two arrangements studied for $K\sim-\log\big(s^0(s^1)^3\big)$.

As a final comment, notice that, even if $K\sim \log\left[s^0(s^1)^2s^2\right]$ is invariant under $s^0\leftrightarrow s^2$, only arrangement \textbf{(A)} in Figure \ref{f.bu2-5} keeps this symmetry. This is an illustration of the fact that possible symmetries in the K\"ahler potential/EFT string tensions need not translate to the arrangement of light towers/duality frames.

    \begin{figure}
\begin{center}
\begin{subfigure}[b]{0.33\textwidth}
\captionsetup{width=.95\linewidth}
\center
\includegraphics[width=\textwidth]{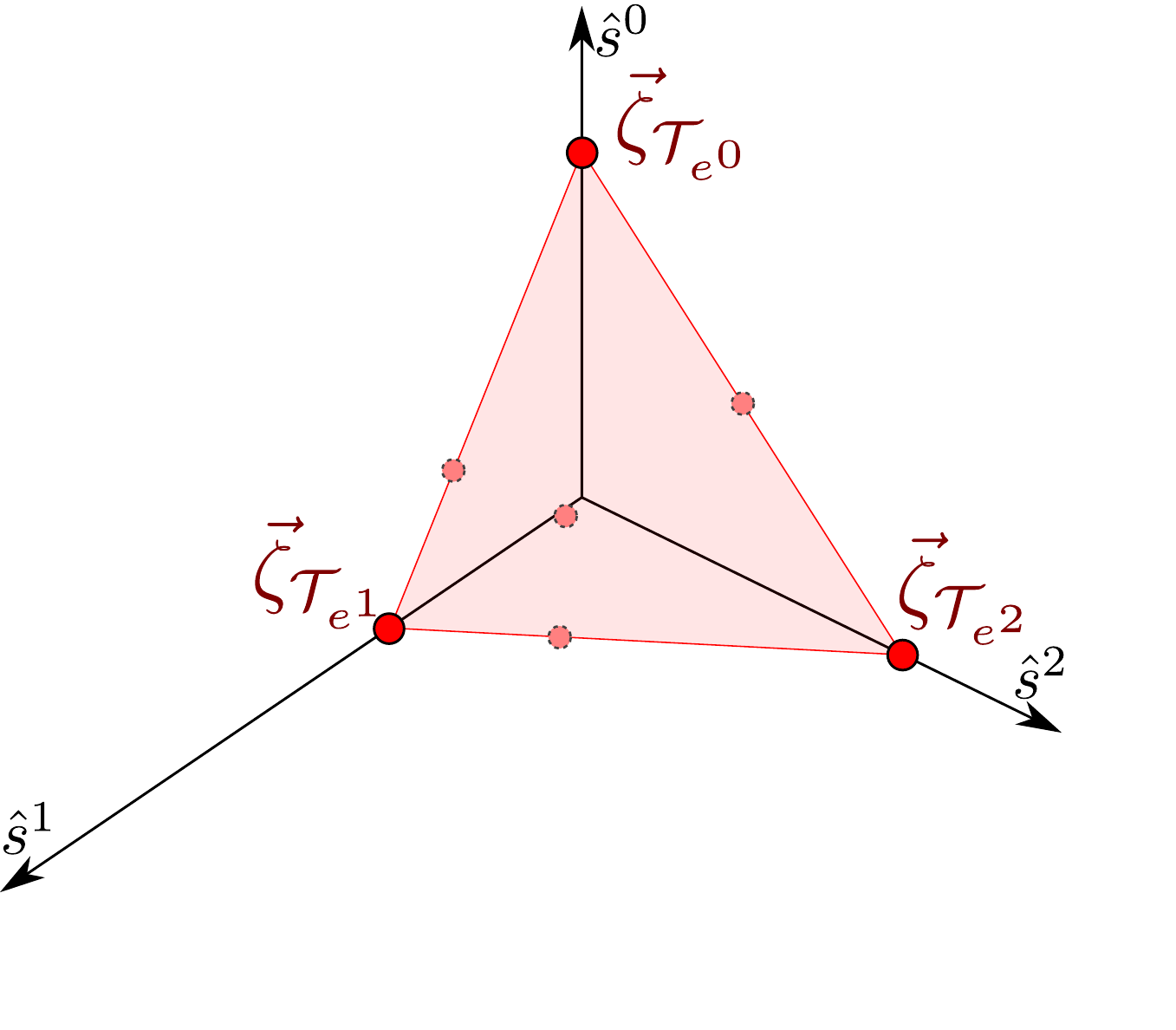}
\caption{\hspace{-0.3em} Plane generated by  $\zeta_{\mathcal{T}}$-vectors of elementary EFT strings.} \label{f.bu2-1}
\end{subfigure}\begin{subfigure}[b]{0.3\textwidth}
\center
\includegraphics[width=\textwidth]{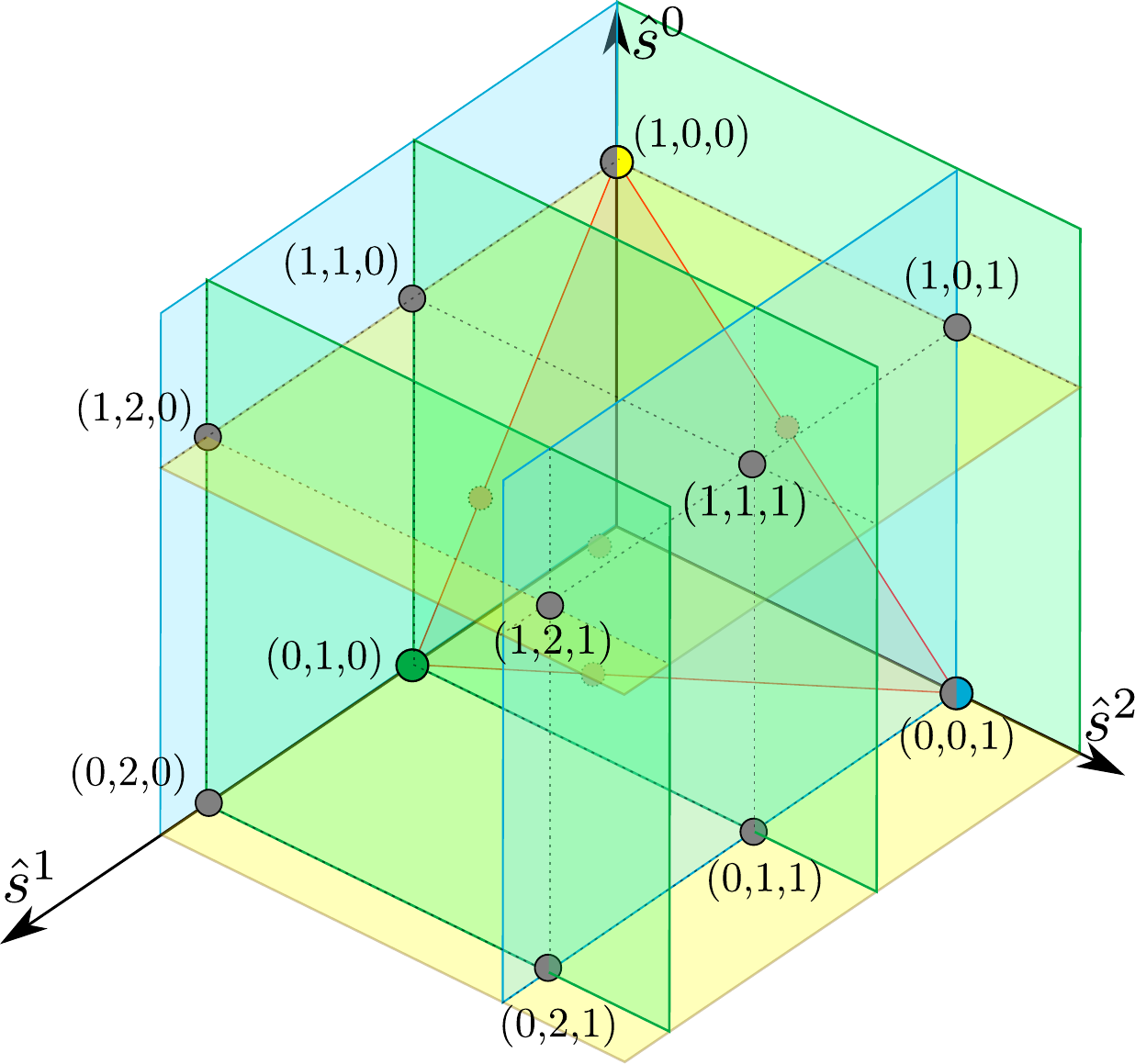}
\caption{\hspace{-0.33em}  Lattice $\Lambda_K$ \eqref{eq.lambda K}, labeled by $\vec{w}=(w_0,w_1,w_2)$, with $0\leq w_0,w_2\leq 1$ and $0\leq w_1\leq 2$.} \label{f.bu2-2}
\end{subfigure}
\begin{subfigure}[b]{0.33\textwidth}
\captionsetup{width=.95\linewidth}
\center
\includegraphics[width=\textwidth]{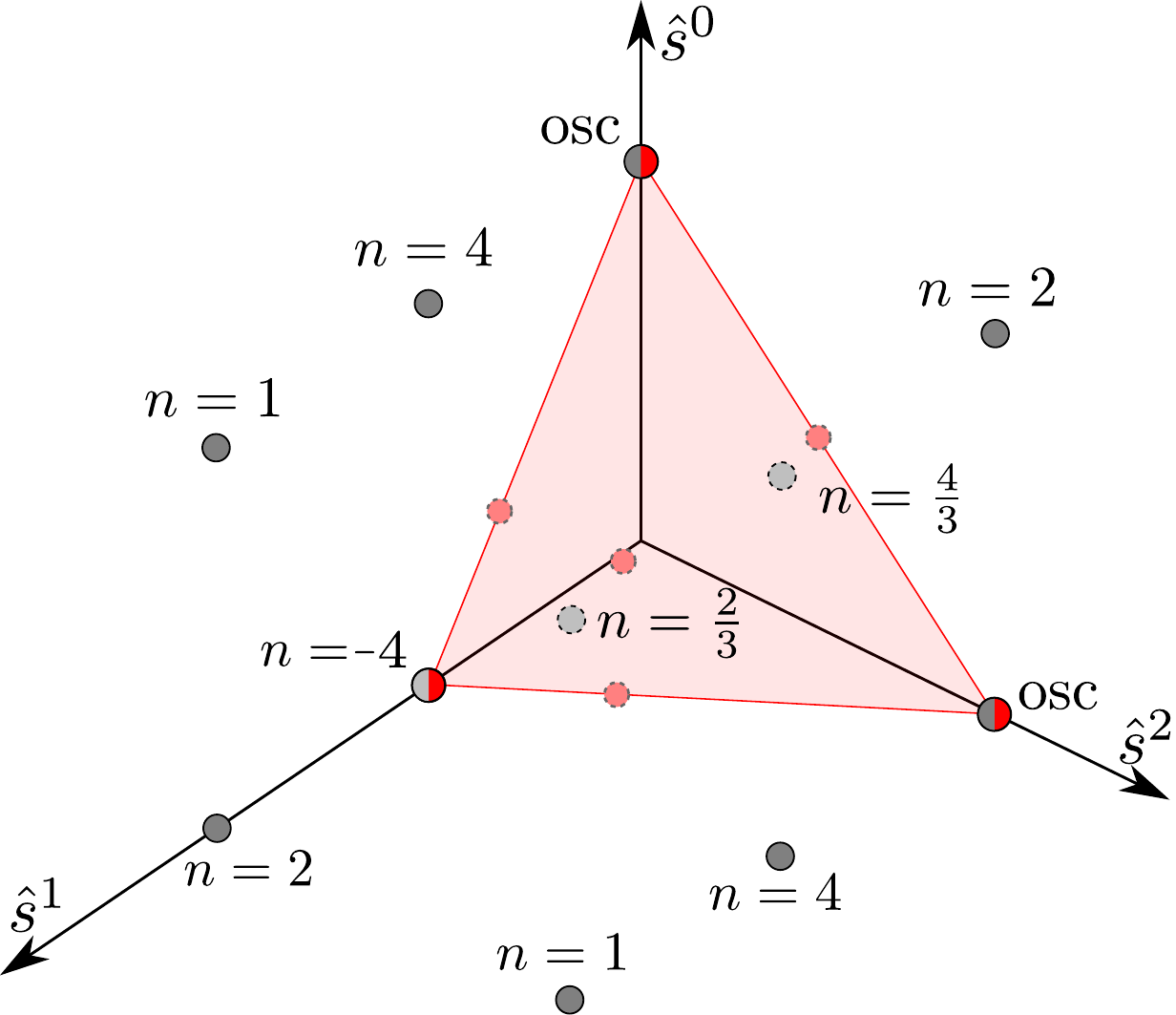}
\caption{\hspace{-0.3em} Refined $\hat{\Lambda}_K$ \eqref{eq.lattice}, labeled by number of decompactified dimensions.} \label{f.bu2-3}
\end{subfigure}
\begin{subfigure}[b]{0.33\textwidth}
\center
\includegraphics[width=\textwidth]{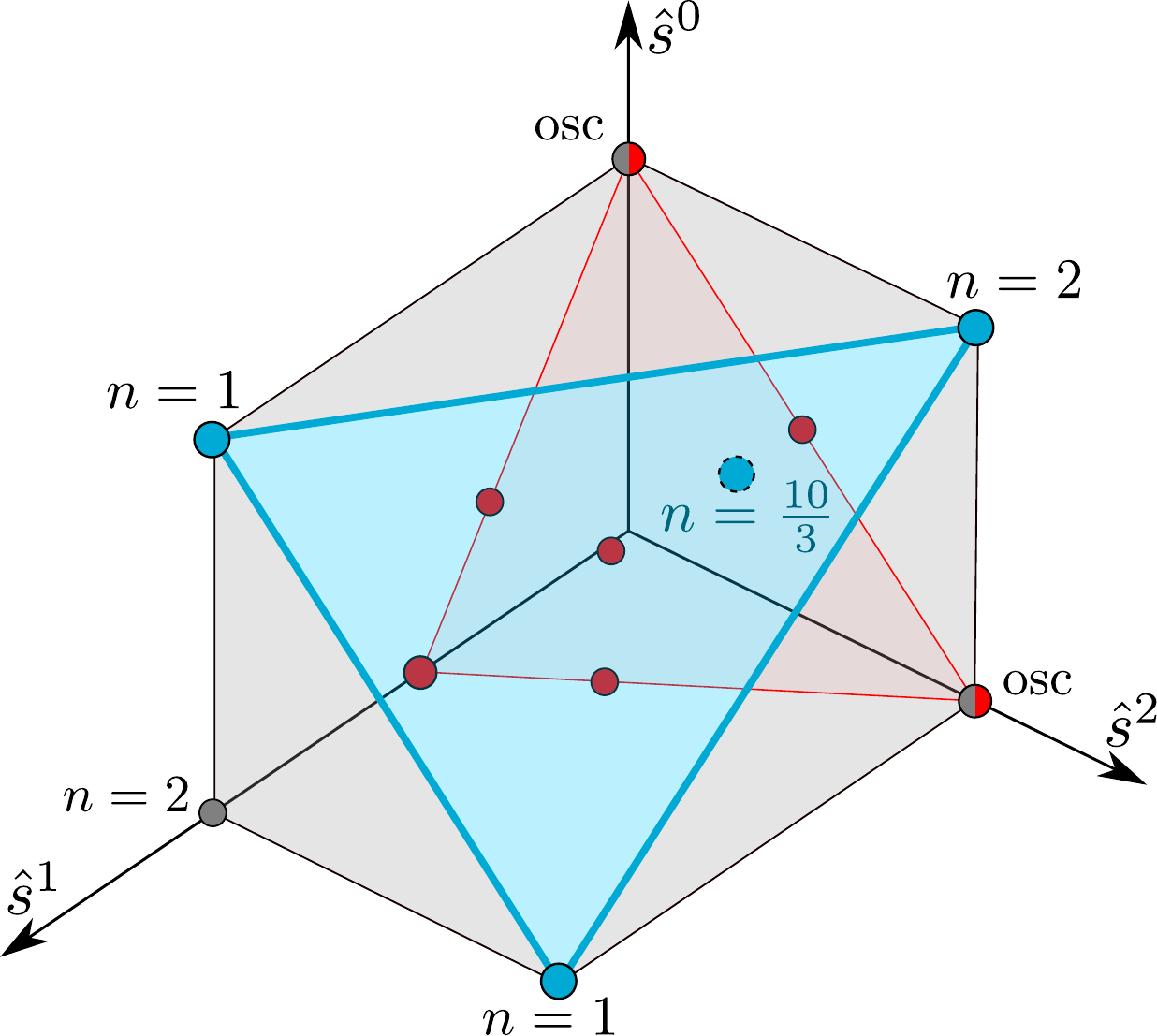}
\caption{\hspace{-0.3em} Inconsistent tower arrangement. The blue simplex is located at distance  corresponding to $n=\frac{10}{3}\neq 1+1+2$.} \label{f.bu2-4}
\end{subfigure}
\begin{subfigure}[b]{0.66\textwidth}
\center
\includegraphics[width=\textwidth]{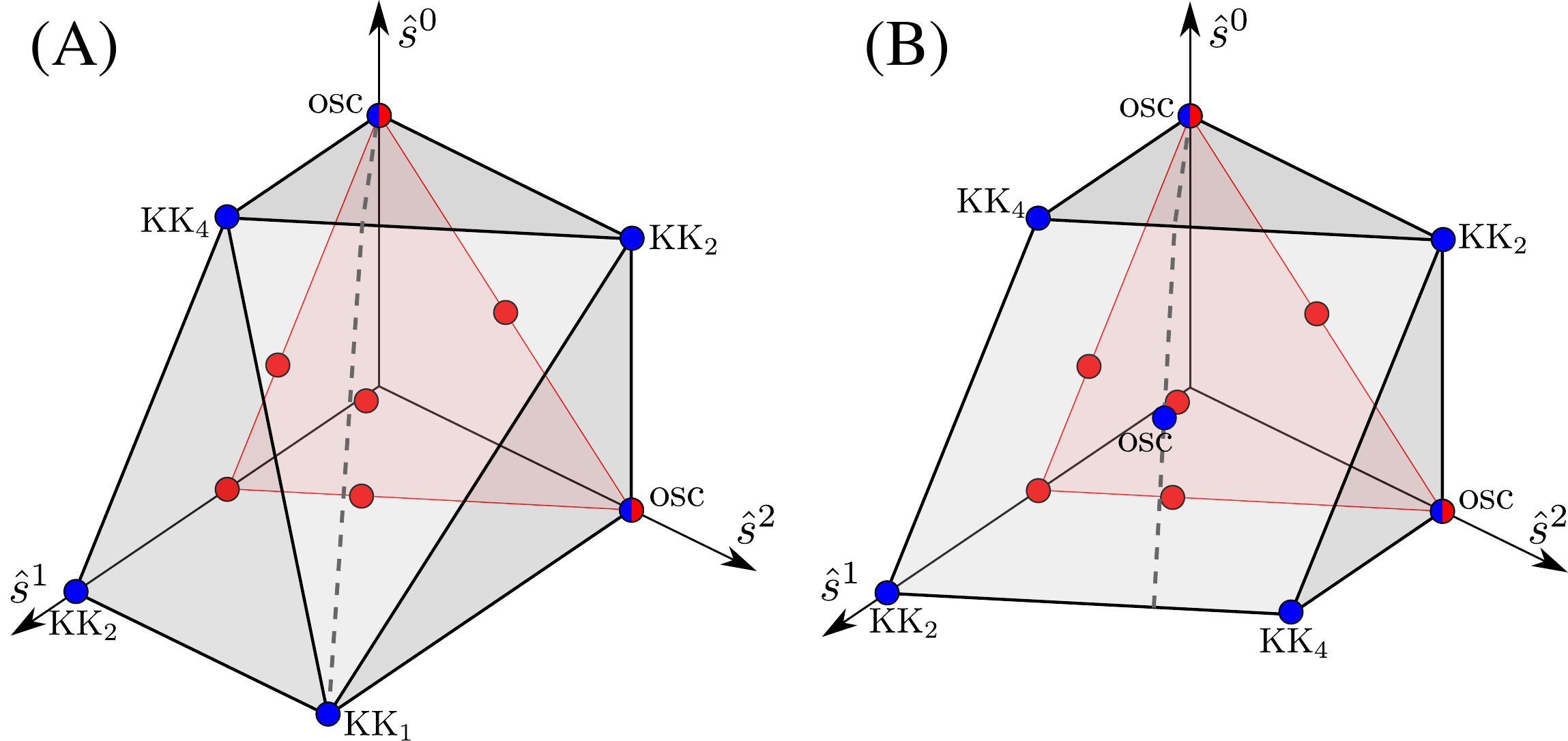}
\caption{\hspace{-0.3em} Unique tower arrangements consistent with our assumptions. Polytope \textbf{(A)} is found in HE, IIA and M-theory compactifications, while \textbf{(B)} appears in HO/type I and F-theory compactifications. In dashed lines we depict the slice $s^1\sim s^2\to\infty$.} \label{f.bu2-5}
\end{subfigure}
\caption{
Reconstruction of the light towers and duality frame arrangement for $K\sim-\log[s^0 (s^1)^3s^2]$ K\"ahler potential, resulting in a unique solutions depicted in Figure \ref{f.bu2-5}. In Figure \ref{f.bu2-4} we depict an inconsistent arrangement.
\label{f.bu2}}
\end{center}
\end{figure}

\subsubsection*{Reconstruction for \texorpdfstring{$K\sim \log\left(s^0s^1 s^2 s^3\right)$}{K~log(s0s1 s2 s3)}}
Take now $K\sim \log\left(s^0s^1 s^2 s^3\right)]$, the only single-term quartic K\"ahler potential with four saxions. Note that we have a $S_4$ symmetry interchanging the different saxions. Similar to the previous example, this does not necessarily translate to the tower and duality frame arrangement. Being in a 4-dimensional moduli space slice, we will not be able to pictorially represent the different steps of the reconstruction, but we can nonetheless apply the reconstruction algorithm as in the previous examples:
\begin{itemize}
	\item We first compute the $\zeta$-vectors for the elementary EFT strings, given by
	\begin{equation}
	\vec{\zeta}_{\mathcal{T}_{e^i}}=\Big(\underline{\tfrac{1}{\sqrt{2}},0,0,0}\Big)\,,\quad i=0,\,1,\,2,\,3\,,
	\end{equation}
where the bar represents all the permutations of the underlined elements. Unsurprisingly, the $\vec{\zeta}_{\mathcal{T}_{e^i}}$ vectors keep the $S_4$ symmetries of the saxions in the K\"ahler potential.
	\item We build the lattice $\Lambda_K$ \eqref{eq.lambda K}, with $w_i\in\{0,1\}$, and its refinement $\hat\Lambda_K$ \eqref{eq.lattice}, whose 14 points correspond to the possible $\zeta$-vectors of the light towers, as follows:
\begin{equation}
\begin{tabular}{|c||c|c|c|}\hline
	$\vec{\zeta}_\star$&$\Big(\underline{\frac{1}{\sqrt{2}},0,0,0}\Big)$&$\Big(\underline{\frac{1}{\sqrt{2}},\frac{1}{\sqrt{2}},0,0}\Big)$&$\Big(\underline{\frac{1}{\sqrt{2}},\frac{1}{\sqrt{2}},\frac{1}{\sqrt{2}},0}\Big)$\\\hline
	$n$& osc &2 &1 \\\hline
	\end{tabular}
\end{equation}	
	
	\item We now consider the possible consistent simplices. Out of the four $\vec{\zeta}_{\rm KK,1}$ possible towers, at most one is present, as if more than one appears, the direction in which all the $\vec{\zeta}_{\rm KK,1}$ towers become light at the same rate results in a non-integer number $n=\frac{4}{3}$ of dimensions decompactifying. We can then consider the two following cases, each associated with a single consistent arrangements:
	\begin{itemize}
		\item We first assume there are no $\vec{\zeta}_{\rm KK,1}$ towers, so that the tower polytope is generated by $\vec{\zeta}_{\rm osc}$ (corresponding to elementary EFT string $\vec{\zeta}_\mathcal{T}$ vectors) and $\vec{\zeta}_{\rm KK,2}$ vectors. Now, we can consider pairs of KK-2 towers becoming light at the same time, of the type $\left\{\left(\frac{1}{\sqrt{2}},\frac{1}{\sqrt{2}},0,0\right),\left(0,0,\frac{1}{\sqrt{2}},\frac{1}{\sqrt{2}}\right)\right\}$ (i.e., with dependence on totally different moduli) or $\left\{\left(\frac{1}{\sqrt{2}},\frac{1}{\sqrt{2}},0,0\right),\left(\frac{1}{\sqrt{2}},0,\frac{1}{\sqrt{2}},0\right)\right\}$. The former, $\vec{\zeta}_{\rm osc}=\Big(\frac{1}{2\sqrt{2}},\frac{1}{2\sqrt{2}},\frac{1}{2\sqrt{2}},\frac{1}{2\sqrt{2}}\Big)$ results a non-BPS F1-string (different from any of the EFT strings), while the later corresponds to a KK-4 tower. Indeed one can check that in order to respect the $\lambda_\star\geq \frac{1}{\sqrt{d-2}}=\frac{1}{\sqrt{2}}$ bound \cite{Etheredge:2022opl} on the lightest towers for any asymptotic limit in the saxionic directions, all of the six $\vec{\zeta}_{\rm KK,2}$ and four $\vec{\zeta}_{\rm osc}$ are required. One thus encounters a tower arrangement with 4 EFT string limits, and a single non-SUSY string one (invariant under $S_4$ and its subgroups, understood as T-dualities), and decompactification limits of 2, 4 and 6 dimensions (these last two types of towers are bounded states of $\vec{\zeta}_{\rm KK,2}$ modes, and thus are not generators of the convex hull):
\begin{equation}\label{eq.ARRR1}
\boxed{\Big(\underline{\frac{1}{\sqrt{2}},0,0,0}\Big)\,,\quad \Big(\underline{\frac{1}{\sqrt{2}},\frac{1}{\sqrt{2}},0,0}\Big)\,,\quad {\color{blue}\Big(\frac{1}{2\sqrt{2}},\frac{1}{2\sqrt{2}},\frac{1}{2\sqrt{2}},\frac{1}{2\sqrt{2}}\Big)}}\,,
\end{equation}		

		respectively corresponding to BPS emergent strings, KK-2 towers and finally the non-BPS F1-string (in {\color{blue}blue}). Note that for this arrangement the $S_4$ symmetry of the K\"ahler potential is respected. This is precisely the type of arrangements in HO/type I/IIB/F-theory compactifications from \cite[Sections 4.2 and 4.3]{Grieco:2025bjy}, provided enough T-dualities can be performed to complete the saxionic cone, which usually is not the case unless we have SUSY enhancement.
		 \item We now add a single KK-1 tower in our configuration, say $\vec{\zeta}_{\rm KK,1}=\left(0,\frac{1}{\sqrt{2}},\frac{1}{\sqrt{2}},\frac{1}{\sqrt{2}}\right)$, then we must consider the limits in which they decay at the same rate as $\vec{\zeta}_{\rm KK,2}$ towers. These can be of the type $\left(0,0,\frac{1}{\sqrt{2}},\frac{1}{\sqrt{2}}\right)$ (dependent in a subset of the moduli on which $\vec{\zeta}_{\rm KK,1}$ depends) or $(\frac{1}{\sqrt{2}},0,0,\frac{1}{\sqrt{2}})$ (there is one saxion on which $\vec{\zeta}_{\rm KK,1}$ does not depend). One can check that only the latter vectors result in exponential rates corresponding to an appropriate number $n=3$ of dimensions being decompactified (the former possibility results in $n=\frac{9}{5}$). In order to have exponential rates of $\lambda_\star\geq \frac{1}{\sqrt{d-2}}=\frac{1}{\sqrt{2}}$ for the lightest tower in any direction, one must take the three $\vec{\zeta}_{\rm KK,2}$ towers with a single overlapping saxionic direction. Modulo the choice of saxionic direction on which $\vec{\zeta}_{\rm KK-1}$ does not depend, this results in the following tower arrangement
\begin{equation}\label{eq.ARRR2}
\boxed{\Big(\underline{\frac{1}{\sqrt{2}},0,0,0}\Big)\,,\quad \Big(\frac{1}{\sqrt{2}},\underline{\frac{1}{\sqrt{2}},0,0}\Big)\,,\quad {\Big(0,\frac{1}{\sqrt{2}},\frac{1}{\sqrt{2}},\frac{1}{\sqrt{2}}\Big)}}\,,
\end{equation}	
this is, four emergent string towers, three KK-2 and a single KK-1. This indeed exactly corresponds with the top-down construction from heterotic $E_8\times E_8$ or type IIA on $\mathcal{V}_X\sim s^1 s^2 s^3$, see \cite[Sections 3 and 4.1]{Grieco:2025bjy}. Note that again this arrangement singles out a universal saxion $s^0$, different from the other $\{s^i\}_{i=1}^3$, corresponding to the direction given by a weak coupling tower of a string living in 10d, with the original $S_4$ symmetry broken to $S_3$.
	\end{itemize}
\end{itemize}

\vspace{0.5cm}

As in the $K\sim -\log[s^0(s^1)^3]$ and $K\sim -\log[s^0(s^1)^2s^2]$ cases, we obtain two consistent arrangements, \eqref{eq.ARRR1} and \eqref{eq.ARRR2}, corresponding to HE/type IIA and HO/type I/type IIB/F-theory realizations (provided one can complete the saxionic cone through enough dualities). In fact it is straightforward to see that the previously studied K\"hler potentials are simply slices of the one we have just considered slices of the ones obtained here, respectively with $s^1\sim s^2\sim s^3$ and $s^1\sim s^2$.

\subsubsection*{Reconstruction for $K\sim \log\left(s^1 s^2 s^3 s^4 s^5 s^6 s^7\right)$}
We finally consider the K\"ahler potential $K\sim \log\left(s^1 s^2 s^3 s^4 s^5 s^6 s^7\right)$, which arises in M-theory compactifications on a Joyce manifold $X_7=\mathbb{T}^7/(\mathbb{Z}_2\oplus\mathbb{Z}_2\oplus\mathbb{Z}_2)$, for untwisted saxion sector (i.e., not considering the contributions from the resolution of the singularities after orbifolding). In \cite{Grieco:2025bjy}, the top-down tower polytope for this theory was obtained. We find here that this is also the \emph{unique} polytope that can be obtained through the reconstruction algorithm for this choice of K\"ahler potential. The procedure carries out analogously to that for $K \sim -\log \left( s^0 s^1 s^2 s^3 \right)$:

\begin{itemize}
\item The EFT string zeta-vectors are given by
\begin{equation}
\vec{\zeta}_{\mathcal{T}_{e^i} =}\Big(\underline{\tfrac{1}{\sqrt{2}},0,0,0,0,0,0}\Big)\,,\quad i=0,\,1,\,2,\dots,7\,,
\end{equation}
 symmetric under permutations in $S_7$
\item We compute the lattice $\Lambda_K$, with $w_i\in\{0,1\}$, which results in $2^7-1=127$ points. After refining to $\hat{\Lambda}_K$ \eqref{eq.lattice} we find the following surviving 63 possible $\zeta$-vectors:
\begin{equation}
\begin{tabular}{|c||c|c|c|}\hline
	$\vec{\zeta}_\star$&$\Big(\underline{\frac{1}{\sqrt{2}},0,0,0,0,0,0}\Big)$&$\Big(\underline{\frac{1}{\sqrt{2}},\frac{1}{\sqrt{2}},0,0,0,0,0}\Big)$&$\Big(\underline{\frac{1}{\sqrt{2}},\frac{1}{\sqrt{2}},\frac{1}{\sqrt{2}},0,0,0,0}\Big)$\\\hline
	$n$& osc & 2 &1 \\\hline
	\end{tabular}
\end{equation}

\item The argument for the need of all the $\binom{7}{2}=21$ KK-2 towers is the same as in the $K\sim \log(s^0s^1s^2s^3)$ case, as otherwise there would be limits where the light towers become light faster than $\lambda_{\rm osc}=\frac{1}{\sqrt{2}}$. However, this is not sufficient to generate a consistent tower polytope on their own. In order to see this, consider that, if the 21 KK-2 towers were the only vectors other than the elementary EFT-strings (which lie on the axes), we would have a 6-simplex generated by $\vec{\zeta}_{\rm KK,2}$ vertices, e.g.,
\begin{equation}
\Delta=\Big\langle\left(0,0,0,0,0,\tfrac{1}{\sqrt{2}},\tfrac{1}{\sqrt{2}}\right),\left(0,0,0,0,\tfrac{1}{\sqrt{2}},\tfrac{1}{\sqrt{2}},0\right),\dots\,\left(\tfrac{1}{\sqrt{2}},\tfrac{1}{\sqrt{2}},0,0,0,0,0\right)\Big\rangle
\end{equation}
This simplex has a distance to the origin associated to $n_\star=-\frac{36}{7}$, making this configuration inconsistent, thus requiring the introduction of KK-1 towers.

\item Not every arrangement including KK-1 towers is consistent. For example, if we include only the vector $\vec{\zeta}_1=\left(\frac{1}{\sqrt{2}},\frac{1}{\sqrt{2}},\frac{1}{\sqrt{2}},0,0,0,0\right)$, there would still be a 4-simplex, given by the $\binom{4}{2}=6$ KK-2 $\big(0,0,0,\underline{\frac{1}{\sqrt{2}},\frac{1}{\sqrt{2}},0,0}\big)$ vectors orthogonal to said $\vec{\zeta}_{\rm KK,1}$, which is too close to the origin, with $n_\star=-\frac{8}{3}$. At least another KK-1 vector would be needed, which in turn cannot share more than one component with the first $\vec{\zeta}_{\rm KK,1}$, as otherwise the edge connecting them would have an inconsistent bound state of $n_\star =\frac{4}{3}$. This line of thought results in the \emph{only} consistent option is to add exactly 7 KK-1 vectors in such a way that each pair shares exactly one saxionic component. 
\end{itemize}

Summing up, applying the reconstruction algorithm to  $K\sim \log\left(s^1 s^2 s^3 s^4 s^5 s^6 s^7\right)$ yields only one consistent tower polytope, precisely  that obtained from a top-down compactification of M-theory a Joyce manifold $X=\mathbb{T}^7/(\mathbb{Z}_2\oplus\mathbb{Z}_2\oplus\mathbb{Z}_2)$, see \cite[Section 5]{Grieco:2025bjy} and Appendix \ref{app.Joyce}.

\section{Constraining K\"ahler potentials and evidence for string universality\label{sec.constrain kahler}}
In this Section, we turn our attention to more general K\"ahler potentials, not necessarily appearing from known string constructions. Our goal is to constrain which K\"ahler potentials are consistent with the assumptions outline at the beginning of Section \ref{sec.reconstructionalg} . We will see that the Integral Scaling Relation together with the ESC are indeed powerful enough to rule out certain asymptotic behaviors of the K\"ahler potential.
\subsection{Constraints on asymptotic K\"ahler potentials}

The reconstruction algorithm that we laid out in Section \ref{sec.reconstr} is based on the Integral Scaling Relation \eqref{e.bottom up integer}, which is an observed feature of EFT strings, and on the Emergent String Conjecture as assumptions\footnote{Recall that we further require the technical assumption that decompactifications are to a Minkowski vacuum, rather than \emph{warped}.}. From the examples studied in the previous sections, we have seen that, starting from the K\"ahler potentials obtained in 4d $\mathcal{N}=1$ string/M-theory compactifications, this algorithm seems to always recover \emph{exactly} the tower arrangements that we would get from top-down stringy constructions. We will now perform a more systematic study of possible arrangements for arbitrary K\"ahler potentials $K\sim -\log P(s)$ with $P(s)$ a polynomial in the saxions with $\deg P(s)\leq 7$, where this upper bound is suggested by top-down constructions.
In general, we will find that only some particular choices for K\"ahler potentials will produce a consistent convex hull by the algorithm: this effectively selects some K\"ahler potentials while ruling out others.

Applying the reconstruction algorithm to any K\"ahler potential of the form  $K\sim - \log P(s)$ is straightforward if $P(s)$ is a monomial. This will be done in a systematic way in Section \ref{sec.mono}. However, for a more general polynomial, the $\zeta$-vectors of light towers will have a direct dependence on saxions and vary along different trajectories in moduli space. The whole convex hull of light towers will then be understood as a local object which has to be computed for each point and will in general be different depending on the asymptotic trajectory. In fact, this is usually the situation we face in top-down constructions of 4d $\mathcal{N}=1$ theories. 
In Section \ref{sec.gluing} we will argue how we can understand these more general cases from gluing the building blocks given by different monomials in $P(s)$, which will correspond to the leading piece dominating in the different \textit{growth sectors} in which we divide the moduli space \eqref{eq.growth sector} \cite{Grimm:2018cpv,Corvilain:2018lgw}. The resulting classification in \ref{sec.mono}, paired with the gluing in \ref{sec.gluing}, provides a complete chart of the regions of 4d $\mathcal{N}=1$ moduli spaces described at perturbative level by shift-symmetric K\"ahler potentials $K\sim - \log P(s)$  with $P(s)$ homogeneous. In particular, it also provides all the candidate light towers in each growth sector, without relying on a specific top-down construction.

\subsubsection{Monomials}\label{sec.mono}

Let us start then by constraining the possible $K\sim -\log P(s)$, with $P(s)$ being a generic monomial of the saxions, that can yield a UV spectrum of towers consistent with the assumptions outlined in the reconstruction algorithm of Section \ref{sec.reconstructionalg}. The results for these monomials will serve as building blocks (valid to leading order in each growth sector of the moduli space) that we will later \emph{glue} to discuss the most general case of a polynomial function $P(s)$ in Section \ref{sec.gluing}.

In order to do this, consider a K\"ahler potential which with the following form at leading order
\begin{equation}
K \sim -\log\left[(s^1)^{p_1} (s^2)^{p_2}\dots (s^k)^{p_k}\right]\,
\end{equation}
with total degree $p= \sum_{k=1}^n p_k$. For practical purposes, in order to perform an analysis of compatible tower arrangements we need to put bounds on the number of saxions $k$ and the degree $p$ of the polynomial. From  known string and M-theory top-down constructions, the degree of the polynomial is upper bounded by the total number of internal dimensions. It is then natural, with string and M-theory compactifications in mind, to impose $p\leq 7$ and, as a consequence, also $k\leq 7$. We also ask for $p_k$ to be integers.  In Appendix \ref{app.id}, we show that a upper bound on $w_i$ can impose a weaker lower bound (not saturated in top-down examples) on the single $p_i$ too, given by \eqref{eq.bounds}. In synthesis, at this level the motivation for $p\leq 7$ remains mostly a top-down input, with additional bottom-up arguments for it presented in \cite{Montero:2015ofa, Maldacena:2026jqd, DiUbaldo:2026rly, Etheredge:2026rio} and based on the Axionic Weak Gravity Conjecture.
We can start with the straightforward exercise of applying the reconstruction algorithm to single-saxion moduli spaces,  displayed in Table \ref{tab:1sax}.

It is interesting to notice that $K\sim \log (s^1)^5$ does not result in any $\zeta$-vector arrangement (which is a point in the one-dimensional saxionic cone, $\mathbb{R}_{>0}$) compatible with the reconstruction algorithm. In one dimension, this just means that for $p=5$ that there is no KK or string tower with mass given by a power of the string tension scale, as required by the Integral Scaling Relation \eqref{scalingweight}. In other words, since   $|\vec{\zeta}_{\mathcal{T}}|=\frac{1}{\sqrt{2p}}=\frac{1}{\sqrt{10}}$, the equation $\frac{w}{\sqrt{10}}=  \sqrt{\frac{2+n}{2n}}$ (see \eqref{eq.int-1mod} later) has no positive integer solution in $(w,n)$ (or $n=\infty$ for strings) if $w$ is integer with $w\leq 3$.\footnote{Recall that, as discussed bellow \eqref{eq.lambda K}, it is enough to take $w\leq \lfloor\sqrt{3p}\rfloor$, which for $p\geq 6$ would in principle allow for $w\geq 4$.} 

\begin{table}[H]\
    \centering
    \begin{tabular}{|c|c|}\hline
    \rowcolor{gray!10!}$K\sim-\log P(s)$     & Generating towers \\\hline
    \rowcolor{cyan!10!}  $s^1$&$\big(\tfrac{1}{\sqrt{2}}\big)$  \\
       $(s^1)^2$& $\big(1\big)$  \\
     \rowcolor{cyan!10!}  $(s^1)^3$& $\textcolor{Green}{a)} \ \ \big(\sqrt{\tfrac{2}{3}}\big)$ \quad;\quad\textcolor{Green}{b)} $\big(\sqrt{\tfrac{3}{2}}\big)$ \\
       $(s^1)^4$  & $\color{blue}\big(\tfrac{1}{\sqrt{2}}\big)$  \\
     \rowcolor{cyan!10!}  $\color{red}(s^1)^5$   & $\color{red}-$\\
       $(s^1)^6$& $\big(\tfrac{\sqrt{3}}{2}\big)$ \\
     \rowcolor{cyan!10!}  $(s^1)^7$& $\big(\tfrac{3}{\sqrt{14}}\big)$ \\\hline
    \end{tabular}
    \caption{Bottom-up reconstructions for a single modulus. Non-supersymmetric strings are colored in blue.}
    \label{tab:1sax}
\end{table}

However, one could wonder if loosening this bound on $w$, which in turn means also loosening the bound on the degree $p$ of the monomial, could allow us to find a light tower compatible with the Integral Scaling Relation. The answer to this question is negative, and by increasing the degree of the monomial it is possible to see that this is not only the case for $(s^1)^5$ but happens for most integers. The reason for this comes purely from noticing the ESC allows only eight different exponential rates for string and KK towers (again, not considering warped (de)compactifications), see \eqref{eq.length}. In particular, from the Integral Scaling Relation \eqref{eq.intscaling} we have, in the case of a single axion in a K\"ahler potential $K \sim -p \log s$,
 
\begin{equation} \label{eq.int-1mod}
 \frac{w^2}{2p} = \frac{n+2}{2n}, \ n=1,\dots,7 \ \text{and} \ \infty
\end{equation} 
which can be rewritten as the following Diophantine equations 
\begin{align}
w^2 - p = 0 \ &\text{for} \ n = \infty \label{eq.linear1} \\
n w^2-(n+2)p= 0 \  &\text{for} \ n=1,\dots,7 \label{eq.linear2}
\end{align}
where the unknowns are the scaling weight $w$ and the degree of the monomial $p$.
It is straightforward to see that equation \eqref{eq.linear1} just selects for values of $w$ which are perfect squares of the degree of the polynomial. Regarding \eqref{eq.linear2}, through the fundamental theorem of arithmetic
we can substitute
\begin{equation}\label{eq.ink1}
p=\left\{\begin{array}{ll}
	rnk^2&\text{for }n\text{ odd, with }n+2=s^2r\\
	\frac{1}{2}rnk^2&\text{for }n\text{ even, with }n+2=2s^2r
\end{array}\right.\,,\quad\text{with }k,\,s,\,r\in\mathbb{Z}_{\geq 0}\text{ and }r\text{ square-free,}
\end{equation}
and
\begin{equation}\label{eq.ink2}
	w=\left\{\begin{array}{ll}
	\sqrt{(n+2)r}k&\text{for }n\text{ odd}\\
	\sqrt{\frac{n+2}{2}r}k&\text{for }n\text{ even}
\end{array}\right.\,,
\end{equation}
with the same $k$, $s$ and $r$ as above. We stress that $r$ and $s$ are uniquely set by the factorization of $n+2$, while $k$ simply labels the different solutions for $(p,w)$ in Table \ref{tab:pw}. The smallest value for $(p,w)$ is then set by $k=1$, with larger $k$ resulting in a finer lattice in $(p,w)$.

In Table \ref{tab:pw}, the solutions to \eqref{eq.ink1} and \eqref{eq.ink2} are displayed, for all the different $n$. Note that the factorization of $n+2=(2)s^2r$ is of great importance. For example, for $n=5$, we have $s=1$ and $r=7$, which implies the large values $(p,w)=(35 k^2,7k)$, while for $n=7$, $s=3$ and $r=1$, so that $(p,w)=(7k^2,3k)$.

\begin{table}[H]
    \centering
    \begin{tabular}{|c|c|}\hline
    \rowcolor{gray!10!}$n$     & $(p,w)$ \\\hline
    \rowcolor{Green!10!}  $\infty$& $(1,1)$, $(4,2)$, $(9,3)$, \dots  \\
       $1$& $(3,3)$, $(12,6)$, $(27,9)$, \dots \\
     \rowcolor{Green!10!}  $2$& $(2,2)$, $(8,4)$, $(18,6)$, \dots \\
       $3$  & $(15,5)$, $(60,10)$, $(135,15)$, \dots  \\
     \rowcolor{Green!10!}  $4$   & $(6,3)$, $(24,6)$, $(54,9)$, \dots \\
       $5$& $(35,7)$, $(140,14)$, $(315,21)$, \dots \\
     \rowcolor{Green!10!}  $6$& $(3,2)$, $(12,4)$, $(27,6)$, \dots\\
       $7$&  $(7,3)$,  $(28,6)$,  $(63,9)$, \dots \\\hline
    \end{tabular}
    \caption{Pairs $(p,w)$ consistent with the Integral Scaling Relation for each possible leading tower consistent with the ESC, in single-moduli spaces. Displayed are solutions of \eqref{eq.ink1} and  \eqref{eq.ink2} for $k=1,2,3$. Notice how the solutions for $k>1$ are just given by a finer lattice choice of the $k=1$ case. Notice also how $n=3$ and $n=5$ for the leading KK-$n$ towers would require $p>7$. Notice also how $n=1$ and $n=6$ give two possible solutions with the same degree of the polynomial.}
    \label{tab:pw}
\end{table}

A similar argument applies in the case of multi-moduli space, with some important differences. First, the Integral Scaling Relation which in single-moduli cases is enforced by \eqref{eq.int-1mod} will now give a more involved Diophantine equations with additional variables, significantly more difficult to solve in general. Second, the reconstruction algorithm in multi-moduli spaces requires the ESC to be applied recursively for partial decompactifications, which in practice is equivalent to imposing the taxonomy rules from \cite{Etheredge:2024tok} to the obtained points of the specific lattice generated by the elementary EFT strings. In fact, both constraints are nothing but Diophantine equations which in general have to be solved computationally. \\

The arrangements of towers given by the reconstruction algorithm for 2 to 7 saxions, for monomial $P(s)$ of degree equal or less than 7, are displayed in Tables \ref{tab:bu2}, \ref{tab:bu3}, \ref{tab:bu4}, \ref{tab:bu5}, \ref{tab:bu6} and \ref{tab:bu7}. In particular, we give the $\zeta$-vectors generating the tower polytope, corresponding to either co-leading string towers or leading KK towers in each decompactification limit. We also display, in {\color{blue}blue},  the $\zeta$-vectors of non-BPS strings. These will always lie, in case they appear, in the pericenter to the origin of a $k$-simplex in place of a bound state of KK towers, and will always be heavier than the leading EFT string in their direction, see discussion around \eqref{eq. strings non BPS}.
\begin{table}[H]
    \centering
    \resizebox{0.8\textwidth}{!}{
    \begin{tabular}{|c|c|}\hline
    \rowcolor{gray!10!}    $K\sim-\log P(s)$     & Generating towers \\\hline
     \rowcolor{cyan!10!}   $s^1s^2$ & $\big(\tfrac{1}{\sqrt{2}},0\big)$, $\big(0,\tfrac{1}{\sqrt{2}}\big)$, $\big(\tfrac{1}{\sqrt{2}},\tfrac{1}{\sqrt{2}}\big)$ \\
        $(s^1)^2s^2$ &\begin{tabular}{@{}c@{}} \textcolor{Green}{a)} \ $\big(1,0\big)$, $\big(0,\tfrac{1}{\sqrt{2}}\big)$, $\big(1,\tfrac{1}{\sqrt{2}}\big)$ ;\\ \textcolor{Green}{b)} \ $\big(1,0\big)$, $\big(0,\tfrac{1}{\sqrt{2}}\big)$, $\big(\tfrac{1}{2},\tfrac{1}{\sqrt{2}}\big)$ \end{tabular} \\
    \rowcolor{cyan!10!}    $(s^1)^2(s^2)^2$ &  \begin{tabular}{@{}c@{}} $\big(1,0\big)$, $\big(0,1\big)$, $\big(\tfrac{1}{2},\tfrac{1}{2}\big)$ \end{tabular}\\
        $(s^1)^3s^2$ &\begin{tabular}{@{}c@{}} \textcolor{Green}{a)} \ $\big(\sqrt{\tfrac{3}{2}},0\big)$, $\big(0,\tfrac{1}{\sqrt{2}}\big)$, $\big(\tfrac{1}{\sqrt{6}},\tfrac{1}{\sqrt{2}}\big)$ ;\\ \textcolor{Green}{b)} \ $\big(\sqrt{\tfrac{2}{3}},0\big)$, $\big(0,\tfrac{1}{\sqrt{2}}\big)$, $\big(\tfrac{1}{\sqrt{6}},\tfrac{1}{\sqrt{2}}\big)$, $\color{blue}\big(\tfrac{1}{2}\sqrt{\tfrac{2}{3}},\tfrac{1}{2\sqrt{2}}\big)$ \end{tabular} \\
    \rowcolor{cyan!10!}     $\color{red}(s^1)^3(s^2)^2$& $\color{red}-$\\
      $(s^1)^3(s^2)^3$  &$\big(\sqrt{\tfrac{3}{2}},0\big)$, $\big(0,\sqrt{\tfrac{2}{3}}\big)$, $\big(\sqrt{\tfrac{1}{6}},\sqrt{\tfrac{2}{3}}\big)$  \\
    \rowcolor{cyan!10!}   $(s^1)^4s^2$  & $\color{blue}\big(\tfrac{1}{\sqrt{2}},0\big)$, $\big(0,\tfrac{1}{\sqrt{2}}\big)$, $\big(\tfrac{1}{\sqrt{2}},\tfrac{1}{\sqrt{2}}\big)$ \\
       $(s^1)^4(s^2)^2$ &\begin{tabular}{@{}c@{}} \textcolor{Green}{a)} \ $\color{blue}{\big(\tfrac{1}{\sqrt{2}},0\big)}$, $\big(0,1\big)$, $\big(\tfrac{1}{\sqrt{2}},1\big)$ ;\\ \textcolor{Green}{b)} \ $\color{blue}{\big(\tfrac{1}{\sqrt{2}},0\big)}$, $\big(0,1\big)$, $\big(\tfrac{1}{\sqrt{2}},\frac{1}{2}\big)$ \end{tabular} \\
    \rowcolor{cyan!10!}   $(s^1)^4(s^2)^3$  & \begin{tabular}{@{}c@{}} \textcolor{Green}{a)} \ $\color{blue}\big(\tfrac{1}{\sqrt{2}},0\big)$, $\big(0,\sqrt{\tfrac{3}{2}}\big)$, $\big(\tfrac{1}{\sqrt{2}},\tfrac{1}{\sqrt{6}}\big)$;\\ \textcolor{Green}{b)} \ $\color{blue}\big(\tfrac{1}{\sqrt{2}},0\big)$, $\big(0,\sqrt{\tfrac{2}{3}}\big)$, $\big(\tfrac{1}{\sqrt{2}},\tfrac{1}{\sqrt{6}}\big)$, $\color{blue}\big(\tfrac{1}{2\sqrt{2}}, \tfrac{1}{2}\sqrt{\tfrac{2}{3}}\big)$ \end{tabular}\\
       $\color{red}(s^1)^5s^2$  &$\color{red}-$ \\
     \rowcolor{cyan!10!}  $\color{red}(s^1)^5(s^2)^2$  &$\color{red}-$ \\
       $(s^1)^6s^2$  & $\big(\sqrt{\tfrac{3}{4}},0\big)$, $\big(0,\sqrt{\tfrac{1}{2}}\big)$, $\big(\tfrac{1}{\sqrt{3}},\tfrac{1}{\sqrt{2}}\big)$ \\\hline
    \end{tabular}}
    \caption{Bottom-up reconstructions for monomials with 2 moduli. Non-BPS F1-strings are colored in blue.}
    \label{tab:bu2}
\end{table}

\begin{table}[H]
    \centering
    \resizebox{1\textwidth}{!}{
    \begin{tabular}{|c|c|}\hline
    \rowcolor{gray!10!}    $K\sim-\log P(s)$     & Generating towers \\\hline
      \rowcolor{cyan!10!}    $s^1 s^2s^3$  & \begin{tabular}{@{}c@{}} \textcolor{Green}{a)} \ $\big(\underline{\tfrac{1}{ \sqrt{2}},0,0}\big)$, $\big(\underline{\tfrac{1}{ \sqrt{2}},\tfrac{1}{ \sqrt{2}},0}\big)$ ; \\ \textcolor{Green}{b)} \  $\big(\underline{\tfrac{1}{ \sqrt{2}},0,0}\big)$, $\big(\underline{\tfrac{1}{ \sqrt{2}},\tfrac{1}{ \sqrt{2}},0}\big)$, $\big(\tfrac{1}{ \sqrt{2}},\tfrac{1}{ \sqrt{2}},\tfrac{1}{ \sqrt{2}}\big)$ \end{tabular}   \\
          $(s^1)^2 s^2s^3$  & \begin{tabular}{@{}c@{}} \textcolor{Green}{a)} \ $\big(1,0,0\big)$, $\big(0,\underline{\tfrac{1}{\sqrt{2}},0}\big)$, $\big(1,\tfrac{1}{\sqrt{2}},0\big)$, $\big(\frac{1}{2},0,\tfrac{1}{\sqrt{2}}\big)$, $\big(0,\tfrac{1}{\sqrt{2}},\tfrac{1}{\sqrt{2}}\big)$ ;\\ \textcolor{Green}{b)} \ $\big(1,0,0\big)$, $\big(0,\underline{\tfrac{1}{\sqrt{2}},0}\big)$, $\big(\tfrac{1}{2},\underline{\tfrac{1}{\sqrt{2}},0}\big)$, $\big(0,\tfrac{1}{\sqrt{2}},\tfrac{1}{\sqrt{2}}\big)$, $\color{blue}\big(\tfrac{1}{2},\tfrac{1}{2\sqrt{2}},\tfrac{1}{2\sqrt{2}}\big)$  \end{tabular}  \\
    \rowcolor{cyan!10!}      $(s^1)^2 (s^2)^2s^3$  &  \begin{tabular}{@{}c@{}} \textcolor{Green}{a)} \ $(\underline{1,0},0)$, $\big(0,0,\tfrac{1}{\sqrt{2}}\big)$, $\color{blue}\big(\tfrac{1}{2},\tfrac{1}{2},0\big)$, $\big(\underline{1,0},\tfrac{1}{\sqrt{2}}\big)$ ;\\ \textcolor{Green}{b)} \ $(\underline{1,0},0)$, $\big(0,0,\tfrac{1}{\sqrt{2}}\big)$, $\color{blue}\big(\tfrac{1}{2},\tfrac{1}{2},0\big)$, $\big(\underline{\tfrac{1}{2},0},\tfrac{1}{\sqrt{2}}\big)$, $\big(\tfrac{1}{2},\tfrac{1}{2},\tfrac{1}{\sqrt{2}}\big)$\end{tabular} \\
          $(s^1)^2 (s^2)^2(s^3)^2$  & $(\underline{1,0,0})$, $\color{blue}\big(\underline{\tfrac{1}{2},\tfrac{1}{2},0}\big)$, $\big(\tfrac{1}{2},\tfrac{1}{2},\tfrac{1}{2}\big)$    \\
    \rowcolor{cyan!10!}      $\color{red}(s^1)^3 s^2s^3$  & $\color{red}-$  \\
          $\color{red}(s^1)^3 (s^2)^2s^3$  & $\color{red}-$  \\
   \rowcolor{cyan!10!}       $\color{red}(s^1)^3 (s^2)^2(s^3)^2$  & $\color{red}-$  \\
          $(s^1)^3 (s^2)^3 s^3$  & \begin{tabular}{@{}c@{}}$\big(\sqrt{\tfrac{3}{2}},0,0\big)$, $\big(0,\sqrt{\tfrac{2}{3}},0\big)$, $\big(0,0,\frac{1}{\sqrt{2}}\big)$, $\big(\tfrac{1}{\sqrt{6}},\sqrt{\tfrac{2}{3}},0\big)$,\\ $\big(\underline{\tfrac{1}{\sqrt{6}},0},\tfrac{1}{\sqrt{2}}\big)$, $\color{blue}\big(0,\tfrac{1}{2}\sqrt{\tfrac{3}{2}},\tfrac{1}{2\sqrt{2}}\big)$, $\big(\tfrac{1}{\sqrt{6}},\tfrac{1}{\sqrt{6}},\tfrac{1}{\sqrt{2}}\big)$  \end{tabular} \\
   \rowcolor{cyan!10!}       $(s^1)^4 s^2s^3$  &   \begin{tabular}{@{}c@{}} \textcolor{Green}{a)} \  $\color{blue}\big(\tfrac{1}{\sqrt{2}},0,0\big)$, $\big(0,\underline{\tfrac{1}{ \sqrt{2}},0}\big)$, $\big(\underline{\tfrac{1}{ \sqrt{2}},\tfrac{1}{ \sqrt{2}},0}\big)$;\\ \textcolor{Green}{b)} \ $\color{blue}\big(\tfrac{1}{\sqrt{2}},0,0\big)$, $\big(0,\underline{\tfrac{1}{ \sqrt{2}},0}\big)$, $\big(\underline{\tfrac{1}{ \sqrt{2}},\tfrac{1}{ \sqrt{2}},0}\big)$, $\big(\tfrac{1}{ \sqrt{2}},\tfrac{1}{ \sqrt{2}},\tfrac{1}{ \sqrt{2}}\big)$\end{tabular} \\
          $(s^1)^4 (s^2)^2s^3$  & \begin{tabular}{@{}c@{}} \textcolor{Green}{a)} \  $\color{blue}\big(\tfrac{1}{\sqrt{2}},0,0\big)$, $\big(0,1,0\big)$, $\big(0,0,\tfrac{1}{\sqrt{2}}\big)$, $\big(\tfrac{1}{\sqrt{2}},1,0\big)$, $\big(0,\frac{1}{2},\tfrac{1}{\sqrt{2}}\big)$, $\big(\tfrac{1}{\sqrt{2}},0,\tfrac{1}{\sqrt{2}}\big)$;\\ \textcolor{Green}{b)} \  $\color{blue}\big(\tfrac{1}{\sqrt{2}},0,0\big)$, $\big(0,1,0\big)$, $\big(0,0,\tfrac{1}{\sqrt{2}}\big)$, $\big(\tfrac{1}{\sqrt{2}},\tfrac{1}{2},0\big)$, $\big(0,\tfrac{1}{2},\tfrac{1}{\sqrt{2}}\big)$, $\big(\tfrac{1}{\sqrt{2}},0,\tfrac{1}{\sqrt{2}}\big)$, $\color{blue}\big(\tfrac{1}{2},\tfrac{1}{2\sqrt{2}},\tfrac{1}{2\sqrt{2}}\big)$  \end{tabular}  \\
     \rowcolor{cyan!10!}     $\color{red}(s^1)^5 s^2s^3$  &  $\color{red}-$ \\\hline
    \end{tabular}}
    \caption{Bottom-up reconstructions for monomials with 3 moduli. Non-BPS F1-strings are colored in blue.}
    \label{tab:bu3}
\end{table}

\begin{table}[H]
    \centering
    \begin{tabular}{|c|c|}\hline
        \rowcolor{gray!10!} $K\sim-\log P(s)$     & Generating towers \\\hline
       \rowcolor{cyan!10!}   $s^1s^2s^3s^4$& \begin{tabular}{@{}c@{}} \textcolor{Green}{a)} \ $\big(\underline{\tfrac{1}{\sqrt{2}},0,0,0}\big)$, $\big(\underline{\tfrac{1}{\sqrt{2}},\tfrac{1}{\sqrt{2}},0,0}\big)$, $\big(0,\tfrac{1}{\sqrt{2}},\tfrac{1}{\sqrt{2}},\tfrac{1}{\sqrt{2}}\big)$ ; \\ \textcolor{Green}{b)} \ $\big(\underline{\tfrac{1}{\sqrt{2}},0,0,0}\big)$, $\big(\underline{\tfrac{1}{\sqrt{2}},\tfrac{1}{\sqrt{2}},0,0}\big)$, $\color{blue}\big(\tfrac{1}{2\sqrt{2}},\tfrac{1}{2\sqrt{2}},\tfrac{1}{2\sqrt{2}},\tfrac{1}{2\sqrt{2}}\big)$         \end{tabular}\\
        $(s^1)^2s^2s^3s^4$ & \begin{tabular}{@{}c@{}}
           $(1,0,0,0)$, $\big(0,\underline{\tfrac{1}{\sqrt{2}},0,0}\big)$, $\big(0,\underline{\tfrac{1}{\sqrt{2}},\tfrac{1}{\sqrt{2}},0}\big)$, \\$\big(1,\tfrac{1}{\sqrt{2}},0,0\big)$, $\big(\frac{1}{2},\underline{\tfrac{1}{\sqrt{2}},0},0\big)$, $\big(0,\tfrac{1}{\sqrt{2}},\tfrac{1}{\sqrt{2}},\tfrac{1}{\sqrt{2}}\big)$, $\color{blue}\big(\tfrac{1}{2},0,\tfrac{1}{2\sqrt{2}},\tfrac{1}{2\sqrt{2}}\big)$ \end{tabular}\\
        \rowcolor{cyan!10!}  $(s^1)^2(s^2)^2s^3s^4$ & \begin{tabular}{@{}c@{}} \textcolor{Green}{a)} \ $(\underline{1,0},0,0)$, $\big(0,0,\underline{\tfrac{1}{\sqrt{2}},0}\big)$, $\color{blue}\big(\tfrac{1}{2},\tfrac{1}{2},0,0\big)$, \\ $\big(0,0,\tfrac{1}{\sqrt{2}},\tfrac{1}{\sqrt{2}}\big)$,$\big(\tfrac{1}{2},\tfrac{1}{2},\underline{\tfrac{1}{\sqrt{2}},0}\big)$, $\color{blue}\big(\underline{\tfrac{1}{2},0},\tfrac{1}{2\sqrt{2}},\tfrac{1}{2\sqrt{2}}\big)$, $\big(\underline{\tfrac{1}{2},0},\underline{\tfrac{1}{\sqrt{2}},0}\big)$;\\ \textcolor{Green}{b)} \ 
        $(\underline{1,0},0,0)$, $\big(0,0,\underline{\tfrac{1}{\sqrt{2}},0}\big)$, $\color{blue}\big(\tfrac{1}{2},\tfrac{1}{2},0,0\big)$,\\ $\big(0,0,\tfrac{1}{\sqrt{2}},\tfrac{1}{\sqrt{2}}\big)$, $\big(\underline{\tfrac{1}{2},0},\tfrac{1}{\sqrt{2}},0\big)$, $\big(\tfrac{1}{2},\tfrac{1}{2},\tfrac{1}{\sqrt{2}},0\big)$, $\big(\underline{1,0},0,\tfrac{1}{\sqrt{2}}\big)$\end{tabular}\\
        $(s^1)^2(s^2)^2(s^3)^2s^4$  & \begin{tabular}{@{}c@{}}$(\underline{1,0,0},0)$, $\big(0,0,0,\tfrac{1}{\sqrt{2}}\big)$, \\ $\big(\tfrac{1}{2},\tfrac{1}{2},\tfrac{1}{2},0\big)$, $\color{blue}\big(\underline{\tfrac{1}{2},\tfrac{1}{2},0},0\big)$, $(\underline{1,0,0},\tfrac{1}{\sqrt{2}})$\end{tabular}\\
       \rowcolor{cyan!10!}  $\color{red}(s^1)^3s^2s^3s^4$ & $\color{red}-$ \\
        $\color{red}(s^1)^3(s^2)^2s^3s^4$ &$\color{red}-$ \\
        \rowcolor{cyan!10!}  $(s^1)^4s^2s^3s^4$ & \begin{tabular}{@{}c@{}} \textcolor{Green}{a)} \ $\color{blue} \big({\tfrac{1}{\sqrt{2}},0,0,0}\big)$,  $\big(0,\underline{\tfrac{1}{\sqrt{2}},0,0}\big)$, $\big(\underline{\tfrac{1}{\sqrt{2}},\tfrac{1}{\sqrt{2}},0,0}\big)$, $\big(0,\tfrac{1}{\sqrt{2}},\tfrac{1}{\sqrt{2}},\tfrac{1}{\sqrt{2}}\big)$;\\ \textcolor{Green}{b)} \ $\color{blue} \big({\tfrac{1}{\sqrt{2}},0,0,0}\big)$,  $\big(0,\underline{\tfrac{1}{\sqrt{2}},0,0}\big)$, $\big(\underline{\tfrac{1}{\sqrt{2}},\tfrac{1}{\sqrt{2}},0,0}\big)$, $\color{blue}\big(\tfrac{1}{2\sqrt{2}},\tfrac{1}{2\sqrt{2}},\tfrac{1}{2\sqrt{2}},\tfrac{1}{2\sqrt{2}}\big)$
         \end{tabular}\\\hline
    \end{tabular}
    \caption{Bottom-up reconstructions for monomials with 4 moduli. Non-BPS F1-strings are colored in blue.}
    \label{tab:bu4}
\end{table}

\begin{table}[H]
    \centering
   
    \begin{tabular}{|c|c|}\hline
        \rowcolor{gray!10!} $K\sim-\log P(s)$     & Generating towers \\\hline
       \rowcolor{cyan!10!}   $s^1s^2s^3s^4s^5$& \begin{tabular}{@{}c@{}}
       $\big(\underline{\tfrac{1}{\sqrt{2}},0,0,0,0}\big)$, $\big(\underline{\tfrac{1}{\sqrt{2}},\tfrac{1}{\sqrt{2}},0,0,0}\big)$, $\big(\tfrac{1}{\sqrt{2}},\tfrac{1}{\sqrt{2}},\tfrac{1}{\sqrt{2}},0,0\big)$,\\ $\big(\tfrac{1}{\sqrt{2}},0,0,\tfrac{1}{\sqrt{2}},\tfrac{1}{\sqrt{2}}\big)$, $\color{blue}\big(0,\tfrac{1}{2\sqrt{2}},\tfrac{1}{2\sqrt{2}},\tfrac{1}{2\sqrt{2}},\tfrac{1}{2\sqrt{2}}\big)$
          \end{tabular}\\
        $\color{red}(s^1)^2s^2s^3s^4s^5$& $\color{red}-$\\
        \rowcolor{cyan!10!}$(s^1)^2(s^2)^2s^3s^4s^5$& \begin{tabular}{@{}c@{}} $(\underline{1,0},0,0,0)$, $\big(0,0,\underline{\tfrac{1}{\sqrt{2}},0,0}\big)$, $\big(0,0,\underline{\tfrac{1}{\sqrt{2}},\tfrac{1}{\sqrt{2}},0}\big)$,\\ $\big(\tfrac{1}{2},\tfrac{1}{2},\underline{\tfrac{1}{\sqrt{2}},0},0\big)$,$\big(\underline{\tfrac{1}{2},0},\underline{\tfrac{1}{\sqrt{2}},0},0\big)$,$(\underline{1,0},0,0,\tfrac{1}{\sqrt{2}})$,\\$\big(0,0,\tfrac{1}{\sqrt{2}},\tfrac{1}{\sqrt{2}},\tfrac{1}{\sqrt{2}}\big)$, $\color{blue}\big(\tfrac{1}{2},\tfrac{1}{2},0,0,0\big)$, $\color{blue}\big(\underline{\tfrac{1}{2},0},\tfrac{1}{2\sqrt{2}},\tfrac{1}{2\sqrt{2}},0 \big)$
         \end{tabular}\\
       $\color{red}(s^1)^3s^2s^3s^4s^5$& $\color{red}-$\\\hline
          \end{tabular}
    \caption{Bottom-up reconstructions monomials with for 5 moduli. Non-BPS F1-strings are colored in blue.}
    \label{tab:bu5}
\end{table}

\begin{table}[H]
    \centering

    \begin{tabular}{|c|c|}\hline
        \rowcolor{gray!10!} $K\sim-\log P(s)$     & Generating towers \\\hline
       \rowcolor{cyan!10!}   $s^1s^2s^3s^4s^5s^6$& \begin{tabular}{@{}c@{}} 
        $\big(\underline{\tfrac{1}{\sqrt{2}},0,0,0,0,0}\big)$, $\big(\underline{\tfrac{1}{\sqrt{2}},\tfrac{1}{\sqrt{2}},0,0,0,0}\big)$, $\big(\tfrac{1}{\sqrt{2}},\tfrac{1}{\sqrt{2}},\tfrac{1}{\sqrt{2}},0,0,0\big)$, \\ $\big(\tfrac{1}{\sqrt{2}},0,0,\tfrac{1}{\sqrt{2}},\tfrac{1}{\sqrt{2}},0\big)$, $\big(0,\tfrac{1}{\sqrt{2}},0,\tfrac{1}{\sqrt{2}},0,\tfrac{1}{\sqrt{2}}\big)$,$\big(0,0,\tfrac{1}{\sqrt{2}},0,\tfrac{1}{\sqrt{2}},\tfrac{1}{\sqrt{2}}\big)$, \\$\color{blue}\big(\underline{\tfrac{1}{2\sqrt{2}},0,0},\tfrac{1}{2\sqrt{2}},\tfrac{1}{2\sqrt{2}},\tfrac{1}{2\sqrt{2}}\big)$, $\color{blue}\big(0,\tfrac{1}{2\sqrt{2}},\tfrac{1}{2\sqrt{2}},\tfrac{1}{2\sqrt{2}},\tfrac{1}{2\sqrt{2}},0\big)$, \\ $\color{blue}\big(\tfrac{1}{2\sqrt{2}},0,\tfrac{1}{2\sqrt{2}},\tfrac{1}{2\sqrt{2}},0,\tfrac{1}{2\sqrt{2}}\big)$, $\color{  blue}\big(\tfrac{1}{2\sqrt{2}},\tfrac{1}{2\sqrt{2}},0,0,\tfrac{1}{2\sqrt{2}},\tfrac{1}{2\sqrt{2}}\big)$
          \end{tabular}\\
          $\color{red}(s^1)^2s^2s^3s^4s^5s^6$&$\color{red}-$
          \\\hline
           \end{tabular}
    \caption{Bottom-up reconstructions for monomials with 6 moduli. Non-BPS F1-strings are colored in blue.}
    \label{tab:bu6}
\end{table}

\begin{table}[H]
    \centering
    
    \begin{tabular}{|c|c|}\hline
        \rowcolor{gray!10!} $K\sim-\log P(s)$     & Generating towers \\\hline
        \rowcolor{cyan!10!}
          $s^1s^2s^3s^4s^5s^6s^7$& \begin{tabular}{@{}c@{}}
       $\big(\underline{\tfrac{1}{\sqrt{2}},0,0,0,0,0,0}\big)$, $\big(\underline{\tfrac{1}{\sqrt{2}},\tfrac{1}{\sqrt{2}},0,0,0,0,0}\big)$, $\big(\tfrac{1}{\sqrt{2}},\tfrac{1}{\sqrt{2}},\tfrac{1}{\sqrt{2}},0,0,0,0\big)$, \\ $\big(\tfrac{1}{\sqrt{2}},0,0,\tfrac{1}{\sqrt{2}},\tfrac{1}{\sqrt{2}},0,0\big)$, $\big(\tfrac{1}{\sqrt{2}},0,0,0,0,\tfrac{1}{\sqrt{2}},\tfrac{1}{\sqrt{2}}\big)$, $\big(0,\tfrac{1}{\sqrt{2}},0,\tfrac{1}{\sqrt{2}},0,\tfrac{1}{\sqrt{2}},0\big)$,\\ $\big(0,\tfrac{1}{\sqrt{2}},0,0,\tfrac{1}{\sqrt{2}},0,\tfrac{1}{\sqrt{2}}\big)$, $\big(0,0,\tfrac{1}{\sqrt{2}},0,\tfrac{1}{\sqrt{2}},\tfrac{1}{\sqrt{2}},0\big)$, $\big(0,0,\tfrac{1}{\sqrt{2}},\tfrac{1}{\sqrt{2}},0,0,\tfrac{1}{\sqrt{2}}\big)$, \\$\color{blue}\big(0,0,0,\tfrac{1}{2\sqrt{2}},\tfrac{1}{2\sqrt{2}},\tfrac{1}{2\sqrt{2}},\tfrac{1}{2\sqrt{2}}\big)$, $\color{blue}\big(0,\tfrac{1}{2\sqrt{2}},\tfrac{1}{2\sqrt{2}},0,0,\tfrac{1}{2\sqrt{2}},\tfrac{1}{2\sqrt{2}}\big)$, \\$\color{blue}\big(0,\tfrac{1}{2\sqrt{2}},\tfrac{1}{2\sqrt{2}},\tfrac{1}{2\sqrt{2}},\tfrac{1}{2\sqrt{2}},0,0\big)$, $\color{blue}\big(\tfrac{1}{2\sqrt{2}},0,\tfrac{1}{2\sqrt{2}},0,\tfrac{1}{2\sqrt{2}},0,\tfrac{1}{2\sqrt{2}}\big)$,\\ $\color{blue}\big(\tfrac{1}{2\sqrt{2}},0,\tfrac{1}{2\sqrt{2}},\tfrac{1}{2\sqrt{2}},0,\tfrac{1}{2\sqrt{2}},0\big)$, $\color{blue}\big(\tfrac{1}{2\sqrt{2}},\tfrac{1}{2\sqrt{2}},0,\tfrac{1}{2\sqrt{2}},0,0,\tfrac{1}{2\sqrt{2}}\big)$, \\$\color{blue}\big(\tfrac{1}{2\sqrt{2}},\tfrac{1}{2\sqrt{2}},0,0,\tfrac{1}{2\sqrt{2}},\tfrac{1}{2\sqrt{2}},0\big)$ 
          \end{tabular}
          \\\hline
           \end{tabular}
    \caption{Unique bottom-up reconstruction fora  monomial with 7 moduli. Non-BPS F1-strings are colored in blue.}
    \label{tab:bu7}
\end{table}

It is tantalizing to see that in Table \ref{tab:bu7}, the only arrangement of towers allowed coincides exactly with the tower polytope of M-theory on the Joyce $G_2$-manifold $\mathbb{T}^7/(\mathbb{Z}_2\oplus\mathbb{Z}_2\oplus\mathbb{Z}_2)$.         
Also notice how, for up to 4 moduli, there are monomials for which the reconstruction algorithm gives two possible convex hull, while all the others yield one unique arrangement of towers. In particular, for the monomials that allow two different tower arrangements, the only difference between the two is the presence of a single $n=1$ KK tower, exactly what we observed being the difference between type IIA/heterotic $E_8 \times E_8$  (which have an M-theory limit) and type IIB/type I/heterotic $SO(32)$ (which do not) in \cite{Grieco:2025bjy}.

Finally, we see that not all choices of monomial are allowed. In particular, it is always forbidden for a saxion to have a power of 5, and if a monomial yields no convex hull in a lower dimensional moduli space, then that precise arrangement of monomials will be forbidden also if we add more saxions. An intuitive explanation for the later is the fact that the larger saxionic cone would have the smaller one as a boundary, which does not exist to start with.

\subsubsection{Constraints on gluing and general polynomials} \label{sec.gluing}

Assuming a K\"ahler potential of the form $K\sim -\log P(s)$, with $P(s)$ a  monomial in the saxions with $\deg P(s)\leq 7$, we have put constrains on the possible monomials $P(s)$ consistent with the Integral Scaling Relation and the ESC. In doing so, we have seen that there are some choices of K\"ahler potentials that are not consistent with an arrangement of light towers satisfying the above assumptions. However, in general, $P(s)$ will not be a monomial but a homogeneous polynomial to leading order. This is for example, the case for generic CY threefolds, where the volume is given by $\mathcal{V}_X=\frac{1}{3!}\kappa_{abc}s^as^bs^c$ (this remains the case for CY orientifolds), and where the K\"ahler potential for the K\"ahler moduli in type IIA or heterotic compactifications to 4d $\mathcal{N}=1$ results in an asymptotic form $K\sim-\log \mathcal{V}_X$. One can then wonder how our approach extends to the general case, and whether there are additional constrains taking place there.

Once $P(s)$ is not a single monomial, the moduli space metric $\mathsf{G}_{ij}=-\frac{1}{2}\partial_i\partial_j\log P(s)$ will be non-diagonal and generically with non-vanishing curvature.\footnote{For two saxions straightforward computation shows that the saxionic part of the metric is still flat. This is no longer the case for 3 or more saxions. See \cite{trenner2010asymptoticcurvaturemodulispaces,Marchesano:2023thx,Marchesano:2024tod,Castellano:2024gwi,Blanco:2025qom,Castellano:2026bnx,Aoufia:2026mqb} for more results regarding the curvature of  moduli spaces in CY compactifications.} This would in principle prevent us from doing a classification of the global arrangements of light towers along infinite distance limits, since the saxionic cone $\Delta$ cannot be identified with the set of asymptotic trajectories on its tangent space. Fortunately for us, as explored in \cite{Grimm:2018cpv,Corvilain:2018lgw}, we can split the asymptotic regions of our saxionic cone in a set of \emph{growth sectors} $\mathcal{R}_{i_1\dots i_n}$ \eqref{eq.growth sector},
\begin{equation}\label{eq. growth sec4}
	\Delta\supseteq\bigcup_{(i_1,\dots,i_n)}\mathcal{R}_{i_1\dots i_n}=\bigcup_{(i_1,\dots,i_n)}\left\{s^{i_1}\gtrsim s^{i_2}\gtrsim\dots \gtrsim s^{i_n}\gg 1\right\}\,,
\end{equation}
where there is a strict hierarchy between the different saxions, and $P(s)$ can be approximated by its leading term as we approach an infinite distance limit. Up to subleading terms, the metric is approximately diagonal,\footnote{Note that, in doing this, for those saxions not entering the leading term of $P(s)$, the metric becomes effectively degenerate as we approach an infinite distance limit. This means that, in dividing $\Delta$ in growth sectors, the effective dimensionality of each of them, given by the number of saxions in the leading term of $P(s)$, can change from one sector to the other.} with corrections affecting only second order terms like curvature, being subleading in the computation of angles and distances.\\
When there is a parametric separation in the growth of the different saxions in a given growth sector \eqref{eq. growth sec4}, i.e., $s^{i_1}\gg s^{i_2}\gg\dots \gg s^{i_n}$, subleading corrections are parametrically suppressed. One could then take the possible tower arrangements consistent with the leading term of $P(s)$ within each growth sector (selecting only the light towers that actually become light as one moves within $\mathcal{R}_{i_1\dots i_n}$), and try to glue the different arrangements together. In this sense, the arrangement of towers found in the previous section will serve as building blocks to construct the more general tower convex hull associated to a general polynomial $P(s)$.

The non-trivial aspects come from how to glue in a consistent way the convex hulls valid in each individual growth sector. To do this, we need to understand how to continuously interpolate between different tower arrangements, which implies understanding what happens to the towers when moving along the interfaces connecting different growth sectors. Things become more interesting in these interfaces, where two or more saxions scale at similar rates, e.g., $s^i\sim \alpha s^j\to\infty$, with $\alpha$ a positive number,\footnote{By construction these loci have measure zero along the set of infinite distance limits, with the interior of the different growths sectors covering \emph{almost all} the asymptotic regions of the saxionic cone, \cite{Castellano:2023jjt,Etheredge:2024tok}.} and several terms in $P(s)$ being co-leading. The coefficient(s) $\alpha$ can be understood as an \emph{impact parameter}, and asymptotic trajectories with different values of $\alpha$ can be shown to be parallel (i.e., always having a finite distance between them), hence the name. Since $\alpha\to 0$ or $\infty$ recovers $s^j\gg s^i$ and $s^i\gg s^j$, respectively, and the two growth sectors will generically have different leading $P(s)$ terms, and thus incompatible tower arrangements, as one varies $\alpha$ the $\zeta$-vectors of the light towers will move or \emph{slide} perpendicularly to the trajectory taken \cite{Etheredge:2024tok,Grieco:2025bjy}, interpolating between the tower arrangements of the two sectors. The perpendicular character of the sliding means that, for finite $\alpha$, the exponential rates of the light towers do not change, and thus depend only on the direction of the asymptotic trajectory (and not its impact parameter). 

\subsection*{Calabi-Yau examples and changes in the fibration structure}

A natural arena where we can illustrate the above discussion on sliding of the $\zeta$-vectors for non-homogeneous $P(s)$ are CY compactifications. In heterotic $E_8\times E_8$ compactifications (similarly for type IIA string theory on CY orientifolds, see Appendix \ref{app.topdown}), the 4d $\mathcal{N}=1$ K\"ahler potential is given by \cite{Grimm:2005fa}
\begin{equation}\label{eq.kahler cy}
	K=-\log s^0-\log\mathcal{V}_X
\end{equation}
modulo instantonic corrections, where $s^0=e^{-2\phi_4}$ (with $\phi_4$ the 4d dilaton) and $\mathcal{V}_X=\frac{1}{3!}\kappa_{abc}s^as^bs^c$ is the 3-fold volume in string units, expressed in terms of the intersection numbers and K\"ahler saxions
\begin{equation}
	J=s^a\omega_a\,,\quad \kappa_{abc}=\int_{X}\omega_a\wedge\omega_b\wedge\omega_c\,,
\end{equation}
with $J$ the K\"ahler form of $X$ and $\{\omega_a\}_{a=1}^{h^{1,1}}$ are an integral base for $H^{1,1}(X)$, dual to a set of divisors $\{[\omega_a]\}_{a=1}^{h^{1,1}}$. We refer to \cite{Lanza:2021udy,Grieco:2025bjy} and Appendix \ref{app.topdown} for more details. Due to the factorization of $s^0$, the associated saxionic direction is always perpendicular to that spanned by the K\"ahler saxions, and thus $\Delta=\mathbb{R}^+\oplus\mathcal{K}(X)$, where $\mathcal{K}(X)$ is the K\"ahler cone of $X$, parametrized by $\{s^a\geq 0\}_{a=1}^{h^{1,1}}$.

For illustrative purposes, we will take $h^{1,1}=2$ and volume forms of the type
\begin{equation} \label{eq.2saxpoly}
	3!\mathcal{V}_X=\kappa_{111}(s^{1})^3+3\kappa_{112}(s^1)^2s^2+3\kappa_{122}s^1(s^2)^2\,,
\end{equation} 
with all the above intersection numbers non-vanishing. Hodge index theorem \cite{hu2025positivityshadowhodgeindex} requires that the matrix $\kappa_1=\begin{psmallmatrix}\kappa_{111}&\kappa_{112}\\\kappa_{112}&\kappa_{122}\end{psmallmatrix}$ has signature $(+,-)$, which requires
\begin{equation}\label{eq. ineq}
	\det(\kappa_1)=\kappa_{111}\kappa_{122}-\kappa_{112}^2\leq 0\,.
\end{equation}
For this specific volume form, the above condition is enough to guarantee the positive definiteness of the moduli space metric for the saxions, $\mathsf{G}_{ij}=\frac{1}{2}\partial_i\partial_jK(s)$, which for our particular setting is automatically flat and admits global flat coordinates in terms of $s^0$, the overall volume $\mathcal{V}_X$ and a transverse coordinate to this. We will see that in this setup requiring the positive semi-definiteness of the metric is enough to guarantee that the towers behave as expected under sliding, so that we can reach one growth sector's polytope from the other through sliding of the perpendicular coordinates to the total volume. This also guarantees that, from the bottom-up point of view, the gluing of single monomials in \eqref{eq.2saxpoly} is not obstructed.

Regarding the light towers, we we will borrow the results from \cite[Section 3]{Grieco:2025bjy} on the mass scales for the leading towers of states and refer to that paper for more details. As explained there, the string oscillator modes, the KK tower associated to the $\mathcal{V}_X$  and the M-theory interval KK modes  give rise to vectors that do not slide, since they are well defined over the whole saxionic cone:
\begin{equation}
\vec{\zeta}_{\rm osc}=\left(\frac{1}{\sqrt{2}},0,0\right)\,,\quad \vec{\zeta}_{\rm KK,X}=\left(\frac{1}{\sqrt{2}},\frac{1}{\sqrt{6}},0\right)\,,\quad \vec{\zeta}_{\rm KK,M\text{-}th}=\left(0,\sqrt{\frac{3}{2}},0\right)
\end{equation}

The only towers sliding from one growth sector to the other are the ones associated to 4- and 2- cycles, because the fibration structure might change between sectors. Our interest will focus in the KK towers associated to the $[\omega_1]$ divisor, with volume $\mathcal{V}_1=\frac{1}{2}\kappa_{1ab}s^as^b=\frac{1}{2}\kappa_{111}(s^1)^2+\kappa_{112}s^1s^2+\frac{1}{2}\kappa_{122}(s^2)^2$, as well as the tower resulting from the bound states of the M-theory interval KK modes and the wrapped M2-branes towers on the curve $\mathcal{C}_{22}=[\omega_2]\cap[\omega_2]$, with $\mathcal{V}_{22}=\kappa_{122}s^1$. Since $X$ is not a direct product, the fibration structure changes depending on the direction along moduli space, with the associated $\zeta$-vectors given by\footnote{The mass scale of the two towers is given by \cite[Section 3.1]{Grieco:2025bjy}
\begin{equation}
\frac{m_{\rm KK,[\omega_1]}}{M_{\rm Pl,4}}\sim(s^0)^{-1/2}\mathcal{V}_1^{-1/4}\,,\qquad \frac{m_{\rm KK,M\text{-}th+M2}}{M_{\rm Pl,4}}\sim (\mathcal{V}_{22})^{1/2}\mathcal{V}_X^{-1/2}\,.
\end{equation}
The multi-term expression of the volume forms results in the following non-trivial components along the directions transverse to the overall volume:
\begin{equation}
	\hat{f}(\alpha)=\frac{\alpha  \sqrt{\alpha\kappa_{122}(\alpha\kappa_{122}+\kappa_{112})+\kappa_{112}^2-\kappa_{111} \kappa_{122}}}{\sqrt{3} \left(2 \alpha ^2 \kappa_{122}+4 \alpha  \kappa_{112}+2 \kappa_{111}\right)}\,,\quad \hat{g}(\alpha)=\frac{2 \alpha  \kappa_{122}+\kappa_{112}}{2 \sqrt{3} \sqrt{\alpha\kappa_{122}(\alpha\kappa_{122}+\kappa_{112})+\kappa_{112}^2-\kappa_{111} \kappa_{122}}}
\end{equation}
}

\begin{subequations}\label{eq.sliding.general}
\begin{align}
	\vec{\zeta}_{\rm KK,[\omega_1]}=\left(\frac{1}{\sqrt{2}},\frac{1}{\sqrt{6}},\hat{f}(\alpha)\right)\,,\; \text{with}\; \hat{f}(\alpha)\approx\left\{\begin{array}{ll}
	0&\text{for }\alpha\to 0\\
	\frac{1}{2\sqrt{3}}&\text{for }\alpha\to\infty
	\end{array}\right.\\
	\label{eq.sliding.gen.2}
	\vec{\zeta}_{\rm KK,M\text{-}th+M2}=\left(0,\sqrt{\frac{2}{3}},\hat{g}(\alpha)\right)\,,\; \text{with}\;\, \hat{g}(\alpha)\approx\left\{\begin{array}{ll}
	\frac{\kappa_{112}}{2\sqrt{3}\sqrt{\kappa_{112}^2-\kappa_{111}\kappa_{122}}}&\text{for }\alpha\to 0\\
	\frac{1}{\sqrt{3}}&\text{for }\alpha\to\infty
	\end{array}\right.
	.\end{align}
\end{subequations}
in flat coordinates, where we are taking $s^2\sim\alpha s^1$, see Figure \ref{f.slidingex}.

 \begin{figure}[ht!]
\begin{center}
\includegraphics[width=0.5\textwidth]{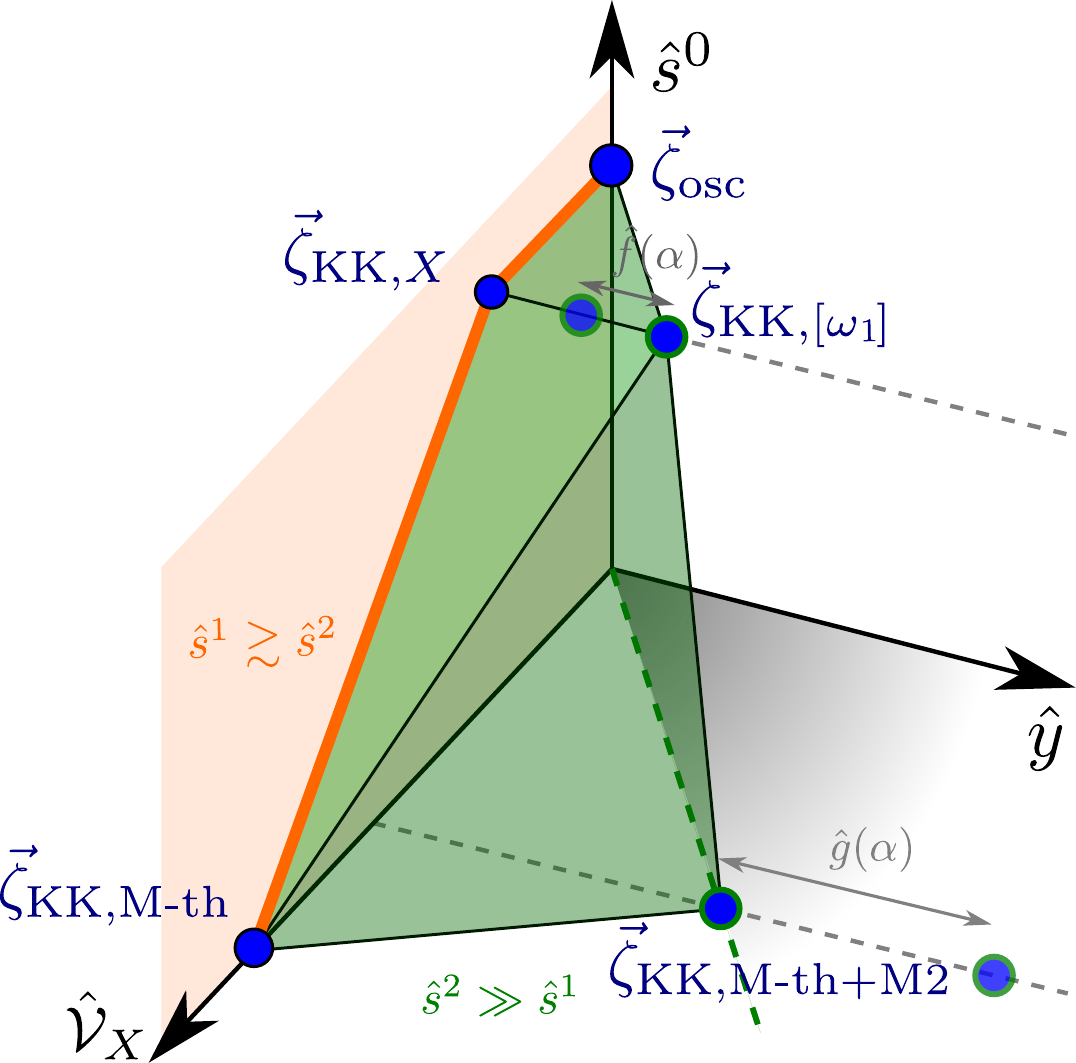}
\caption{Sliding of towers associated to fibrations of CY three-fold with volume \eqref{eq.2saxpoly}, in canonical coordinates, with saxionic plane given by the canonical volume $\hat{
\mathcal{V}}_X$ and its perpendicular direction $\hat{y}$. The two growth sectors $\hat{s}^1\gtrsim \hat{s}^2$ and $\hat{s}^2\geq \hat{s}^1$ are depicted in orange and green respectively. Towers on the orange plane do not slide, while on the green polytope they do according to \eqref{eq.sliding.general}. Notice how, when sliding from the green to orange sector, $	\vec{\zeta}_{\rm KK,[\omega_1]}$ collapses onto the total volume tower while $\vec{\zeta}_{\text{ KK,M-th+M2}}$ becomes subleading. 
\label{f.slidingex}}
\end{center}
\end{figure}

We then have that finite $\alpha$ interpolates between the following two limiting behaviors:
\begin{itemize}
	\item[$\alpha\to\infty$]: In this case we find ourselves in the bulk of the $\mathcal{R}_{2,1}=\{s^2\gg s^1\}$ growth sector, the one we are starting from, with $\mathcal{V}_X\sim \frac{1}{2}\kappa_{122}s^1(s^2)^2$, $\mathcal{V}_1\sim\frac{1}{2}\kappa_{122}(s^2)^2$ and $\mathcal{V}_{22}=\kappa_{122}s^1$. The different relevant submanifolds can be rescaled independently. The length of the $\zeta$-vectors is $|\vec{\zeta}_{\rm KK,[\omega_1]}|=\frac{\sqrt{3}}{2}$ and $|\vec{\zeta}_{\rm KK,M\text{-}th+M2}|=1$, the expected length for KK towers respectively decompactifying 4 and 2 internal dimensions.
	\item[$\alpha\to0$]: Here we are sliding to the $\mathcal{R}_{1,2}=\{s^1\gg s^2\}$ growth sector, with volumes $\mathcal{V}_X\sim \frac{1}{6}\kappa_{111}(s^1)^3$, $\mathcal{V}_1\sim\frac{1}{2}\kappa_{111}(s^1)^2$ and $\mathcal{V}_{22}=\kappa_{122}s^1$, all proportional to each other. All dependence of $s^2$ becomes subleading and the moduli space becomes effectively two dimensional, parametrized by $s^0$ and $\mathcal{V}_X$. This means that one cannot rescale independently the $[\omega_1]$ and $[\omega_2]\cap[\omega_2]$ cycles, which grow at the same rate as the whole $X$. This is very clear in the norm $|\vec{\zeta}_{\rm KK,[\omega_1]}|=\sqrt{\frac23}=\sqrt{\frac{2+6}{2\cdot 6}}$, i.e., what is expected from a KK tower decompactifying 6 dimensions. Indeed the vector collapses to $\vec{\zeta}_{\rm KK,[\omega_1]}=\vec{\zeta}_{\rm KK,X}=\left(\frac{1}{\sqrt{2}},\frac{1}{\sqrt{6}},0\right)$. Something different occurs for $\vec{\zeta}_{\rm KK,M\text{-}th+M2}$, since it involves an M2-brane wrapped on the curve $[\omega_2]\cap[\omega_2]$, which in this growth sector does not become small relatively to the overall volume. The tower $\vec{\zeta}_{\rm KK,M\text{-}th+M2}$ is then subleading with respect to $\vec{\zeta}_{\rm KK,M\text{-}th}$, which becomes the leading tower in this regime, but remains lighter than the species scale $M_{\rm Pl,5}$. As a subleading tower, the vector $|\vec{\zeta}_{\rm KK,M\text{-}th+M2}|=\sqrt{\frac{2}{3}+\hat{g}(\alpha)^2}$ does not follow the taxonomy rules from \cite{Etheredge:2024tok} (see \eqref{eq. taxonomy} later), but will still follow the Integral Scaling Relation \eqref{scalingweight}.
	
	\end{itemize}
	It is interesting to notice that, depending on whether the inequality \eqref{eq. ineq} is saturated or not, the $\zeta$-vector slides to infinity and disappears or remains at finite distance. 
This can be see by noting that when $\kappa_{111}\kappa_{122}=\kappa_{112}^2$, $\hat{g}(\alpha)$ in \eqref{eq.sliding.gen.2} diverges as $\alpha\to 0$. In this case the metric $\mathsf{G}_{ij}=-\frac{1}{2}\partial_i\partial_j\log \mathcal{V}_X$ has
	\begin{equation}
		\det\mathsf{G}=\frac{3}{4\mathcal{V}_X}\left[(\kappa_{112}^2-\kappa_{111}\kappa_{122})+\tfrac{s^2}{s^1}\kappa_{112}\kappa_{122}+\Big(\tfrac{s^2}{s^1}\Big)^2\kappa_{122}^2\right]\,,
	\end{equation}
	so that the moduli space metric becomes degenerate if $\kappa_{111}\kappa_{122}=\kappa_{112}^2$ and $\frac{s^2}{s^1}=\alpha\to 0$ (even if we keep the overall volume $\mathcal{V}_X$ constant). We see that in this case, along these limits not only the moduli space metric gets effectively dominated by $s^1$, but actually the first subleading term depending in $s^2$ disappears.\footnote{This can also be seen from
	\begin{equation}
	\mathsf{G}_{22}=\frac{9\kappa_{112}^2-6\kappa_{111}\kappa_{122}}{2(s^1)^2\kappa_{111}^2}-\frac{27\kappa_{112}}{(s^2)^2\kappa_{111}^3}\alpha(\kappa_{112}^2-\kappa_{111}\kappa_{122})+\mathcal{O}(\alpha^3)\,,
	\end{equation}
	where the dependence on $\alpha$ (i.e., $s^2$) is higher order for the saturation of $\kappa_{112}^2\geq\kappa_{111}\kappa_{122}$.
	} It is expected that this degeneracy is lifted after the inclusion of subleading corrections to \eqref{eq.kahler cy}.

This saturation of the inequality occurs in famous examples such as the widely known CY threefold $\mathbb{P}^{(1,1,1,6,9)}\left[18\right]$ \cite{Candelas:1994hw}, for which the only non-vanishing intersection numbers are $\kappa_{111}=9$, $\kappa_{112}=3$ and $\kappa_{122}=\kappa_{212}=\kappa_{221}=1$, which indeed saturate  $\kappa_{111}\kappa_{122}=\kappa_{112}^2$.
While the above example shows that one can find threefolds saturating $\kappa_{111}\kappa_{122}\leq\kappa_{112}^2$, we do not expect this to be always the case. A check in the Kreuzer-Skarke database \cite{Kreuzer:2000xy, Kreuzer:2000qv} of the 36 three-folds which are hypersurfaces in toric varieties and have two K\"ahler  moduli ($h^{(1,1)}=2$) reveals that there are multiple examples where the inequality is strict at least for some combination of monomials/intersection numbers.

As mentioned above, the inequality \eqref{eq. ineq} in the intersection numbers comes from Hodge index theorem, and is equivalent to the positivity of the K\"ahler metric for this class of CY threefolds. In these examples, \eqref{eq. ineq} guarantees that the tower rates in \eqref{eq.sliding.general} are well defined, and that the convex hulls of the tower vectors in the two growth sectors can be ``glued'' to each other via the sliding of the $\zeta$-vectors perpendicularly to the asymptotic direction.
Similar inequalities should apply to general constructions with three or more moduli. As the number of monomials and saxions in the volume polynomial grows, more involved inequalities on the intersection numbers are expected to come from Hodge index theorem/positivity of the metric (see \cite{NaomiLuca}). Similarly, one would be able to derive involved expressions for the sliding functions of the KK towers.\\

In summary, we have seen that from a top-down perspective, if we assume that the sliding towers are associated to KK modes or branes wrapping internal cycles of the compact manifold, the various growth sectors are associated to different emergent fibration structures with a parametric separation between base(s) and fiber(s). This difference in the growth rates of cycles precisely allows us to obtain light towers with different scaling rates. Thus, when interpolating from one growth sector to another, the $\zeta$-vectors of the light towers can experience two behaviors: \textbf{(A)} they collapse on top of another tower $\vec{\zeta}_{{\rm KK},\mathcal{C}_n}\to \vec{\zeta}_{{\rm KK},\mathcal{D}_m}$, as now the associated cycles $\mathcal{C}_n$ and $\mathcal{D}_m$ have the same moduli dependence and grow at the same rate, rather than independently as before, or \textbf{(B)} the tower now becomes subleading, and the $\zeta$-vector projects in the interior of the convex hull. Since the sliding is perpendicular to the direction we are moving in moduli space, precisely given by $\vec{\zeta}_{\mathcal{T}}$ for EFT string limits, the Integral Scaling Relation holds between tower and EFT string. However, since the tower is now subleading, the perpendicular components need not follow the taxonomy rules on the vector length \eqref{eq.length}
 \cite{Etheredge:2022opl,Etheredge:2024tok}. The end-point of the sliding depends on the particular details of the tower and the moduli space metric and indeed, if the later becomes degenerate (at least at leading order in the perturbative corrections), the $\zeta$-vector can even slide to infinity. Regarding towers coming from the oscillator modes of fundamental strings, they are not affected by the change of fibration structure, since their tension follows a universal relation with the 4d dilaton, $\mathcal{T}_{\rm F1}\sim M_{\rm Pl,4}^2e^{2\phi_4}$.

\subsection*{General picture for two saxions and higher dimensional moduli spaces}
We can now expand the above discussion to general K\"ahler potentials with two saxions, of the form $K=-\log P(s^1,s^2)$. As anticipated above, we first need to select those $P(s^1,s^2)$ resulting in a positive definite metric $\mathsf{G}_{ij}=\frac{1}{2}\partial_{s^i}\partial_{s^j}K$, that from the bottom-up point of view is necessary (but not sufficient, as we will see) for the towers to be able to connect from one growth sector to the other. Then we also need to discard those polynomials for which the tower arrangements of the different growth sectors cannot be glued together, which requires having the same tower arrangement along the interface between growth sectors. Taking a given homogeneous polynomial of degree $p$,
\begin{equation}\label{eq. hom pol}
P(s^1,s^2)=\sum_{k=0}^pa_k(s^1)^k(s^2)^{p-k}
\end{equation}
we have that in the given growth sectors \eqref{eq.growth sector}, the polynomial is approximated as
\begin{equation}
	P(s^1,s^2)\approx \left\{\begin{array}{ll}
	a_{k_{\rm max}}(s^1)^{k_{\rm max}}(s^2)^{p-{k_{\rm max}}}&\text{for } s^1\gg s^2\\
	a_{k_{\rm min}}(s^1)^{k_{\rm min}}(s^2)^{p-{k_{\rm min}}}&\text{for } s^1\ll s^2
	\end{array}\right.\,,
\end{equation}
where $k_{\rm min}$ and $k_{\rm max}$ are the minimum and maximum values of the set $\{k\,:\,a_k\neq 0\}$. The above two terms are those that will set the tower arrangement in the given growth sectors, while the additional intermediate terms will be subleading, and thus irrelevant, save for the interface $s^1\sim s^2$, where $P(s^1,s^2)$ grows homogeneously. Thus, in order for the $(s^1)^{k_{\rm max}}(s^2)^{p-{k_{\rm max}}}$ and $(s^1)^{k_{\rm min}}(s^2)^{p-{k_{\rm min}}}$ terms to ``glue'' consistently, they need to allow for the same tower arrangement on the interface $s^1\sim s^2$, see Figure \ref{fig.growthGLUE}. As an example, in the interface between the growth sectors with $(s^1)^4s^2$ there are no towers (as expected from  Table \ref{tab:1sax}), which means that such term cannot be glued with another, while it is consistent on its own. 

\begin{figure}
\begin{center}
\includegraphics[width=\textwidth]{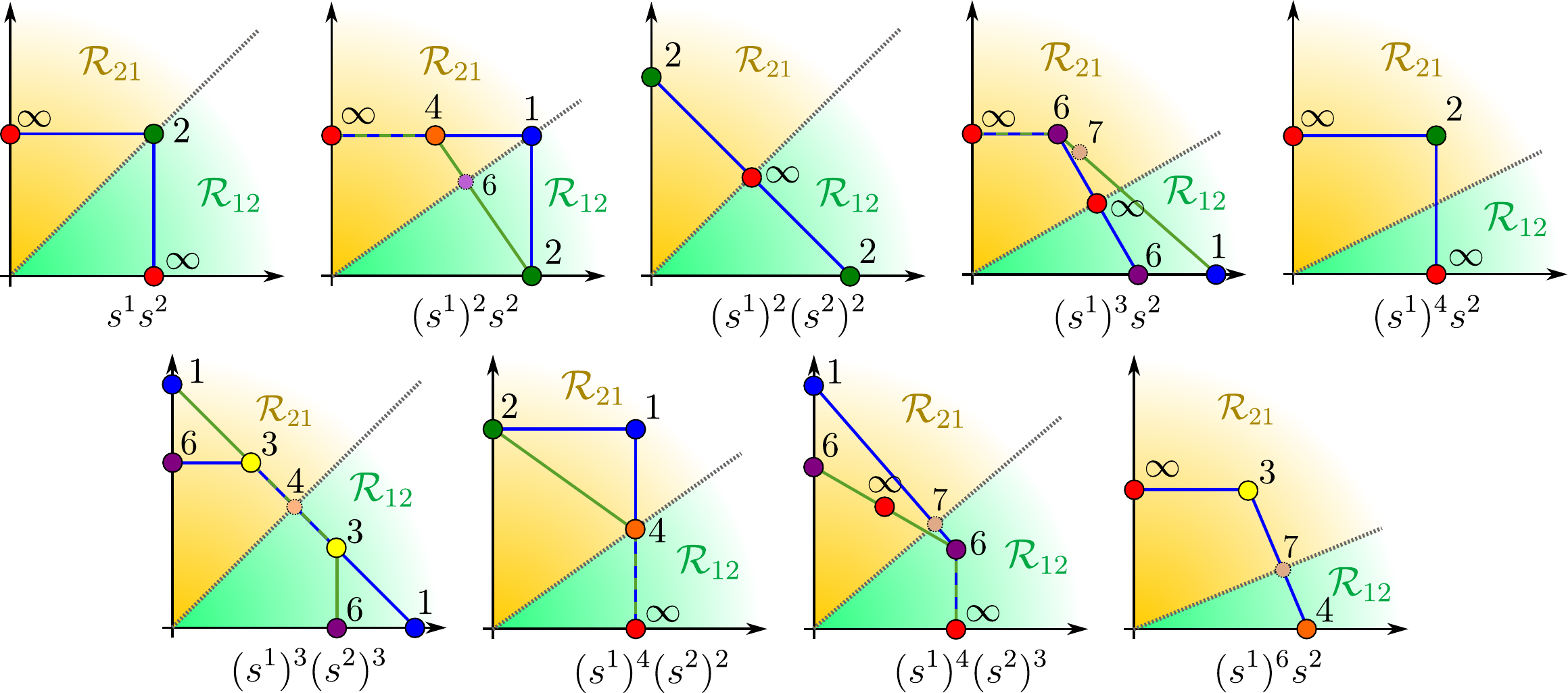}
\caption{Possible tower arrangements consistent with $K\sim -\log (s^1)^k(s^2)^{p-k}$, with $2\leq p\leq 7 $ and $1\leq k\leq 6$ (the towers arrangements for a single modulus appear in Table \ref{tab:1sax}). Notice the different coloring when there is more than one consistent tower arrangement. In green and yellow we depict the growth sectors $\mathcal{R}_{12}=\{s^1\gg s^2\}$ and $\mathcal{R}_{21}=\{s^2\gg s^1\}$, together with their interface in gray. All quantities are depicted in canonically normalized coordinates.
\label{fig.growthGLUE}}
\end{center}
\end{figure}

Thus, given a degree $p$ of our homogeneous polynomial \eqref{eq. hom pol}, we can study if the gluing condition imposes additional constrains on their leading terms along the two growth sectors. For $p=2$, 3 and 4 there is no surprise: all combinations are consistent. This is evident from Figure \ref{fig.growthGLUE} and Table \ref{tab:1sax}, as all monomials of said degrees are allowed and have the same tower along the interface between growth sectors. In Figure \ref{f.glue} all the polytopes resulting from gluing of $p=3$ two-saxion monomials are depicted.
    \begin{figure}
\begin{center}
\begin{subfigure}[b]{0.49\textwidth}
\captionsetup{width=\linewidth}
\center
\includegraphics[width=\textwidth]{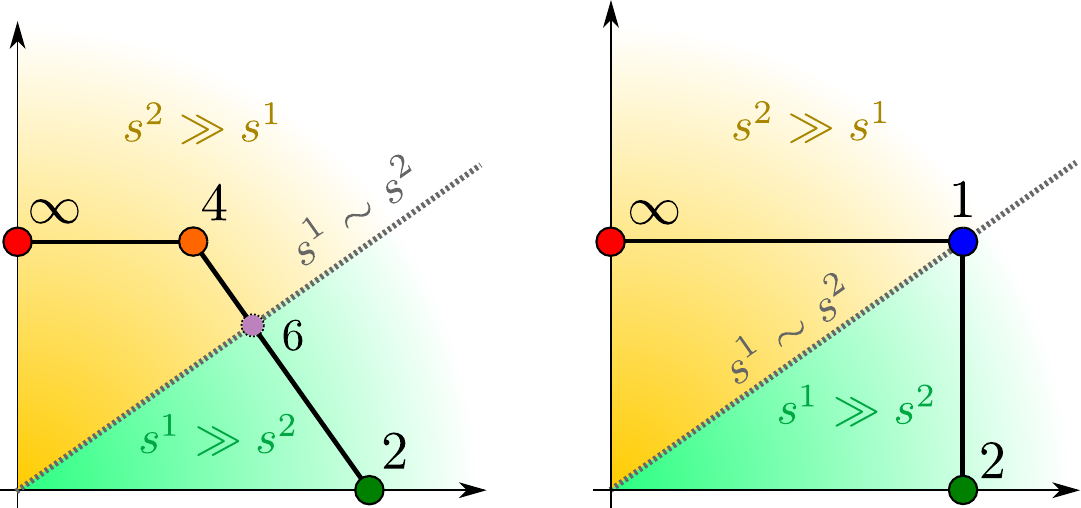}
\caption{\hspace{-0.3em} $P(s)=3\kappa_{112}(s^1)^2s^2$} \label{f.glue-1}
\end{subfigure}\hfill
\begin{subfigure}[b]{0.49\textwidth}
\center
\includegraphics[width=\textwidth]{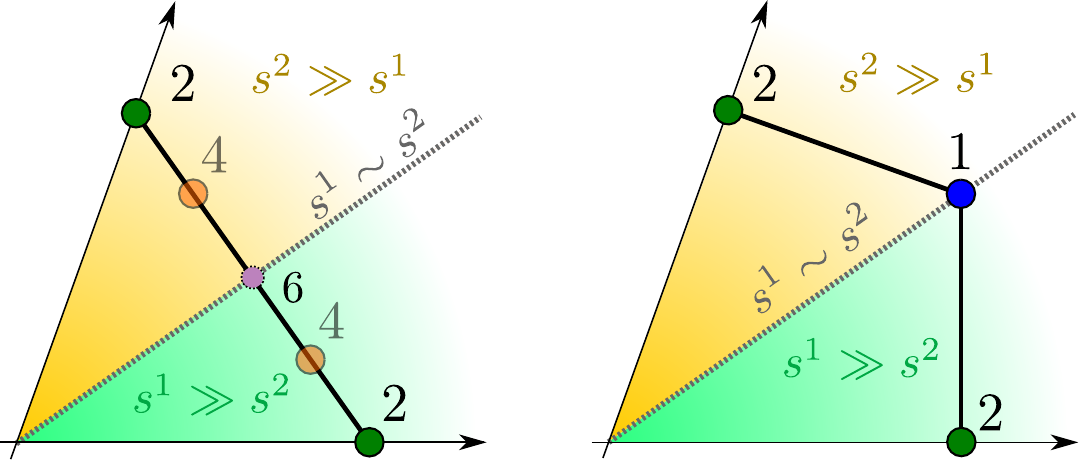}
\caption{\hspace{-0.33em}  $P(s)=3\kappa_{112}(s^1)^2s^2+3\kappa_{122}s^1(s^2)^2$} \label{f.glue-2}
\end{subfigure}
\begin{subfigure}[b]{0.49\textwidth}
\center
\includegraphics[width=\textwidth]{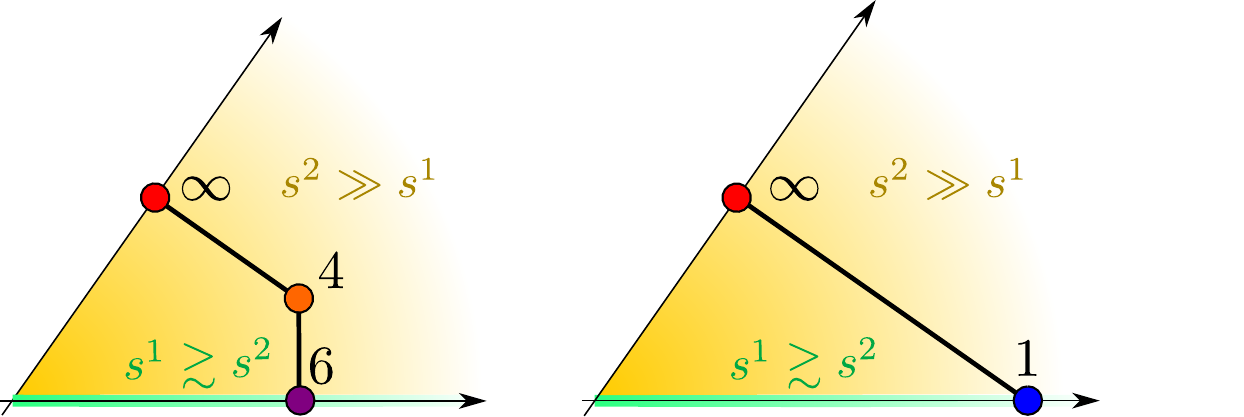}
\caption{\hspace{-0.33em}  $P(s)=\kappa_{111}(s^1)^3+3\kappa_{112}(s^1)^2s^2$} \label{f.glue-3}
\end{subfigure}
\begin{subfigure}[b]{0.49\textwidth}
\captionsetup{width=\linewidth}
\center
\includegraphics[width=\textwidth]{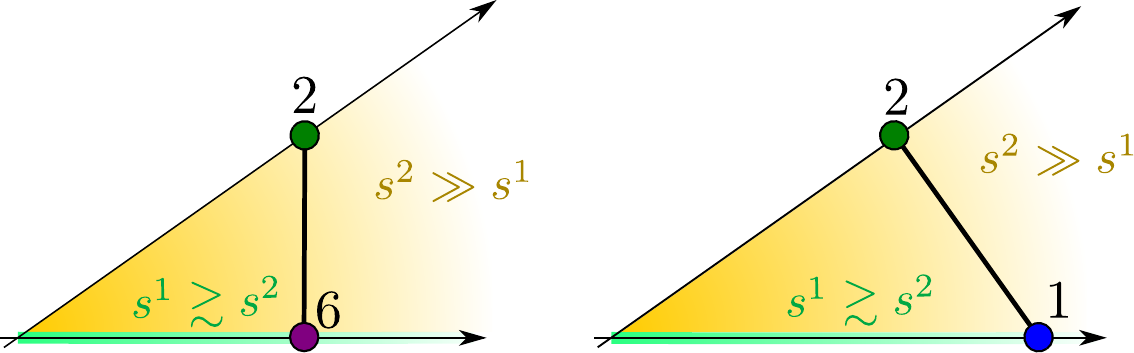}
\caption{\hspace{-0.3em} $P(s)\kappa_{111}(s^1)^3+3\kappa_{112}(s^1)^2s^2+3\kappa_{122}s^1(s^2)^2$} \label{f.glue-4}
\end{subfigure}
\begin{subfigure}[b]{0.49\textwidth}
\center
\includegraphics[width=\textwidth]{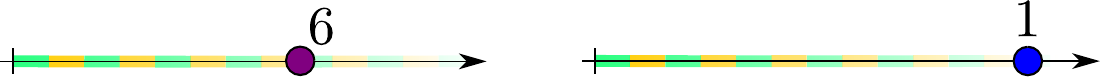}
\captionsetup{width=1.5\linewidth}\caption{\hspace{-0.3em} $P(s)=\kappa_{111}(s^1)^3+3\kappa_{112}(s^1)^2s^2+3\kappa_{122}s^1(s^2)^2+\kappa_{222}(s^2)^3$} \label{f.glue-5}
\end{subfigure}
\caption{
Different consistent arrangements after ``gluing'' together the different growth sectors for a K\"ahler potential $K\sim-\log P(s)$, with $P(s)=\kappa_{ijk}s^is^js^k$ a degree-3 homogeneous polynomial on $s^1$ and $s^2$. Note that for the first option in Figure \ref{f.glue-2} the $\zeta$-vectors of the KK-4 towers point out of their respective growth sectors, and thus are never leading towers. They are needed, however, in order to obtain the bounded states of the KK-6 towers, sliding as we move from one growth sector to the other. Note that for Figure \ref{f.glue-5} the moduli space is effectively 1-dimensional in the asymptotic regions, with all trajectories being parallel.
\label{f.glue}}
\end{center}
\end{figure}

Now, as one goes to $p=5$ the situation changes. From Table \ref{tab:1sax} we can see that $(s^1)^5$ does not admit consistent towers, while the only degree-5 monomial that does in Figure \ref{fig.growthGLUE}, $(s^1)^4s^2$, does not admit towers along the interface between growth sectors. While in principle polynomials such as $P(s^1,s^2)=a_1s^1(s^2)^4+{\color{gray}a_2(s^1)^2(s^2)^3+a_3(s^1)^3(s^2)^2}+a_4(s^1)^4s^2$ can result in a positive-definite metric, since there is no consistent gluing between the growth sectors there it would not be admissible for our assumptions. This results in 

\begin{mdframed}
\centering
The only consistent K\"ahler potentials $K=-\log P(s^1,s^2)$ with $\deg P(s^1,s^2)=5$ are $K=-\log\big[a_4(s^1)^4s^2\big]$ and $K=-\log\big[a_1s^1(s^2)^4\big]$, no other polynomial with more than one term is allowed.
\end{mdframed}
 Similarly, we can see from Figure \ref{fig.growthGLUE} that $(s^1)^5s^2$ and $(s^1)^5(s^2)^2$ are not allowed as independent monomials for degrees $p=6$ and 7, respectively. This means that we cannot choose them as leading terms in either the $s^1\gg s^2$ or $s^2\gg s^1$ growth sectors. All the other degree-6 (degree-7) monomials that are allowed individually can be glued together (provided the moduli space metric is positive definite) into a consistent polynomial, since they all have a KK-4 (KK-7) tower on the interface between growth sectors. Thus, we conclude that:
\begin{mdframed}
\centering
There is no admissible K\"ahler potential $K=-\log P(s^1,s^2)$ with $\deg P(s^1,s^2)=6$ and $(s^1)^5s^2$ or $s^1(s^2)^5$ as leading terms in $\mathcal{R}_{12}$ or $\mathcal{R}_{21}$. All other combinations with positive definite metric are allowed.
\end{mdframed}
\begin{mdframed}
\centering
There is no admissible K\"ahler potential $K=-\log P(s^1,s^2)$ with $\deg P(s^1,s^2)=7$ and $(s^1)^5(s^2)^2$ or $(s^1)^2(s^2)^5$ as leading terms in $\mathcal{R}_{12}$ or $\mathcal{R}_{21}$. All other combinations with positive definite metric are allowed.
\end{mdframed}
Note that the above excludes seemingly fine K\"ahler potentials such as $P(s^1,s^2)=a_6 (s^1)^6+a_5(s^1)^5s^2$ \emph{or} $P(s^1,s^2)=a_6(s^1)^6s^2+\dots+ a_2(s^1)^2(s^2)^5$.\\

Since they never become leading, our approach does not put constrains on the intermediate monomials between the leading ones. In general they might have to be turned on in order to have a positive definite metric. Luckily for us, such problem has received great attention on the Mathematics literature. A sufficient condition for  $\mathsf{G}_{jk}=-\frac{1}{2}\partial_{s^j}\partial_{s^k}\log P(s)$ to be positive definite is $P(s)$ being a \emph{Lorenztian polynomial}, i.e., its Hessian having signature $(+,-,\dots,-)$ (for two saxions $(+,-)$). A given homogeneous polynomial on $s^1$ and $s^2$ of degree $p$ as in \eqref{eq. hom pol} is Lorentzian iff the following conditions are met \cite[Theorem 2.25]{branden2020lorentzian}
\begin{enumerate}
	\item The sequence $(a_0,\dots,a_p)$ is non-negative.
	\item The sequence $(a_0,\dots,a_p)$ is \emph{ultra-$log$ concave}, i.e.,
	\begin{equation}\label{eq.ultralog}
		\frac{a_k^2}{\binom{p}{k}^2}\geq \frac{a_{k-1}}{\binom{p}{k-1}}\frac{a_{k+1}}{\binom{p}{k+1}}\;\Leftrightarrow\; a_k^2\geq \frac{(k+1)(p-k+1)}{k(p-k)}a_{k-1}a_{k+1}\;\forall\; k=1,\dots p-1\,.
	\end{equation}	 
	\item The sequence $(a_0,\dots,a_p)$ has no internal zeroes, i.e.,
	\begin{equation}
		a_{i}a_j>0\Longrightarrow a_k>0\;\forall\;	0\leq i<k<j\leq p\,.
	\end{equation}
\end{enumerate}
Note that the second condition only refers to the numerical value of the polynomial coefficients, and is equivalent to condition \eqref{eq. ineq} for $p=3$, while the third forces intermediate monomials to appear once we know which ones dominate. We must stress that the above requirements are \emph{sufficient}, but not necessary. Take for example $P(s^1,s^2)=a_4(s^1)^4(s^2)^2+a_2(s^1)^2(s^2)^4$, which results in a positive definite metric for any $a_4,\, a_2>0$. However, in general the relations on the coefficients become quite involved, and there are no general results on them.\\

The generalization to more than two moduli is relatively straightforward. Modulo the polynomial $P(s^1,\dots, s^n)$ resulting in a positive-definite moduli space metric (see \cite{branden2020lorentzian} for the characterization of Lorentzian polynomials with more than 2 variables), only those whose leading terms have the same tower arrangement along the interface between growth sectors are allowed. Since now the interface will be a $(n-1)$-dimensional hyperplane, checking all the possible arrangements becomes more involved, which together with the number of possible monomials growing as $\binom{n+p-1}{p}$, makes the classification doable but computationally involved. We will thus not consider such task in this paper. In any case, the above examples already serve as a proof of principle that a consistent gluing of the towers between different growth sectors can rule out certain seemingly fine K\"ahler potentials.
\subsection{Interplay with taxonomy rules\label{ss.taxonomy}}

In the previous sections, we made use of the assumptions laid out in Section \ref{sec.reconstructionalg} involving the scaling of light towers of states and EFT strings to reconstruct the consistent arrangements of towers/duality frames in each growth sector. At the same time, the reconstruction algorithm we laid out includes in its assumptions the ESC imposed both on individual towers (imposing their exponential rates) and on their bound states. It is then natural to ask what is the interplay between the arrangements of towers obtained in this paper and those obtained in  \cite{Etheredge:2024tok} using the taxonomy rules for light towers, as the latter were also derived from assuming the ESC.

In \cite{Etheredge:2024tok}, it was shown that, assuming that the Emergent String Conjecture \cite{Lee:2019wij} holds recursively upon decompactification, the following relations between $\zeta$-vectors can be inferred:\footnote{Similarly as in our approach, in \cite{Etheredge:2024tok} it was assumed that decompactification is to a Minkowski vacuum, in such a way that possible internal warping in the internal manifold gets diluted as the internal space grows. When this is not the case, and decompactification occurs to a warped spacetime, the exponential rates at which towers become light change, see e.g., \cite{Etheredge:2023odp}. We refer to \cite{Raucci:2026fzp} for a generalization of the Taxonomy rules when this is taken into account.}
\begin{equation}\label{eq. taxonomy}
\vec{\zeta}_{{\rm KK},n}\cdot\vec\zeta_{{\rm KK}',m}=\frac{1}{d-2}+\frac1n\delta_{nm}=\frac{1}{2}+\frac1n\delta_{nm}\quad\text{for }\,d=4\,,
\end{equation}
where $n$ and $m$ are the numbers of decompactifying dimensions for each light tower, and $n=\infty$ formally corresponds to the case of string oscillator modes. These were denoted as \emph{taxonomy rules}. Moreover, assuming that the $\zeta$-vectors are globally defined along the asymptotic regions of moduli space and their expression is fixed (so that they do not ``slide'' as we change the asymptotic trajectory), and that our moduli space admits some flat slice, it is possible to use the above taxonomy rules to obtain a finite set of solutions describing possible global arrangements of light towers/duality frames, similar to what we did in previous sections by imposing the Integral Scaling Relation to hold globally. 

In the following, we will give an answer to the following questions: are the taxonomy rules \eqref{eq. taxonomy}  and the scaling relations \eqref{eq.intscaling} equivalent in 4d $\mathcal{N}=1$ theories? If this is not the case, does one of them imply the other? As we will explain now, the answer to both questions is negative. As the reconstruction algorithm in section \ref{sec.reconstructionalg} involves both the Integral Scaling Relation and taxonomy rules, it will impose stronger constraints on the allowed arrangements than taking them separately.

As explained in \cite{Etheredge:2024tok}, while there is a single global solution covering the complete $\mathbb{R}^n$ for theories in $d=11-n$ dimensions, precisely corresponding to the tower/duality arrangement of M-theory on $\mathbb{T}^n$, there are more possibilities when the number of non-compact moduli is less than $n<11-d$, or the complete $\mathbb{R}^n$ is not covered, but only some (convex cone) subsector of it, with the number of possible arrangements growing fast as we lower $d$. Already in \cite{Etheredge:2023odp} it was shown how some tower/duality arrangements that were solutions of the taxonomy rules \eqref{eq. taxonomy} do not correspond to any known string/M-theory embedding. \\

\begin{figure}[hbt]
\begin{center}
\includegraphics[width=\textwidth]{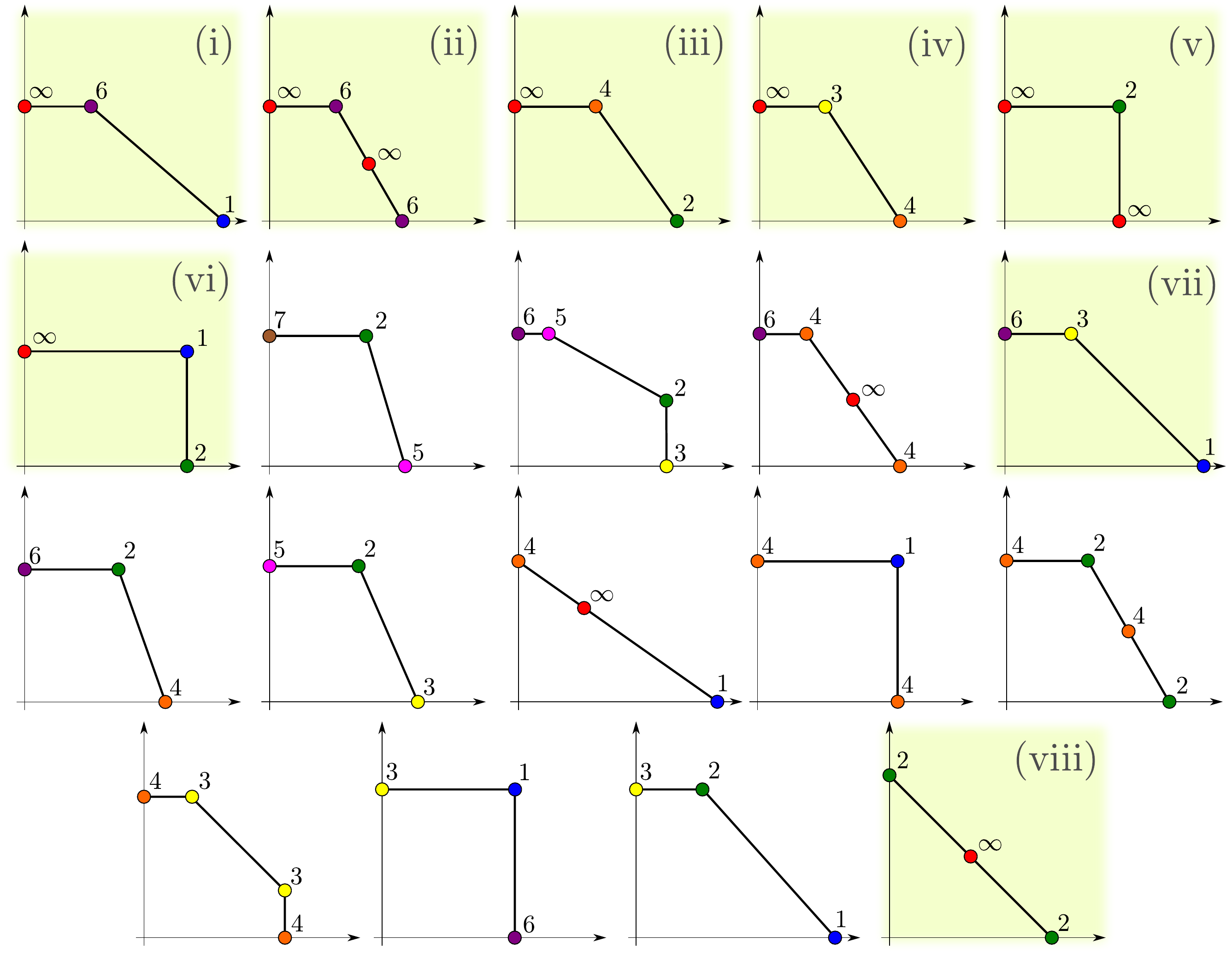}
\caption{Possible solutions to the taxonomy rules filling the first quadrant with two moduli. The number of decompactifying dimensions for every KK tower accompanies each vertex, with $\infty$ corresponding to a (non-necessarily supersymmetric) emergent string limit. In yellow those fulfilling the Integral Scaling Relation with $K\sim -\log P(s^1,s^2)$ and $\deg P(s^1,s^2)\leq 7$ are highlighted.
\label{f.bu2mod}}
\end{center}
\end{figure}

\begin{figure}[hbt]
\begin{center}
\includegraphics[width=\textwidth]{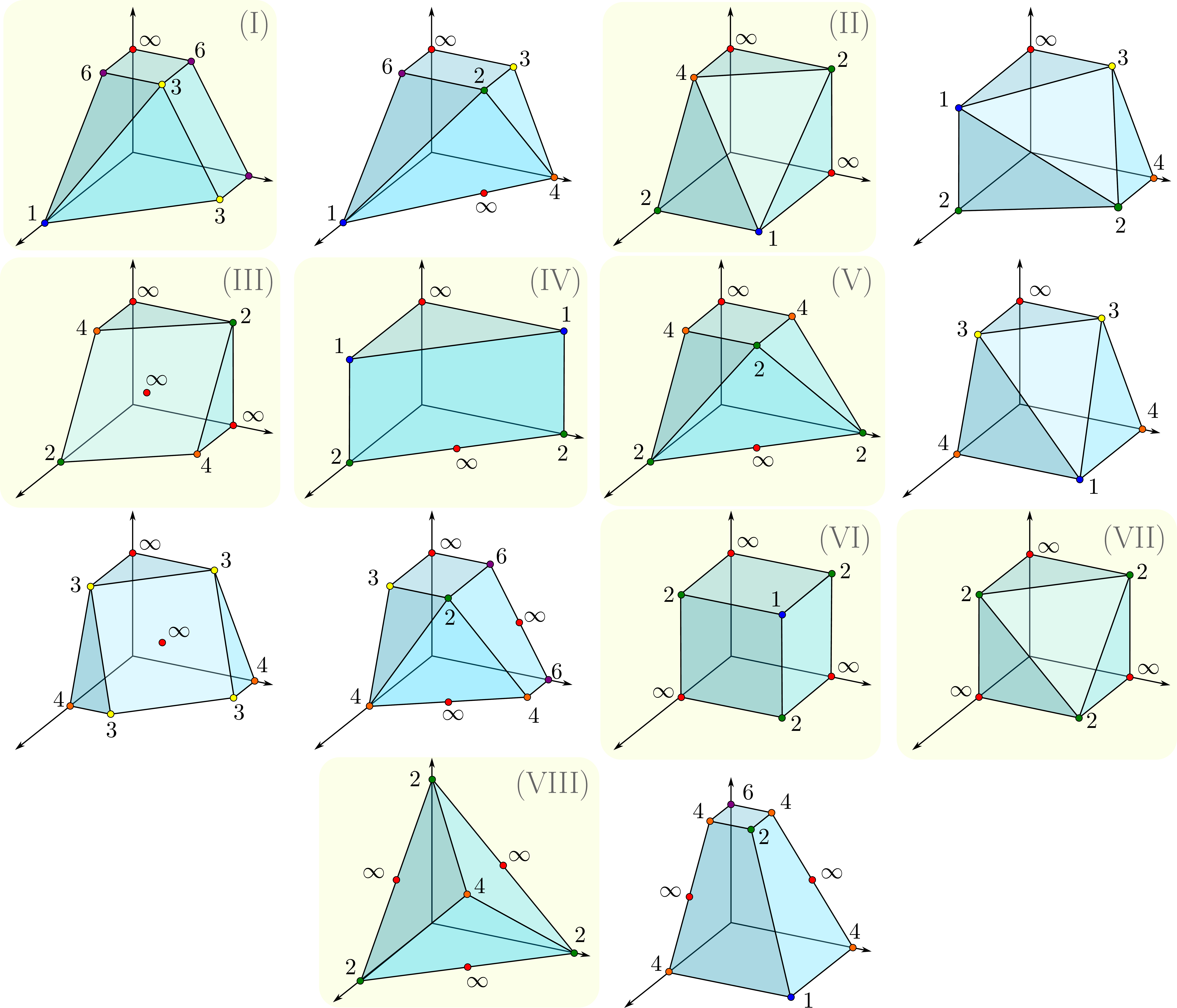}
\caption{Possible solutions to the taxonomy rules filling the first octant with three moduli. The number of decompactifying dimensions for every KK tower accompanies each vertex, with $\infty$ corresponding to a (non-necessarily supersymmetric) emergent string limit. In yellow those fulfilling the Integral Scaling Relation with $K\sim -\log P(s^1,s^2,s^3)$ and $\deg P(s^1,s^2,s^3)\leq 7$ are highlighted.
\label{f.bu3mod}}
\end{center}
\end{figure}

As an illustration, in Figures \ref{f.bu2mod} and \ref{f.bu3mod} we respectively depict the solutions to \eqref{eq. taxonomy} in $d=4$ spacetime dimensions, respectively for 2 and 3 moduli, with the asymptotic directions respectively corresponding to a quadrant/octant of $\mathbb{R}^2$ and $\mathbb{R}^3$, understood as the saxionic cone. We see that there are 19 solution for two moduli and 14 for three. For such solutions, in both cases, only 8 arrangements in both cases are also solutions of the Integral Scaling Relation \eqref{eq.intscaling} for some K\"ahler potential $K\sim -\log P(s^1,s^2)$ and $-\log P(s^1,s^2,s^3)$, with $P({s})$ a monomial of at most degree 7. In practice, this means that only a subset of the convex hulls generated through taxonomy rules are compatible with embedding in a lattice generated by EFT strings. In Tables \ref{tab.bu2mod} and \ref{tab.bu3mod} we show which $P({s})$  monomials correspond to which taxonomy solutions from Figures \ref{f.bu2mod} and \ref{f.bu3mod}.

One could think that, had we allowed $P(s)$ to have arbitrarily high degree and the Integral Scaling Relation to hold for any $w\in\mathbb{Z}_{\geq 0}$, then the lattice \eqref{eq.lambda K} would be very dense, in such a way that eventually all solutions to the taxonomy rules would be consistent with such integral scaling. This turns out to be the case for every convex hull in Figures \ref{f.bu2mod} and \ref{f.bu3mod}. In fact, the Integral Scaling Relation \eqref{eq.intscaling} imposes a Diophantine equation on each lattice point $\vec{\zeta}_{{\rm KK},n}$ (including $\vec{\zeta_{\rm osc}}$ for $n\to\infty$), 
\begin{equation}\label{eq.int.moremod}
\sum_{i=1}^r \frac{w_i^2}{2p_i} = \frac{n+2}{2n}, \ n=1,\dots, 7 \ \text{and} \ \infty,
\end{equation}
which is a generalization of \eqref{eq.int-1mod} to a $r$-dimensional moduli space.\footnote{ The fact that equation \eqref{eq.int.moremod} always has integer solutions has to be checked case by case for $r=2,3$. For $r\geq 4$, the \textit{Hasse–Minkowski theorem} (see, e.g., \cite[Section 3, Theorem 8]{Serre1973ACI}) guarantees that there will always be an integer solution.} Particularly, looking at Table \ref{tab:pw}, we see the same thing happening here: there will always be classes of solutions parametrized by $k_i, \ i=1,\dots,r$, but it is enough to chose $k_i=1 \ \forall i$ as the representative of each class and there will always be a suitable solution to \eqref{eq.int.moremod} for at least one KK rate.

\begin{table}[hbt]
    \centering
\begin{tabular}{|c||c|c|c|c|c|c|c|c|}
\hline 
$K\sim -\log(s^1)^a(s^2)^b$ & (i) & (ii) & (iii) & (iv) & (v) & (vi) & (vii) & (viii) \\ 
\hline \hline
(1,1) &   &  &   &   & \checkmark &   &   &   \\ 
\hline 
(1,2) &   &  & \checkmark &  &  & \checkmark &  &  \\ 
\hline 
(1,3) & \checkmark & \checkmark &  &  &  &  &  &  \\ 
\hline 
(1,4) &  &  &  &  & \checkmark &  &  &  \\ 
\hline \rowcolor{swamp}
(1,5) &  &  &  &  &  &  &  &  \\ 
\hline 
(1,6) &  &  &  & \checkmark &  &  &  &  \\ 
\hline 
(2,2) &  &  &  &  &  & \checkmark &  & \checkmark \\ 
\hline  \rowcolor{swamp}
(2,3) &  &  &  &  &  &  &  &  \\ 
\hline 
(2,4) &  &  & \checkmark &  &  &  &  &  \\ 
\hline \rowcolor{swamp}
(2,5) &  &  &  &  &  &  &  &  \\ 
\hline
(3,3) &  &  &  &  &  &  & \checkmark &  \\ 
\hline 
(3,4) & \checkmark & \checkmark &  &  &  &  &  &  \\ 
\hline 
\end{tabular} 
\caption{Solution to the taxonomy rules \eqref{eq. taxonomy} for the first quadrant parametrized by two flat scalars $s^1$ and $s^2$ that are also consistent for the scaling relation \eqref{e.bottom up integer} with $K\sim -\log(s^1)^a(s^2)^b$ of at most degree 7. To each solution (i) to (viii) we associate the K\"ahler potential realizing such arrangement. Those K\"ahler potentials for which no solution is consistent with taxonomy is depicted in red. \label{tab.bu2mod}}
\end{table}

\begin{table}[hbt]
    \centering
\begin{tabular}{|c||c|c|c|c|c|c|c|c|}
\hline 
$K\sim -\log(s^1)^a(s^2)^b(s^3)^c$ & (I) & (II) & (III) & (IV) & (V) & (VI) & (VII) & (VIII) \\ 
\hline \hline
(1,1,1) &   &  &   &   &  & \checkmark  &\checkmark   &   \\ 
\hline 
(1,1,2) &   &\checkmark  & \checkmark &  &  &  &  &  \\ 
\hline \rowcolor{swamp}
(1,1,3) &  &  &  &  &  &  &  &  \\ 
\hline 
(1,1,4) &  &  &  &  &  & \checkmark & \checkmark &  \\ 
\hline \rowcolor{swamp}
(1,1,5) &  &  &  &  &  &  &  &  \\ 
\hline 
(1,2,2) &  &  &  &  \checkmark &  \checkmark &  &  &  \\ 
\hline \rowcolor{swamp}
(1,2,3) &  &  &  &  &  &  &  &  \\ 
\hline 
(1,2,4) &  & \checkmark & \checkmark &  &  &  &  &  \\ 
\hline 
(1,3,3) & \checkmark &  &  &  &  &  &  &  \\ 
\hline 
(2,2,2) &  &  &  &  &  &  &  &  \checkmark \\ 
\hline \rowcolor{swamp}
(2,2,3) &  &  &  &  &  &  &  &  \\ 
\hline 
\end{tabular} 
\caption{Solution to the taxonomy rules \eqref{eq. taxonomy} for the first octant parametrized by three flat scalars $\{s^1,s^2,s^3\}$ that are also consistent for the scaling relation \eqref{e.bottom up integer} with $K\sim -\log(s^1)^a(s^2)^b(s^3)^c$ of at most degree 7. To each solution (I) to (VIII) we associate the K\"ahler potential realizing such arrangement. Those K\"ahler potentials for which no solution is consistent with taxonomy is depicted in red. \label{tab.bu3mod}}
\end{table}

In summary, around half of the solutions to the taxonomy rules do not admit a (sensible) choice of K\"ahler potential for which the tower arrangement and the EFT strings satisfy the Integral Scaling Relation if the degree of the polynomial is bounded by $\deg P(s)\leq 7$.

On the other hand, as previously discussed in Section \ref{sec.constrain kahler}, there are some seemingly fine choices of $K\sim -\log P(s)$ (as in they result in a positive-definite moduli space metric) for which the lattice \eqref{eq.lambda K} does not define convex hulls consistent the ESC and the Integral Scaling Relation (and thus with the taxonomy rules \eqref{eq. taxonomy}). In other words, there are seemingly consistent tower arrangements that do not admit a K\"ahler potential and vice versa.

We finish by noticing that, even if enough individual points of the lattice \eqref{eq.lambda K} might be consistent to define a global convex hull, this does not mean that the global arrangement itself is consistent, as we illustrate in Figure \ref{fig.FISHY}. Both the (IV) and (V) arrangements in Figure \ref{f.bu3mod} have a $\mathbb{Z}_2$ symmetry, associated to the switching of two saxionic directions. The fixed plane under such symmetries contains the same arrangements of towers $\vec{\zeta}_{\rm osc}$, $\vec{\zeta}_{\rm KK,2}$ and another $\vec{\zeta}_{\rm osc}$, with the only difference that $\vec{\zeta}_{\rm KK,2}$ is a generating tower in (V), while in (IV) is the bounded states of two $\vec{\zeta}_{\rm KK,1}$ towers. Furthermore, as seen from Table \ref{tab.bu3mod}, both arrangements are consistent with a K\"ahler potential of the form $K\sim-\log s^a(s^b)^2(s^c)^2$. This means that towers from the arrangement depicted in  Figure \ref{fig.FISHY}, constructed by gluing halves of (IV) and (V) along their common fixed plane, are consistent with the Integral Scaling Relation \eqref{eq.lattice}, and the individual lengths of the towers satisfy the expected lengths from \eqref{eq.length}. However, the global arrangement is inconsistent with the recursive application of the ESC (see assumptions in Section \ref{sec.reconstructionalg}), and thus with the taxonomy rules \eqref{eq. taxonomy}, since the middle $\vec{\zeta}_{\rm KK,2}$ and $\vec{\zeta}_{\rm KK,1}$ have
\begin{equation}
	\vec{\zeta}_{\rm KK,2}\cdot\vec{\zeta}_{\rm KK,1}=\left(\frac{1}{\sqrt{2}},\frac{1}{2},\frac{1}{2}\right)\cdot\left(\frac{1}{\sqrt{2}},0,1\right)=1\neq\ \frac{1}{2}
\end{equation}
The explanation behind this is simple. In arrangement (IV), the $\vec{\zeta}_{\rm KK,2}$ vector is a bounded state between the two $\vec{\zeta}_{\rm KK,1}$, and thus it is not a generating tower, and does not follow the taxonomy rules with other vectors. When building the lattice \eqref{eq.lattice} from the K\"ahler potential, if only the individual points are required to follow the taxonomy rules by having the appropriate lengths, then we might be actually selecting points that are not globally consistent with the surrounding towers, just as in the example we have just discussed. This is why it is vital that the step \textbf{(e.)} in the reconstruction algorithm from Section \ref{sec.reconstructionalg} is required, as argued by requiring that the Emergent String Conjecture holds recursively upon decompactification, rather than simply using the Integral Scaling Relation and the expected length of the $\zeta$-vectors.

\begin{figure}[H]
\begin{center}
\includegraphics[width=\textwidth]{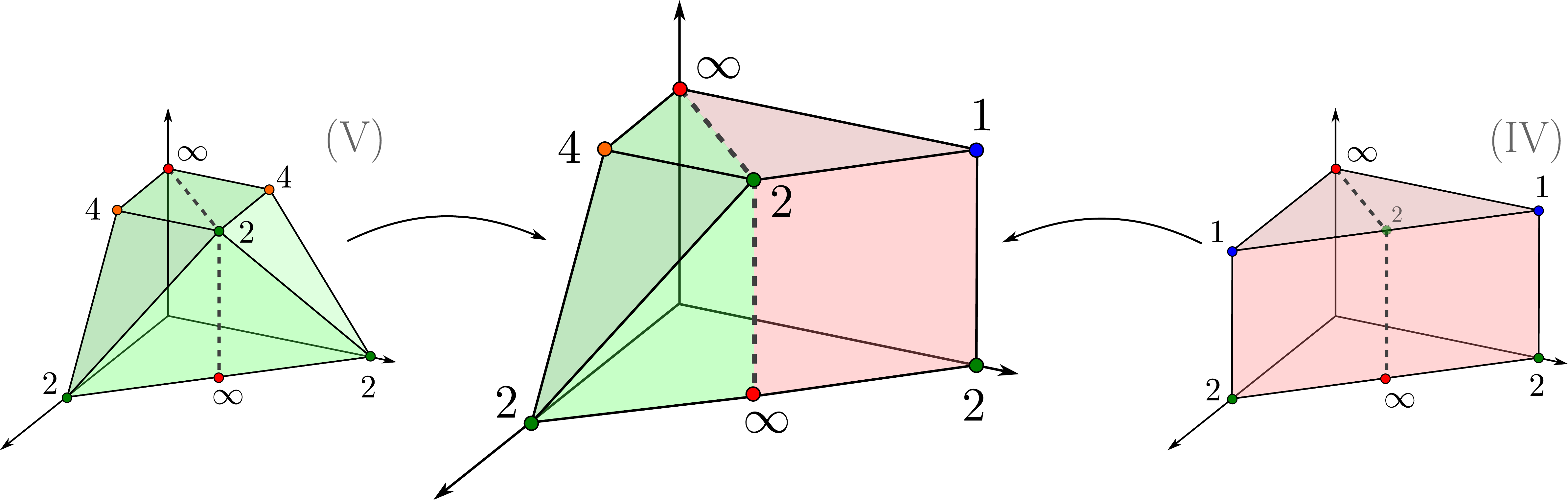}
\caption{{Sketch of how one can build an inconsistent tower arrangement (center) that respects the Integral Scaling Relation and length of the $\zeta$-vectors from two consistent arrangements (left and right), as the product between the middle $\vec{\zeta}_{\rm KK,2}$ and $\vec{\zeta}_{\rm KK,1}$ (both generating towers) is not $\frac{1}{d-2}=\frac{1}{2}$.}
\label{fig.FISHY}}
\end{center}
\end{figure}

\subsection{Evidence for String Universality\label{s.String UNI}}

In Section \ref{sec.mono}, we listed all the monomial terms that could play the role of the leading term in the polynomial setting the K\"ahler potential $K=-\log P(s)$ in a given growth sector of the moduli space of a generic 4d $\mathcal{N}=1$ EFT. 
In particular, we ran all possible monomials of degree less or equal than 7 through the reconstruction algorithm, yielding the following two insights. First, we obtained the subset of monomials compatible with the assumptions of the Integral Scaling Relation and ESC laid out in Section \ref{sec.reconstructionalg}. Secondly, we derived, for each monomial in this subset, the possible arrangement(s) of towers in terms of a convex hull of $\zeta$-vectors, with at most two compatible convex hulls which respect all our assumptions for each monomial. These are listed in Tables \ref{tab:1sax} and \ref{tab:bu2}, and \ref{tab:bu3} to \ref{tab:bu7}.

In Section \ref{sec.mono} we anticipated that the 7-dimensional polytope in Table \ref{tab:bu7} is the only tower arrangement allowed by our assumptions for the choice of monomial. At the same time, it is precisely the tower arrangement arising from M-theory compactified on a $G_2$ manifold. A similar coincidence happens for 4-dimensional (and lower) monomials of degree 4 for which the compatible polytopes are described microscopically by type IIA/heterotic $E_8 \times E_8$ or type IIB/type I/heterotic $SO(32)$. One could then wonder if we can understand all the allowed polytopes in the tables of Section \ref{sec.mono} through a similar lens. Is any polytope compatible with the bottom-up assumptions of the reconstruction algorithm  realized microscopically by a precise top-down construction? The answer to this question is yes, as we will explain in the following. As a matter of fact, we find a complete match between allowed slices from a top-down point of view, by slicing the 7-dimensional polytope of M-theory on $G_2$ and polytopes reconstructed by the bottom-up algorithm.

In general,  the polytopes listed in Tables \ref{tab:1sax} to \ref{tab:bu6} will have to be interpreted as slices of bigger moduli spaces. An easy and illustrative example is given by the polytope in Table \ref{tab:bu6}, which can be seen as slices of the 7-dimensional one in Table \ref{tab:bu7}. The EFT strings and KK-2 6-dimensional $\zeta$-vectors come straightforwardly by choosing, from the correspondent EFT strings and KK-2 7-dimensional counterparts, only the ones with one component (say the 7$^{\rm th}$ one) equal to zero. In synthesis, 

\begin{equation}
\begin{split}
\text{EFT strings:} \ & \underbrace{\big(\underline{\tfrac{1}{\sqrt{2}},0,0,0,0,0,0}\big)}_{7d} \longrightarrow \underbrace{\big(\underline{\tfrac{1}{\sqrt{2}},0,0,0,0,0}\big)}_{6d},\\
\text{KK-2:} \ &
\underbrace{\big(\underline{\tfrac{1}{\sqrt{2}},\tfrac{1}{\sqrt{2}},0,0,0,0,0}\big)}_{7d}  \rightarrow \underbrace{\big(\underline{\tfrac{1}{\sqrt{2}},\tfrac{1}{\sqrt{2}},0,0,0,0}\big)}_{6d}.
\end{split}
\end{equation}

Regarding the KK-1 vectors, the exact same thing can be done by fixing one coordinate to always be zero, but in this case only 4 (in green) or the 7 $\zeta$-vectors survive the projection:

\begin{small}
\begin{equation}
 \underbrace{\begin{aligned}
\color{Green}{\big(\tfrac{1}{\sqrt{2}},\tfrac{1}{\sqrt{2}},\tfrac{1}{\sqrt{2}},0,0,0,0\big)},& \color{Green}{\big(\tfrac{1}{\sqrt{2}},0,0,\tfrac{1}{\sqrt{2}},\tfrac{1}{\sqrt{2}},0,0\big)},\\ \big(\tfrac{1}{\sqrt{2}},0,0,0,0,\tfrac{1}{\sqrt{2}},\tfrac{1}{\sqrt{2}}\big), & \color{Green}{\big(0,\tfrac{1}{\sqrt{2}},0,\tfrac{1}{\sqrt{2}},0,\tfrac{1}{\sqrt{2}},0\big)}, \\ \big(0,\tfrac{1}{\sqrt{2}},0,0,\tfrac{1}{\sqrt{2}},0,\tfrac{1}{\sqrt{2}}\big),& \color{Green}{\big(0,0,\tfrac{1}{\sqrt{2}},0,\tfrac{1}{\sqrt{2}},\tfrac{1}{\sqrt{2}},0\big)},  \\ \big(0,0,\tfrac{1}{\sqrt{2}},&\tfrac{1}{\sqrt{2}},0,0,\tfrac{1}{\sqrt{2}}\big)
 \end{aligned}}_{7d}
 \ \longrightarrow \
 \underbrace{\begin{aligned}
 \big(\tfrac{1}{\sqrt{2}},\tfrac{1}{\sqrt{2}},\tfrac{1}{\sqrt{2}},0,0,0\big), \big(\tfrac{1}{\sqrt{2}},0,0,\tfrac{1}{\sqrt{2}},\tfrac{1}{\sqrt{2}},0\big), \\ \big(0,\tfrac{1}{\sqrt{2}},0,\tfrac{1}{\sqrt{2}},0,\tfrac{1}{\sqrt{2}}\big), \big(0,0,\tfrac{1}{\sqrt{2}},0,\tfrac{1}{\sqrt{2}},\tfrac{1}{\sqrt{2}}\big).
 \end{aligned}}_{6d}
\end{equation}
\end{small}

An identical argument yields the 4 non-BPS strings in the 6-dimensional case descending directly from the 7-dimensional ones.
 
From the K\"ahler potential point of view, this operation is equivalent to keep one of the saxions fixed at finite value $s^i\sim s^i_0$. In principle, one could also move along a direction where two saxions are bound to grow at the same rate and the others are free,  $s^i \sim s^j$ which would yield $s^1s^2s^3s^4s^5s^6s^7 \to (s^1)^2s^2s^3s^4s^5s^6$ if, e.g., $s^1\sim s^7$. In Table \ref{tab:bu6}, the latter is highlighted in red because the reconstruction algorithm does not give a consistent convex hull for the towers, from the bottom-up point of view. We can also understand this from the point of view of the structure of the 7d Joyce manifold on which we compactify M-theory: this will give us the understanding of what is exactly a ``good" slice and what is not. 

Consider Joyce compact manifolds of the kind described in \cite{joyce1996a,joyce1996b,joyce2004constructing}  and summarized in Appendix \ref{app.topdown}: they consist in a toroidal orbifold of the kind $X_7=\mathbb{T}^7/(\mathbb{Z}_2\oplus \mathbb{Z}_2 \oplus \mathbb{Z}_2)$ with torus radii $\{R_a\}_{a=1}^7$ in 11d units. As illustrated in \eqref{eq.radii saxion joyce}, the relation between the seven untwisted saxions and the radii is of the kind $s^i= R_{a_I}R_{a_J}R_{a_K}$. Due to the $G_2$ holonomy, saxions measure volumes of 3-cycles, and as a result, when we move in moduli space with one saxionic direction faster than the others, we will always be scaling suitably the three radii on which the saxion depends. Another way to see this is that we cannot grow one single radius without it affecting at least three saxionic directions. This interlinking between saxions and multiple radii is exactly what forbids the entries in red in the tables of section \ref{sec.mono} from being ``good" slices of the M-theory polytope in Table \ref{tab:bu7}. 
To see this, we will now explain what goes wrong in the slicing $s^1s^2s^3s^4s^5s^6s^7 \to (s^1)^2s^2s^3s^4s^5s^6$ and at the same time allows $s^1s^2s^3s^4s^5s^6s^7 \to (s^1)^2(s^2)^2s^3s^4s^5$.
Following the expressions for the KK masses given in \cite{Grieco:2025bjy}, a KK tower associated to the decompactification of one radius $R_2$ in M-theory on $G_2$ is given by
\begin{equation}\label{eq.m_R2}
m_{{\rm KK},R_2} = \frac{M_{\rm Pl,11}}{R_2} = \frac{M_{\rm Pl,4}}{R_2\sqrt{V_X}} = \frac{M_{\rm Pl,4}}{\sqrt{s^1s^4s^5}},
\end{equation}
where $V_X = R_1\dots R_7 = (s^1\dots s^7)^{1/3}$, see Appendix \ref{app.Joyce}. Now, we take a direction in the 7-dimensional moduli space by identifying $s^1\sim s^2$: because of their dependence on the radii, having in common exactly $R_1$, this imposes $R_2R_3=R_4R_5$ as a constraint.

When we project in this direction, the exponential rate that we see for $m_{KK,R_2}$ is not that of a $n=1$ tower or of any tower with integer $n$: the projected length of its $\zeta$-vector is $|\zeta|^2_{R_2}=5/4$, which would correspond to $n= 4/3$. The situation is the same for $m_{KK,R_3}$, $m_{KK;R_4}$ and $m_{KK,R_5}$: any choice of identification of a single couple $s^i \sim s^j$ (they are 21 in total) leads to the projection of exactly four KK-1 towers to be incompatible with ESC rates. This is why the slicing $s^1s^2s^3s^4s^5s^6s^7 \to (s^1)^2s^2s^3s^4s^5s^6$ is not possible: without a ``good" projection of these towers, i.e. a projection landing on ESC rates, the polytope that we obtain will inevitably have $k$-simplices which are at the wrong distance from the origin, violating the ESC recursion condition. 

Instead, the identification of \textit{two} couples of saxions, with a suitable choice of the saxions identified, makes it possible for a ``good" slice to be taken for $s^1s^2s^3s^4s^5s^6s^7 \to (s^1)^2(s^2)^2s^3s^4s^5$. If we choose to identify $s^1\sim s^2$, for example then a choice for the other pair yielding a ``good" slice is $s^5\sim s^6$. This second identification puts further constraints on the radii growth, so that the combination of the constraints results in imposing $R_2=R_5$ and $R_3=R_4$. Linking these pairs of radii means that we are decompactifying two radii at the same rate: the projection of the $m_{{\rm KK},R_2}$ tower on this slice has exponential rate compatible with a KK-2 decompactification. The remaining original KK-1 towers will either behave in the same way (projecting onto the KK-2 vector) or be completely embedded on the slice (keeping their original KK-1 rate). Note that had we taken  $s^1\sim s^2$ together with any other identification outside of $s^5\sim s^6$ or $s^4\sim s^7$ the slice would have been ``skew" and not ``good", so that not only is the number of identifications important but also the exact pairs that we are taking.  

A \textit{good} slice is then defined as a slice of the higher-dimensional moduli space where the corresponding tower polytope is generated by leading towers or bounded states contained in such slice. This guarantees that the new arrangement is consistent with the taxonomy rules, and therefore their length is that expected for a KK or string oscillator tower.

 Of course, one can always move along \emph{not good} or ``skew" slices that do not satisfy the above definition. However, they will always need the presence of towers pointing outside of them for their arrangement to satisfy the Integral Scaling Relation and ESC recursively, not yielding consistent standalone lower-dimensional tower polytopes. For this reason, this type of slices are not included in our classification.

This analysis can be repeated for any $r$-dimensional slicing of the 7-dimensional polytope, for each choice of $N$-fold identifications $s^{i_1} \sim s^{i_2} \sim \dots \sim s^N$ and/or freezing of moduli $s^{i}\sim s^i_0$. To repeat the message again, these directions are always allowed when picking a specific trajectory in a given moduli space. What might not always happen is that the $r$-dimensional slice yields a consistent standalone polytope without ``needing" towers spanning additional directions, compatible with the recursive ESC.

Let us then take the 7-dimensional M-theory polytope in Table \ref{tab:bu7} and  perform all the possible combinations of co-scaling and freezing of saxions. We obtain that, from this top-down point of view, the only combinations that yield ``good" slices are exactly the ones that are listed as allowed monomials in Tables \ref{tab:1sax} and \ref{tab:bu2} to \ref{tab:bu6}. Not only this, but the tower content of the ``good" slices is exactly the one listed in the tables, reproducing also the non-BPS strings. Therefore, it is enough to consider the slices of the polytope in Table \ref{tab:bu7} to recover all the allowed polytopes of lower dimension, which also happen to be the \textbf{only} \emph{good} slices that we can take of the 7-dimensional polytope. As we explained in Section \ref{sec.mono}, the 7-dimensional polytope is precisely the arrangement of towers for M-theory on the Joyce $G_2$-manifold $\mathbb{T}^7/(\mathbb{Z}_2\oplus\mathbb{Z}_2\oplus\mathbb{Z}_2)$ discussed in Appendix \ref{app.Joyce}.

This means that the results of the reconstruction algorithm, which is a bottom-up procedure, match \textit{exactly} the \emph{good} slices of this M-theory construction.  Not only this, but the polytopes obtained by top-down string compactifications (see Appendix \ref{app.topdown}) can also be obtained as slices of the M-theory one. In practice, it is enough to know the M-theory polytope to describe the tower arrangement of every possible growth sector of a 4d $\mathcal{N}=1$ moduli space. 
This is not trivial: there is no input in the assumptions of Section \ref{sec.reconstructionalg} that would evidently single out M-theory in particular; in principle it could have been possible for the reconstruction algorithm to yield polytopes not corresponding to slices of the 7-dimensional one. This suggests that, even if 4d $\mathcal{N}=1$ is not obtained by toroidal compactifications, the type of limits one obtains in a given growth sector of more complicated compactifications are a subset of those appearing for toroidal ones.

This seems to hint to some form of string universality regarding the boundaries of 4d $\mathcal{N}=1$ moduli spaces: the only tower arrangements satisfying the observed patterns described in Section \ref{sec.reconstructionalg} in relation to the Distance Conjecture are obtained through string/M-theory compactifications. 
To obtain these results, it has been fundamental to combine the ESC with the Integral Scaling Relation.
We want to stress that this alone does not provide a bottom-up rationale for the SDC nor it attempts to prove string universality in 4d $\mathcal{N}=1$ supergravity. It could very well be that a string lamppost effect is generated by one or more of the assumptions in \ref{sec.reconstructionalg}. For instance, the ESC in the algorithm can be regarded as a top-down input, since a complete bottom-up explanation for it is lacking (see \cite{Aoufia:2024awo,Bedroya:2024ubj,Basile:2023blg,Kaufmann:2024gqo} though).  
 If an interpretation of the former, possibly of the $w\leq3$ bound too, is found, this would corroborate our result as more genuine evidence for string universality.\\
Lastly, the freezing and co-scaling of $N$ moduli are very reminiscent of the way ``descendant polytopes'' are obtained by trees of algebraic embeddings of duality groups in \cite{Baines:2026aug}. In that paper it is shown that all consistent 4d arrangements in locally symmetric moduli spaces can be derived from subalgebras of the $\mathfrak{e}_{7(7)}$ duality algebra (precisely that of M-theory on $\mathbb{T}^7$), while in our case this seems to be the case for 4d $\mathcal{N}=1$ theories and M-theory on a Joyce $G_2$-manifold $\mathbb{T}^7/(\mathbb{Z}_2\oplus\mathbb{Z}_2\oplus\mathbb{Z}_2)$. While it is clear that 4d $\mathcal{N}=1$ moduli space is not locally symmetric, in \cite{Baines:2026aug} it is argued that the results should still apply in asymptotic regimes where an approximate local symmetry appears. It would be interesting to understand whether the two results are equivalent, though we leave this for a future work.

\section{Conclusions\label{sec.concl}}

In this work, we have investigated how much information about the ultraviolet spectrum of a four-dimensional $\mathcal N=1$ effective theory can be extracted directly from its asymptotic K\"ahler potential at the infinite distance boundaries of the moduli space -- where approximate axionic shift symmetries arise. Our starting point was the Integral Scaling Relation between the tensions of EFT strings and the characteristic mass scales of the towers of states becoming light at infinite distance. This relation has been observed in a broad class of string compactifications, both for the leading tower and for the subleading towers generating the tower polytope (i.e. the convex hull of the towers of states). Here, we have reversed the usual top-down logic and treated it as a quantum-gravity consistency condition on otherwise generic $4d$ $\mathcal N=1$ EFTs.

More concretely, we have developed a bottom-up reconstruction algorithm whose input is an asymptotically shift-symmetric K\"ahler potential of the form   $K \sim -\log P(s) $ at perturbative level.
The EFT-string tensions, and therefore their associated $\vec\zeta_T$-vectors, can be computed directly from this low-energy data. The Integral Scaling Relation then implies that the $\vec\zeta$-vectors of the light towers must belong to the lattice generated by the elementary EFT-string vectors. We further impose the Emergent String Conjecture (ESC), both in four dimensions and recursively upon decompactification, together with the assumption that the decompactification limits approach higher-dimensional Minkowski vacua where internal warping becomes asymptotically diluted. These conditions select the lattice points with the decay rates appropriate to either Kaluza--Klein towers or string oscillator modes and constrain how they can be assembled into globally consistent tower polytopes.\\

When applied to K\"ahler potentials arising in known string and M-theory compactifications, the reconstruction algorithm reproduces precisely the expected arrangements of towers and duality frames. For the cubic polynomials $P(s)$ defining K\"ahler potentials associated with Calabi--Yau compactifications, we recover the two families realized, respectively, in heterotic $E_8\times E_8$/type IIA constructions and in heterotic $SO(32)$/type I/type IIB/F-theory constructions. The same K\"ahler potential can therefore admit more than one consistent UV tower arrangement, even though the number of possibilities is highly constrained and matches string theory results. For the rank-seven K\"ahler potential associated with M-theory compactified on a Joyce $G_2$ manifold, the algorithm instead selects a unique tower polytope, coinciding with the one obtained from the top-down compactification.

We have subsequently used the reconstruction algorithm to constrain more general asymptotic K\"ahler potentials
with $P(s)$ homogeneous and of degree at most seven. When $P(s)$ is dominated by a single leading monomial on the entire saxionic cone, we have classified the allowed tower polytopes for up to seven saxions. Not every seemingly consistent K\"ahler potential passes the reconstruction test: certain monomials admit no tower spectrum compatible with both the Integral Scaling Relation and the recursive ESC. For every admissible monomial, the algorithm produces at most two possible tower polytopes, and in most cases the result is unique. 

For a general homogeneous polynomial, the different monomials provide the local building blocks controlling the various growth sectors of the saxionic cone. Consistency then requires the tower arrangements associated with adjacent growth sectors to agree along their common interfaces. This gluing condition produces additional restrictions that cannot be detected by considering the individual monomials separately. We have analyses this condition explicitly for two-saxion K\"ahler potentials. In particular, for degree five the only admissible possibilities are the individual monomials $(s^1)^4s^2$ and $s^1(s^2)^4$, while more general polynomials interpolating between different degree-five growth sectors fail the gluing condition. Analogous obstructions exclude degree-six and degree-seven polynomials containing $(s^1)^5s^2$ or $(s^1)^5(s^2)^2$, respectively, as leading terms in an asymptotic growth sector. The same construction extends in principle to more saxions, although the complete classification becomes computationally more involved.

In conclusion, under the assumptions detailed above, we can rule out certain perturbative forms of 4d K\"ahler potentials, provided that the ESC and the Integral Scaling Relation are universal UV consistency criteria. We should also remark that, even if the discussion seems focused on $\mathcal{N}=1$ EFTs, our results equally apply to extended supersymmetric setups where infinite distance boundaries are also characterized by axionic shift symmetries. In particular, our bottom-up analysis equally applies to 4d $\mathcal{N}=2$ EFTs exhibiting the type of K\"ahler potentials considered here, constraining in the same way their UV spectrum. In fact, in those settings, one often does not need to worry about potential quantum obstructions (as those in \cite{Cvetic:2024wsj,Kaufmann:2026tsy,Kaufmann:2026fli,Kaufmann:2026mha}), allowing one to recover the complete tower polytopes constructed here rather than only subsets thereof. \\

We have also clarified the relation between the Integral Scaling Relation and the taxonomy rules for particles of \cite{Etheredge:2024tok}, which are derived from applying the ESC recursively. The two sets of conditions are neither equivalent nor does either one imply the other. The taxonomy rules constrain the relative scalar products between the vectors of generating towers, whereas the Integral Scaling Relation constrains their embedding into the lattice determined by the EFT strings. In the examples with two and three saxions, only a subset of the solutions to the taxonomy rules can be embedded into an EFT-string lattice associated with a K\"ahler potential of degree at most seven. 
It would be interesting, though, to compare the Integral Scaling Relation with the taxonomy rules for branes derived in \cite{Etheredge:2024amg,Etheredge:2025ahf}. Although at first glance, the Integral Scaling Relation is more constraining than the taxonomy rules for extended objects (as they allow other types of strings as well), it could be interesting to study if the former can be recover from the latter upon imposing some additional constraint that identifies the EFT strings (see \cite{TBAeftMUNICH} for future work on this).

Finally, perhaps one of the most striking results is the relation between this bottom-up classification and M-theory on Joyce manifolds. We find that every tower polytope allowed by the reconstruction algorithm for monomials of degree at most seven can be obtained as a ``good" slice of the rank-seven polytope arising from M-theory on    $\mathbb{T}^7/(\mathbb Z_2\oplus\mathbb Z_2\oplus\mathbb Z_2) $. Even those associated to cubic polynomials -- that typically arise in Heterotic or Type II Calabi-Yau compactifications -- can be embedded in this M-theory construction.
Conversely, the slices of this M-theory polytope -- when properly taken as explained in Section \ref{s.String UNI} -- reproduce precisely only the monomials and tower arrangements selected by the bottom-up analysis, including their non-BPS string content. This suggests that reducing the amount of supersymmetry relative to a toroidal compactification obstructs certain limits and modifies the global structure, but does not give rise to new types of limits: the arrangement of towers within each growth sector remains unchanged, although their microscopic interpretation could be different.

This matching provides evidence for a sort of string universality at the asymptotic boundaries of 4d $\mathcal N=1$ moduli spaces: within the class considered here, every tower arrangement compatible with our quantum-gravity assumptions admits a realization in string or M-theory. This statement should not be interpreted as a proof that every consistent $\mathcal N=1$ supergravity belongs to the string landscape. Our reconstruction relies on the Integral Scaling Relation, the recursive Emergent String Conjecture, the bound on the number of decompactifying dimensions and the assumption of asymptotically unwarped Minkowski decompactifications. Nevertheless, it is non-trivial that these inputs single out precisely the slices of concrete M-theory compactifications, since the algorithm contains no explicit information about them. In previous analysis of applying the ESC recursively from a bottom-up perspective \cite{Etheredge:2024tok}, there were always certain tower polytopes that didn't have yet a known stringy embedding, but these are ruled out when imposing the Integral Scaling Relation.\\

There are several natural directions for future work. A central question is whether the Integral Scaling Relation, including the observed bound on its scaling weight $w$, can be derived from more fundamental quantum-gravity principles. This would tell us if it is only a property of the known string theory landscape or a more fundamental UV consistency condition. At the very least, even in the former case, it is impressive how such a simple relation between EFT strings and particles can encode so much information about the UV spectrum of the EFT, in such a way that it reproduces only those arrangements of UV towers that exist in known string theory constructions.

 It would also be interesting to relax the assumption of unwarped Minkowski decompactifications and determine how the reconstruction algorithm is modified for highly warped limits \cite{Etheredge:2023odp,Raucci:2026fzp}. This would change the KK exponential rate and potentially allow for other types of tower polytopes. In such limits, warping modifies the exponential decay rates of the KK towers and may allow for additional types of tower polytopes. Recent work \cite{Raucci:2026fzp} suggests that this can occur only when the warping is sourced by codimension-one extended objects, substantially restricting the number of possibilities. It is also known that the standard taxonomy rules no longer apply in such strongly warped limits, raising the possibility that the Integral Scaling Relation may likewise need to be modified.
 
 On a different note, the phenomenon of \emph{sliding} of towers characteristic of warped compactifications (namely, that the tower vector changes as we move in moduli space perpendicularly to the asymptotic trajectory) also happens at the interfaces between growth sectors (non-regular limits in the notation of \cite{Etheredge:2024tok}). However, in these cases, even if the taxonomy rules do not apply, we have found that the Integral Scaling Relation still holds, as the exponential rates of the UV towers along the EFT string flows remain unchanged. In this sense, the Integral Scaling Relation appears better suited than the taxonomy rules to characterizing the global structure of the tower polytope, as it remains valid throughout the entire saxionic cone, including both regular and non-regular limits. 
 
 Other extensions of our work include a systematic implementation of the gluing constraints for polynomials with more than two saxions. Although conceptually clear, it becomes technically involved due to the increasing number of possibilities. For cubic polynomials, the gluing constraints seem equivalent to having a positive definite field metric, whereas for polynomials of degree greater than four they impose additional non-trivial conditions on the global arrangement of towers. It would also be interesting to compare our results with those of \cite{Baines:2025upi,Baines:2026aug},  which rely on the ESC and properties of the duality groups, in order to establish whether the Integral Scaling Relation can be obtained from those duality properties. Finally, it would be valuable to extend our analysis to non-supersymmetric settings, as the existence of the EFT strings does no rely on supersymmetry but on the approximate axionic shift symmetries emerging at the boundaries of the moduli space. The interplay between EFT strings, towers of particles and EFT membranes (sourcing a scalar potential) will be presented in \cite{EFTstr-and-mem}. The inclusion of the potential breaking supersymmetry spontaneously may obstruct certain limits, but it does not change the asymptotic results found in this paper as long as it vanishes at infinite distance.
 
 All these questions may help establish whether the universality observed in this work reflects a more general organizing principle of quantum gravity, and whether EFT string tower building provides a general bottom-up framework for reconstructing the spectrum in the EFT perturbative regimes.\\

\textbf{Acknowledgments}: 
We are grateful to Muldrow Etheredge, Bernardo Fraiman, Damian van de Heisteeg, \'Alvaro Herr\'aez, Luca Melotti, Timo Weigand and Max Wiesner,  for very illuminating discussions and comments. A.G. also thanks the Institute for Theoretical Physics in KU Leuven for hospitality in the late stage of this work, while I.R. thanks IFT UAM-CSIC for hospitality during the early stages of the development of this paper.
The authors thank the Spanish Agencia Estatal de Investigaci\'on through the grant “IFT Centro de Excelencia Severo Ochoa” CEX2020-001007-S and the grants PID2021-123017NB-I00 and PID2024-156043NB-I00, funded by MCIN/AEI/10.13039/ 501100011033 and by ERDF “A way of making Europe”.  
 The work of A.G. is supported by the fellowship LCF/BQ/DI23/11990073 from ``La Caixa'' Foundation (ID 100010434), with additional funding from a STSM grant within the COST action CA22113 ``Fundamental Challenges in Theoretical Physics''. The work of I.R. is supported by the European Union through ERC Starting Grant SymQuaG-101163591 StG-2024, and he acknowledges the additional funding of the Spanish FPI grant No. PRE2020-094163 and the ERC Starting Grant QGuide101042568 - StG 2021. The work of I.V. has been supported by the ERC Starting Grant QGuide101042568 - StG 2021 and the Project ATR2023-145703 funded by MCIN/AEI/ 10.13039/501100011033.

\appendix
\section{Basic identities for EFT strings\label{app.id}}
In this appendix, we describe some identities and results concerning the $\zeta$-vectors of the EFT strings. We recall that, given some K\"ahler potential in a given 4d $\mathcal{N}=1$ theory enjoying an approximate shift axionic symmetry on their chiral scalars $t^j=a^j+is^j$, the EFT string tension with charges $\mathbf{e}$ is given by \eqref{eq.EFTstringTension}
\begin{equation}\label{eq. tension app}
\mathcal{T}_{\mathbf{e}}=M_{\rm Pl,4}^2 e^jl_j\,,\quad\text{where}\quad l_k=-\frac{1}{2}\partial_{s^j}K\,.
\end{equation}
This motivates the definition of the associated $\zeta$-vector as \eqref{eq.zeta vec}
\begin{equation}
	\vec{\zeta}_{\mathcal{T}}=-\vec\nabla\log\frac{\mathcal{T}(\vec{\varphi})^{1/2}}{M_{\rm Pl,4}}\,.
\end{equation}
We will now describe some basic results concerning the above $\zeta$-vectors, some of which already appeared in \cite{Grieco:2025bjy}. In the following, we will take $K\sim -\log P(s)$, with $P(s)$ a homogeneous polynomial on the saxions.
\begin{fct}
	Along the saxionic flow induced by approaching the core of an EFT string of charge $\mathbf{e}$, we have $\vec{\zeta}_{\mathcal{T}_{\mathbf{e}}}\propto\mathbf{e}$, proportional to the trajectory taken.
\end{fct}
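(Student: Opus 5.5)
The plan is a direct computation from the closed-form tension \eqref{eq. tension app}, working in the saxionic coordinates $s^i$ with the moduli space metric $\mathsf{G}_{jk}=\frac12\partial_{s^j}\partial_{s^k}K$ that the paper uses for all scalar products. I will set $M_{\rm Pl,4}=1$. The $\zeta$-vector is a gradient, so its contravariant components are
\begin{equation}
\zeta_{\mathcal{T}}^{\,i}=-\mathsf{G}^{ik}\partial_{s^k}\log\mathcal{T}^{1/2}=-\frac{1}{2\mathcal{T}_{\mathbf{e}}}\mathsf{G}^{ik}\partial_{s^k}\mathcal{T}_{\mathbf{e}}\,.
\end{equation}
The axions drop out because $K$ only depends on the saxions in the perturbative regime \eqref{K}.

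The key step is to differentiate $\mathcal{T}_{\mathbf{e}}=-\frac12 e^j\partial_{s^j}K$, which gives
\begin{equation}
\partial_{s^k}\mathcal{T}_{\mathbf{e}}=-\frac12 e^j\partial_{s^j}\partial_{s^k}K=-e^j\mathsf{G}_{jk}\,.
\end{equation}
The Hessian of $K$ is thus exactly the metric contracted with the charge. Inverting the metric then yields
\begin{equation}
\zeta_{\mathcal{T}_{\mathbf{e}}}^{\,i}=\frac{e^i}{2\mathcal{T}_{\mathbf{e}}}=\frac{e^i}{2e^jl_j}\,.
\end{equation}
Since $\mathcal{T}_{\mathbf{e}}>0$ inside the saxionic cone, $\vec\zeta_{\mathcal{T}_{\mathbf{e}}}$ is a positive multiple of $\mathbf{e}$. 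The flow \eqref{eq.trajectory}, $s^i=s^i_0+e^i\sigma$, has tangent vector $\dot s^i=e^i$, so $\vec\zeta_{\mathcal{T}_{\mathbf{e}}}$ is parallel to the trajectory at every point of it, which is the claim. I would also remark that the identity is in fact pointwise: it does not need the flow.

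To make contact with the normalized statements used later, I would then compute the norm,
\begin{equation}
|\vec\zeta_{\mathcal{T}_{\mathbf{e}}}|^2=\frac{e^ie^j\mathsf{G}_{ij}}{4\mathcal{T}_{\mathbf{e}}^2}\,.
\end{equation}
Along the flow I would use the homogeneity of $P$: with $s=s_0+\mathbf{e}\sigma$ one has $\mathcal{T}\sim 1/\sigma$ \eqref{univtensionflow} and $e^ie^j\mathsf{G}_{ij}\sim 1/\sigma^2$, so the norm tends to a finite constant. For an elementary string in a monomial growth sector this constant should be $1/(2p_i)$, consistent with $|\vec\zeta_{\mathcal{T}_{e^i}}|=1/\sqrt{2p_i}$.

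The main obstacle is bookkeeping of conventions rather than mathematics. I need to check that the metric raising the index in the gradient is the same $\mathsf{G}_{jk}$ with the factor $\frac12$ appearing in the tension. If the normalization differs, the result only changes by a constant factor, so proportionality survives. A second point to make explicit is that the statement is about the vector in field space, which in canonical coordinates acquires a Jacobian. Parallelism is coordinate invariant, however, because both $\vec\zeta$ and the tangent $\mathbf{e}$ transform with the same Jacobian.
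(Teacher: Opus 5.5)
Your proposal is correct and follows essentially the same route as the paper. Differentiating $\mathcal{T}_{\mathbf{e}}=-\frac12 e^j\partial_{s^j}K$ turns the Hessian of $K$ into $2\mathsf{G}_{jk}e^j$, so raising the index gives $\zeta^i_{\mathcal{T}_{\mathbf{e}}}=\frac{M_{\rm Pl,4}^2}{2\mathcal{T}_{\mathbf{e}}}e^i$, which is compared with the flow tangent $\partial_\sigma\mathbf{s}=\mathbf{e}$. Your extra remarks (pointwise validity, coordinate invariance, and the norm, which the paper treats in the next Fact) are correct but not needed here.
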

\begin{proof}
	For a given trajectory \eqref{eq.profilesEFT}, which we can parametrize as $\mathbf{s}(\sigma)=\mathbf{s}_0+\mathbf{e}\,\sigma$, with $\sigma>0$, it is clear that $\partial_\sigma\mathbf{s}(\sigma)=\mathbf{e}$ is a tangent vector (generally not of unit norm). On the other hand,
	\begin{equation}
	\zeta^i_{\mathcal{T}_{\mathbf{e}}}=-\frac{1}{2}\mathsf{G}^{ij}\partial_j\log\Big(-\frac{1}{2}e^a\partial_a K\Big)=-(\partial^2 K)^{ij}\frac{e^a\partial_j\partial_a K}{e^a\partial_aK}=\frac{M_{\rm Pl,4}^2}{2\mathcal{T}_{\mathbf{e}}}e^i\,,
\end{equation}		
where use that $\mathsf{G}_{jk}=\frac{1}{2}\partial_{s^j}\partial_{s^k}K$. Thus $\vec{\zeta}_{\mathcal{T}_{\mathbf{e}}}=\frac{M_{\rm Pl,4}^2}{2\mathcal{T}_{\mathbf{e}}} \mathbf{e}=\frac{M_{\rm Pl,4}^2}{2\mathcal{T}_{\mathbf{e}}} \partial_\sigma\mathbf{s}$, as we wanted to show.
\end{proof}

\begin{fct}\label{fact.length}
	Along a given EFT string flow, the length of the associated $\zeta$ vector is given by
	\begin{equation}\label{eq.length app}
		|\vec{\zeta}_{\mathbf{e}}|_{\rm flow}=\frac{1}{\sqrt{2{\rm deg}_{\sigma}P\big(s(\sigma)\big)|_{\rm flow}}}\,.
	\end{equation}
	An immediate consequence is that EFT strings corresponding with fundamental strings are in one to one correspondence with saxions with degree 1 in $P(s)$. Furthermore, $\vec{\zeta}_{\mathbf{e}}$ is perpendicular to $\Delta_I$, the simplex generated by the $\zeta$-vectors of the elementary EFT string of the directions under which $\mathbf{e}$ is charged.
\end{fct}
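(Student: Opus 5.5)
Using the previous Fact, along the flow $\mathbf{s}(\sigma)=\mathbf{s}_0+\mathbf{e}\,\sigma$ we have $\vec{\zeta}_{\mathcal{T}_{\mathbf{e}}}=\frac{M_{\rm Pl,4}^2}{2\mathcal{T}_{\mathbf{e}}}\,\mathbf{e}$. Its norm is therefore
\begin{equation*}
|\vec{\zeta}_{\mathbf{e}}|^2=\frac{M_{\rm Pl,4}^4}{4\mathcal{T}_{\mathbf{e}}^2}\,\mathsf{G}_{ij}e^ie^j .
\end{equation*}
I will compute both factors in terms of $f(\sigma)\equiv\log P(\mathbf{s}(\sigma))$. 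Since $K\sim-\log P$, the chain rule gives
\begin{equation*}
\mathcal{T}_{\mathbf{e}}/M_{\rm Pl,4}^2=-\tfrac12 e^i\partial_iK=\tfrac12 f'(\sigma),\qquad \mathsf{G}_{ij}e^ie^j=\tfrac12 e^ie^j\partial_i\partial_jK=-\tfrac12 f''(\sigma).
\end{equation*}
Hence
\begin{equation*}
|\vec{\zeta}_{\mathbf{e}}|^2=-\frac{f''}{2f'^2}=\frac12\,\frac{d}{d\sigma}\Big(\frac{1}{f'}\Big).
\end{equation*}
This is a purely one-variable computation.

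Next I use the asymptotic form of $P$ along the flow. Restricted to the line, $P(\mathbf{s}(\sigma))$ is a polynomial in $\sigma$ of degree $q\equiv{\rm deg}_\sigma P|_{\rm flow}$, that is, the highest total degree among the monomials built only from the charged saxions $s^i$, $i\in I$. As $\sigma\to\infty$, $P\simeq c\,\sigma^q(1+\mathcal{O}(\sigma^{-1}))$, so $f'\simeq q/\sigma$ and $f''\simeq -q/\sigma^2$. This gives $|\vec{\zeta}_{\mathbf{e}}|^2\to 1/(2q)$, which is \eqref{eq.length app}. I need $q\geq1$; this holds because $\mathbf{e}$ lies in the saxionic cone, so $P$ grows along the flow, which is also the condition for $\mathcal{T}_{\mathbf{e}}\to0$ as in \eqref{univtensionflow}.

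For the corollary, the emergent-string rate is $|\vec\zeta_{\rm osc}|=1/\sqrt2$ from \eqref{eq.lengthosc}. By the boxed equivalence (EFT string with $w=1$ $\Leftrightarrow$ BPS F1), this matches exactly when $q=1$. For an elementary string $\mathbf{e}=e^i\hat{\mathbf e}_i$, $q=1$ means $s^i$ appears with degree one in the leading $P$. This gives the one-to-one correspondence. I will argue it within a growth sector where $P$ is monomial, so that $q=p_i$ and the norm is $1/\sqrt{2p_i}$, consistent with step \textbf{c.} of the algorithm.

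For the orthogonality to $\Delta_I$, I show that $\vec{\zeta}_{\mathbf{e}}\cdot\vec{\zeta}_{\mathcal{T}_{e^j}}$ is the same for all $j\in I$ and equals $|\vec{\zeta}_{\mathbf{e}}|^2$. Then $\vec{\zeta}_{\mathbf{e}}\cdot(\vec{\zeta}_{\mathcal{T}_{e^j}}-\vec{\zeta}_{\mathcal{T}_{e^k}})=0$, so $\vec{\zeta}_{\mathbf{e}}$ is orthogonal to every edge of $\Delta_I$ and is itself its pericenter. Using the previous Fact for each elementary string,
\begin{equation*}
\vec{\zeta}_{\mathcal{T}_{e^j}}=\frac{M_{\rm Pl,4}^2}{2\mathcal{T}_j}\,\hat{\mathbf e}_j ,
\end{equation*}
so the product is proportional to $\mathsf{G}_{ia}e^a/(\mathcal{T}_j\mathcal{T}_{\mathbf e})$.

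Now $\mathsf{G}_{ja}e^a=\tfrac12 e^a\partial_a\partial_jK$. Near the flow, $K$ is dominated by its restriction to the charged saxions, which is asymptotically $-\log(\text{homogeneous degree-}q\text{ function of } s^I)$. Euler's identity for $\partial_jK$, which is homogeneous of degree $-1$, gives $e^a\partial_a\partial_jK\simeq\sigma^{-1}\partial_jK$. Thus $\mathsf{G}_{ja}e^a\simeq \mathcal{T}_j/(\sigma M_{\rm Pl,4}^2)$ and the $\mathcal{T}_j$ cancels, leaving a $j$-independent product. Checking the constant reproduces $1/(2q)$.

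The main obstacle I expect is this Euler step. The flow $\mathbf{s}_0+\mathbf{e}\sigma$ is not exactly a ray, and $P$ may contain terms mixing charged and uncharged saxions. I therefore need to control the subleading corrections: $s_0$ shifts and uncharged saxions enter at relative order $\mathcal{O}(\sigma^{-1})$. This should be done within a growth sector, and it is why the statement is asymptotic.
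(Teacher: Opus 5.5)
Your proposal is correct and is essentially the paper's own argument made explicit. The norm follows from $|\vec\zeta_{\mathbf e}|^2=-f''/(2f'^2)$, which is the paper's $\partial_\sigma^2K/\mathcal{T}_{\mathbf e}^2$ step evaluated with homogeneity. Your Euler step is exactly the paper's observation that each charged elementary tension decays like $1/\sigma$ along the flow, because $\mathsf{G}_{ja}e^a=-M_{\rm Pl,4}^{-2}\,\partial_\sigma\mathcal{T}_{e^j}$.

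Three points need fixing:
\begin{enumerate}
\item Like the paper, the orthogonality argument silently assumes that every $s^j$ with $j\in I$ appears in the leading degree-$q$ part of $P$. If some $s^j$ is missing there, $\mathcal{T}_{e^j}$ decays faster than $1/\sigma$, and the products $\vec\zeta_{\mathbf e}\cdot\vec\zeta_{\mathcal{T}_{e^j}}$ are no longer $j$-independent. This is the real obstruction, not the $\mathcal{O}(\sigma^{-1})$ shifts you flagged.
\item $q$ is the top degree in the charged saxions taken over all monomials of $P$. It is not the top degree among monomials built purely from charged saxions: for the elementary string of $s^0(s^1)^3$ there are no such monomials.
\item Euler's identity gives $e^a\partial_a\partial_jK\simeq-\sigma^{-1}\partial_jK$. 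Your next expression, $\mathsf{G}_{ja}e^a\simeq\mathcal{T}_{e^j}/(\sigma M_{\rm Pl,4}^2)$, already uses the correct sign.
\end{enumerate}
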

\begin{proof}
	From \eqref{eq. tension app} and the fact that $P(s)$ is a homogeneous polynomial, we have that for a given flow associated to $\mathbf{e}$
	\begin{equation}
	\mathcal{T}_{\mathbf{e}}(\sigma)=M_{\rm Pl,4}^2\frac{{\rm deg}_{\sigma}P\big(s(\sigma)\big)|_{\rm flow}}{2\sigma}\to 0\,,\qquad \mathcal{T}_{e^{i_0}}(\sigma)=M_{\rm Pl,4}^2\frac{{\rm deg}_{\sigma}P\big(s^{i_0}(\sigma)\big)|_{\rm flow}}{2\sigma}\,.
	\end{equation}
Note that the elementary strings only become light along the flow if $e^{i_0}\neq 0$, but if they do, they do so at the same rate as $\mathcal{T}_{\mathbf{e}}$ in terms of $\sigma$. This means that along the such trajectory (which as we saw in the previous result, is parallel to $\mathbf{e}\propto \vec{\zeta}_{\mathcal{T}_{\mathbf{e}}}$),
\begin{equation}\label{eq. prod app}
	\vec{\zeta}_{\mathcal{T}_{\mathbf{e}}}\cdot\vec{\zeta}_{\mathcal{T}_{e^{i_0}}}=\left\{\begin{array}{ll}
	|\vec{\zeta}_{\mathcal{T}_{\mathbf{e}}}|^2&\text{if }e^{i_0}\neq 0\\
	0&\text{if }e^{i_0}=0
	\end{array}\right.\;,
\end{equation}
which means that $\vec{\zeta}_{\mathcal{T}_{\mathbf{e}}}\perp\Delta_I$ and ${\rm dist}(\vec 0,\Delta_I)=|\vec{\zeta}_{\mathbf{e}}|_{\rm flow}$. Now, evaluating explicitly,
\begin{equation}
	|\vec{\zeta}_{\mathbf{e}}|_{\rm flow}^2=\left.M_{\rm Pl,4}^4e^ae^b\mathsf{G}^{ij}\frac{\partial_i\partial_a K\partial_j\partial_b K}{4\mathcal{T}_{\mathbf{e}}}\right|_{\rm flow}=\left.M_{\rm Pl,4}^4\frac{\partial_\sigma^2 K}{2\mathcal{T}^2}\right|_{\rm flow}=\frac{1}{{2{\rm deg}_{\sigma}P\big(s(\sigma)\big)|_{\rm flow}}}\,,
\end{equation}
	as we were aiming to show.
\end{proof}
An immediate corollary is that the EFT string with the shortest norm along its flow is that charged under all saxionic directions, for which $|\vec{\zeta}_{\mathcal{T}}|_{\rm flow, min}=\frac{1}{\sqrt{{\rm deg}P(s)}}$. Additionally, for ${\rm deg}P(s)>1$, it is clear that one cannot find BPS fundamental strings in directions along the bulk of the saxionic cone $\Delta$.
\begin{fct}\label{fct.eq}
	The following equality holds:
	\begin{equation}
		\left\{\vec{x}\cdot\vec{\zeta}_{\mathcal{T}_{e^i}}=w|\vec{\zeta}_{\mathcal{T}_{e^i}}|^2\,,\;\text{with }\,w\in\mathbb{Z}_{\geq 0}\;\forall i=1,\dots,n\right\}=\left\{\sum_{i=1}^nw_i\vec{\zeta}_{\mathcal{T}_{e^i}}\,|\,w_i\in\mathbb{Z}_{\geq 0}\;\forall i=1,\dots,n\right\}\,,
	\end{equation}
	where $\vec{\zeta}_{\mathcal{T}_{e^i}}$ are the $\zeta$-vectors associated to the elementary EFT strings.
\end{fct}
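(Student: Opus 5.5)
The plan is to work in a fixed growth sector where $P(s)\sim\prod_i (s^i)^{p_i}$, so that the metric is diagonal. In the flat coordinates $\hat s^i=\sqrt{p_i/2}\,\log s^i$ the elementary EFT string vectors are mutually orthogonal, $\vec{\zeta}_{\mathcal{T}_{e^i}}=\frac{1}{\sqrt{2p_i}}\hat{u}_i$ with $\hat u_i$ the unit vector along $\hat s^i$. This follows from \eqref{eq. tension app}: $\mathcal{T}_{e^i}\propto p_i/s^i$, so $\vec{\zeta}_{\mathcal{T}_{e^i}}=-\vec\nabla\log(s^i)^{-1/2}$ has only an $i$-th component, and its length agrees with Fact \ref{fact.length}. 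I will first establish this orthogonality, in the form
\begin{equation*}
\vec{\zeta}_{\mathcal{T}_{e^i}}\cdot\vec{\zeta}_{\mathcal{T}_{e^j}}=\delta_{ij}|\vec{\zeta}_{\mathcal{T}_{e^i}}|^2\,,
\end{equation*}
which is also the content of \eqref{eq. prod app} applied to an elementary string. Orthogonality is really all that is needed; the rest is linear algebra.

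With orthogonality in hand I prove the two inclusions. For ``$\supseteq$'', take $\vec x=\sum_j w_j\vec{\zeta}_{\mathcal{T}_{e^j}}$ and dot it with $\vec{\zeta}_{\mathcal{T}_{e^i}}$. Only the $j=i$ term survives, giving $w_i|\vec{\zeta}_{\mathcal{T}_{e^i}}|^2$ with $w_i\in\mathbb{Z}_{\geq0}$. For ``$\subseteq$'', the $\{\vec{\zeta}_{\mathcal{T}_{e^i}}\}_{i=1}^n$ form an orthogonal basis of the $n$-dimensional saxionic tangent space, so any $\vec x$ decomposes as
\begin{equation*}
\vec x=\sum_i \frac{\vec x\cdot\vec{\zeta}_{\mathcal{T}_{e^i}}}{|\vec{\zeta}_{\mathcal{T}_{e^i}}|^2}\,\vec{\zeta}_{\mathcal{T}_{e^i}}\,.
\end{equation*}
The hypothesis then states exactly that each coefficient is a non-negative integer $w_i$. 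In the left-hand set, $w$ is understood to depend on $i$.

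The main obstacle is justifying orthogonality, and with it the basis property, beyond the single-monomial approximation. This requires that every saxion enters the leading monomial of the growth sector, so that there are $n$ independent directions with $p_i\geq1$. If some saxion is absent, its direction degenerates (as in the footnote on growth sectors), and I would restrict $n$ to the saxions present. Off-diagonal corrections from subleading monomials are suppressed in the interior of the growth sector, so the identity holds asymptotically there. At the interfaces, the statement should be read along each EFT string flow, using \eqref{eq. prod app} rather than global flatness.
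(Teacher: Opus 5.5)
Your proposal is correct and is essentially the paper's proof. Orthogonality of the elementary EFT-string vectors gives $\supseteq$ by dotting with $\vec{\zeta}_{\mathcal{T}_{e^i}}$, and it gives $\subseteq$ by the unique expansion of $\vec x$ in that orthogonal basis, whose coefficients are forced to be the non-negative integers $w_i$. The only difference is that the paper quotes orthogonality from \eqref{eq. prod app}, whereas you derive it directly from the diagonal metric of the leading monomial in a growth sector, which is the more transparent justification; your closing remark about interfaces is not needed for the statement as it is used in the reconstruction algorithm.
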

\begin{proof}
	The $\supseteq$ inclusion is straightforward from \eqref{eq. prod app}. As for the $\subseteq$ one, take an element $\vec{x}$ in the left set. Since from \eqref{eq. prod app} the set of $\zeta$-vectors associated to elementary EFT strings define an orthogonal basis of $T_p\Delta$, any expansion of $\vec{x}=\sum_{i=1}^nw_i\vec{\zeta}_{\mathcal{T}_{e^i}}$ in such basis is straightforward in terms of non-negative integers, due to the uniqueness of the components $w_i$ and the definition of $\vec{x}$ as an element of the left set.
\end{proof}
\
\subsection{Bounds on the scaling weight\label{fact.bound}}
We will now describe how we can get bounds on the scaling weight $w_i$, given the degree $p_i$ of the polynomial along a given saxionic direction. From \eqref{eq.length app}, we have that $|\vec{\zeta}_{\mathcal{T}_{e^i}}|=\frac{1}{\sqrt{2p_i}}$. Then, from the Integral Scaling Relation \eqref{eq.intscaling}, we have that
\begin{equation}
w_i=\frac{\vec{\zeta}_*\cdot \vec{\zeta}_{\mathcal{T}_{e^i}}}{|\vec{\zeta}_{\mathcal{T}_{e^i}}|^2}\leq \frac{|\vec{\zeta}_*|}{|\vec{\zeta}_{\mathcal{T}_{e^i}}|}\,.
\end{equation}
Now, using that the length of the light towers is bounded from above by that of the KK-1, $|\vec{\zeta}_*|\leq|\vec{\zeta}_{\rm KK,1}|=\sqrt{\frac{d-1}{d-2}}=\sqrt{\frac{3}{2}}$, we arrive to
\begin{equation}
w_i\leq\lfloor\sqrt{3p_i}\rfloor
\end{equation}

We could try to find a similar argument for the lower bound on $w_i$. Assuming that the $\zeta$-vector of the lightest tower is aligned with that of the EFT string along the flow of the later (this is always the case for observed constructions), we have that, since $|\vec{\zeta}_*|\geq|\vec{\zeta}_{\rm osc}|=\frac{1}{\sqrt{d-2}}$, $w_i\geq\lceil\sqrt{p_i}\rceil$. This way, we arrive to
\begin{equation}\label{eq.bounds}
\boxed{
\lceil\sqrt{p_i}\rceil\leq w_i\leq\lfloor\sqrt{3p_i}\rfloor}\,,\quad\text{this is}\quad
\begin{array}{|c|c|c|c|c|c|c|c|c|c|c|c|} \hline p_i &\; 1\; &\; 2 \;& 3 & 4 &\; 5 \;& 6 & 7&8&9&10 &\dots \\ \hline w_i & 1 & 2 & 2,3 & 2,3 & 3 & 3,4 & 3,4 &3,4&3,5&4,5&\dots\\ \hline \end{array}\,,
\end{equation}
and thus the scaling weight grows like $\mathcal{O}(\sqrt{p_i})$. Imposing as a bottom-up constraint the $w\leq 3$ bound on the scaling weight, we recover that $\deg P(s)\leq 9$, which is not saturated by the known examples (the largest known value is given by $\deg P(s)=7$ for M-theory compactified on $G_2$-manifolds, see e.g., \cite{Beasley:2002db}).

We can compare these bounds, particularly the upper one in \eqref{eq.bounds}, with the K\"ahler potentials obtained in top-down constructions, see \cite{Lanza:2021udy, Grieco:2025bjy} as well as Appendix \ref{app.topdown}. Having that the K\"ahler potential has respectively degree 4 and 7 for CY and $G_2$ compactifications, we obtain\footnote{Note that one cannot say much about the lower bound, since usually there is always at least one saxion (usually the one associated to the dilaton) of degree 1 for multi-moduli settings.} 
	 \begin{equation}
	 	w\leq \left\{\begin{array}{ll}
	 	3&\text{for CY 3-folds}\\
	 	4&\text{for }G_2\text{ manifolds}
	 	\end{array}\right.\;.
	 \end{equation}
	This automatically translates to $w\leq 3$ for CY 3-fold compactifications. The same bound, however, for M-theory on $G_2$ manifolds is not immediate. To obtain it, note that in such case $K\sim -3\log V_X$ \cite{Beasley:2002db}, with $V_X$ the volume of the 7-manifold in $M_{\rm Pl,11}$ units, and as such $P(s)=V_X^3$. Now, an EFT string limit charged under all saxions (i.e., with degree 7) in turn is equivalent to an overall homogeneous growth of the whole volume, $V_X=P(s)^{1/3}\to\infty$, and thus a full decompactification of the complete 7-manifold to 11d, with $m_{\rm KK,7}$ being the leading tower in that limit. This means that
	 \begin{equation}
	  w \frac{1}{\sqrt{2\times 7}}\leq |\vec{\zeta}_{\rm KK,7}|=\sqrt{\frac{9}{14}}\;,
	 \end{equation}
	 precisely saturated for $w=3$ and forbidding larger values. One can then conclude that in the top-down cases where one could naively have $w\geq4$ from a bottom-up perspective, this is never realized.\\

\section{Top-down duality frames from String and M-theory compactifications\label{app.topdown}}
In this Appendix we present a quick overview of the global arrangements of duality frames and light towers, together with the EFT strings, for a wide variety of 4d $\mathcal{N}=1$ top-down constructions studied in \cite{Lanza:2021udy,Grieco:2025bjy}, to which we refer for more details and discussion. In such works the Integral Scaling Relation $\vec\zeta_{*} \cdot \vec\zeta_{\cT_{\mathbf{e}}}=w|\vec\zeta_{\cT_{\mathbf{e}}}|^2$ with $w\in\{0,1,2,3\}$ holds for all light towers generating the convex hull of $\zeta$-vectors. We also discuss possible obstructions of some limits within the saxionic cone due to effects not accessible from the EFT data.

\subsection{Heterotic \texorpdfstring{$E_8\times E_8$}{E8xE8} on CY 3-folds}
We first consider compactifications of HE string theory on a CY 3-fold $X$. The 4d saxions $\{s^a\}_{a=1}^{h^{1,1}}$ originate from the expansion of the string frame K\"ahler form $J=s^a\omega_a$ over an integral basis $\{\omega_a\}_{a=1}^{h^{1,1}}$ of $H^{1,1}(X)$, together with the \emph{universal saxion} $s^0=e^{-2\phi_4}=e^{-2\Phi}\mathcal{V}_X$, where $\Phi$, $\phi_4$ and $\mathcal{V}_X=\frac{1}{3!}\kappa_{abc}s^as^bs^c$ are respectively the 10d heterotic dilaton, the 4d dilaton and the volume of $X$ in string units.

The leading form of K\"ahler potential is given by $K=-\log s^0-\log\mathcal{V}_X$, modulo non-perturbative corrections. EFT strings correspond to the F1 string and NS5-branes wrapped on effective divisors $D=e^a[\omega_a]$, with $\{[\omega_a]\}_{a}^{h^{1,1}}$ an integer basis of $H_{1,1}(X)$. 

In Figure \ref{fig.bu-HE} we present the arrangement of duality frames and light towers, together with the $\zeta$-vectors for EFT strings, for different forms of the $\mathcal{V}_X=\frac{1}{3!}\kappa_{abc}s^as^bs^c$ volume.

\begin{figure}[ht]
\begin{center}
\begin{subfigure}[b]{0.5\textwidth}
\centering
     \resizebox{0.95\textwidth}{!}{%
    \includegraphics[scale=1]{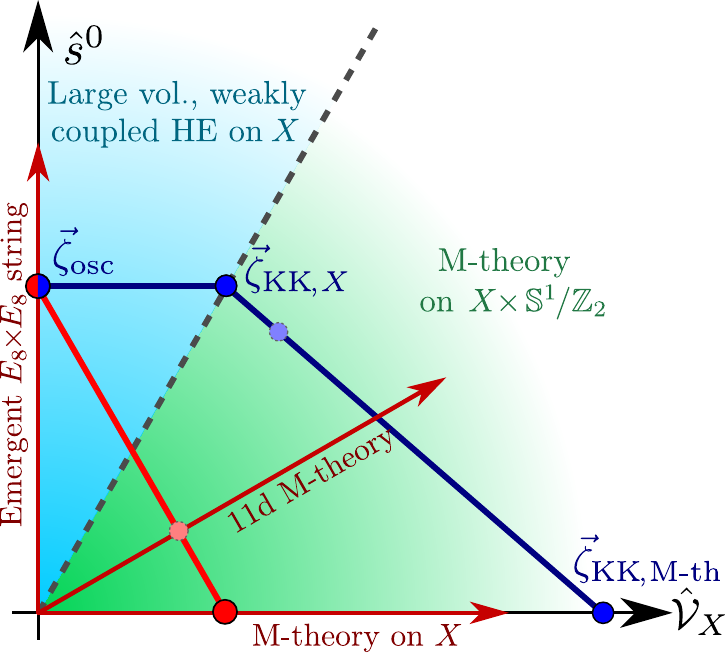}
    }
\caption{Slice $\{s^0,\mathcal{V}_X\}$} \label{fig.bu-HE-1}
\end{subfigure}
\begin{subfigure}[b]{0.49\textwidth}
\centering
     \resizebox{0.95\textwidth}{!}{%
    \includegraphics[scale=1]{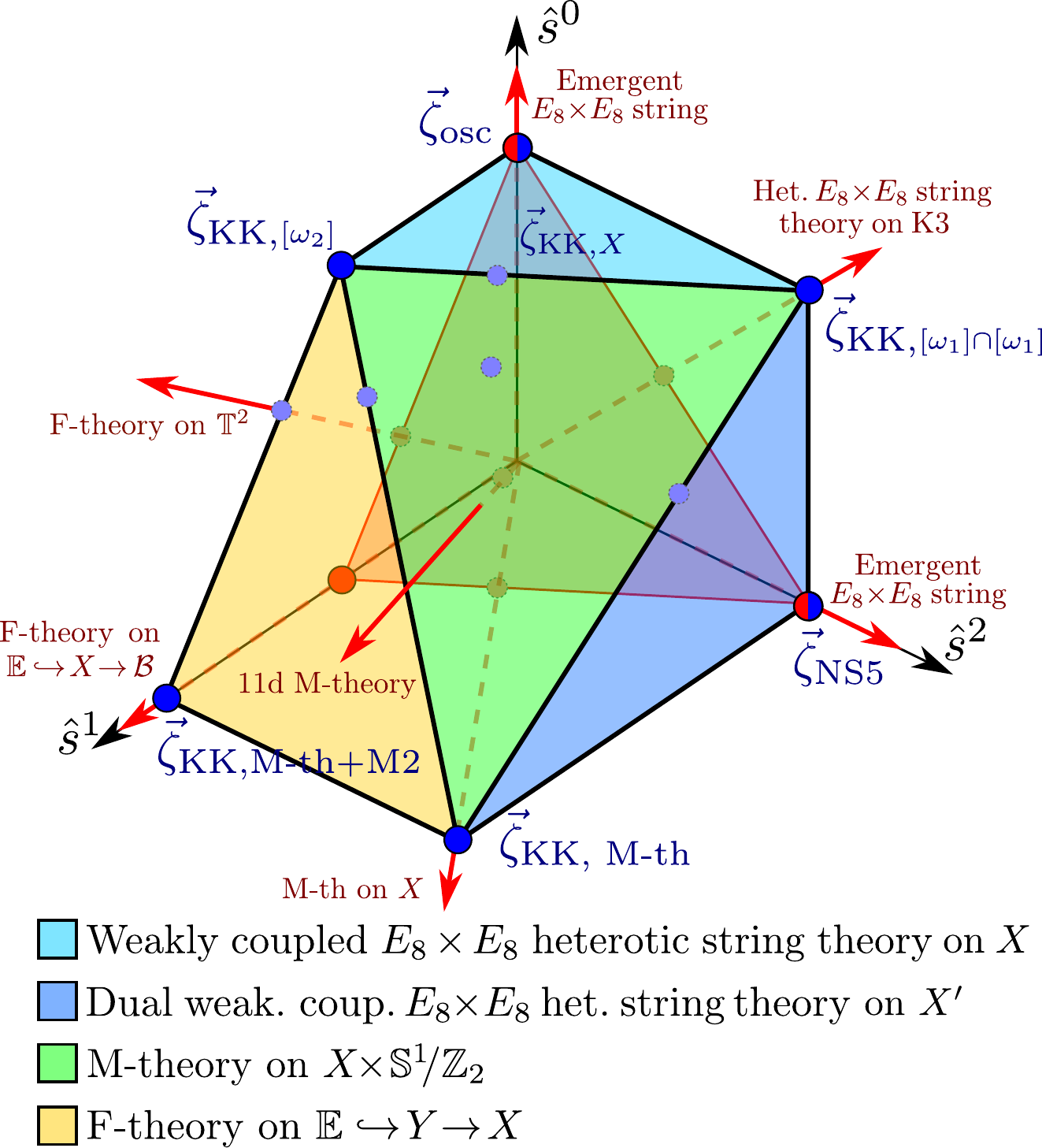}
    }
\caption{Slice $\{s^0,s^1,s^2\}$ for $\mathcal{V}_X\sim (s^1)^2s^2$.} \label{fig.bu-HE-2}
\end{subfigure}
\begin{subfigure}[b]{0.49\textwidth}
\centering
     \resizebox{0.95\textwidth}{!}{%
    \includegraphics[scale=1]{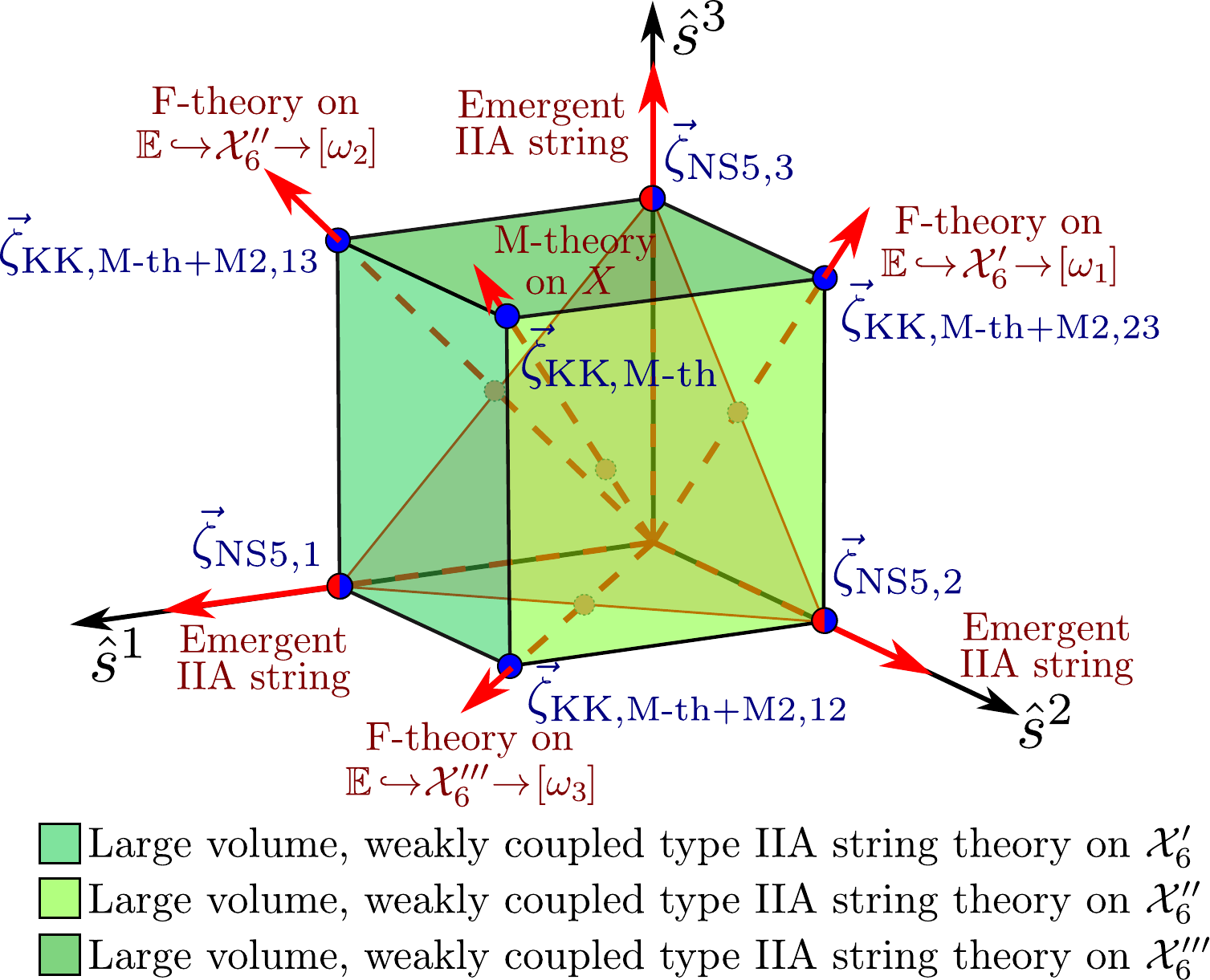}
    }
\caption{Slice $\{s^1,s^2,s^3\}$ for $\mathcal{V}_X\sim s^1s^2s^3$.} \label{fig.bu-HE-3}
\end{subfigure}
\begin{subfigure}[b]{0.49\textwidth}
\centering
     \resizebox{0.95\textwidth}{!}{%
    \includegraphics[scale=1]{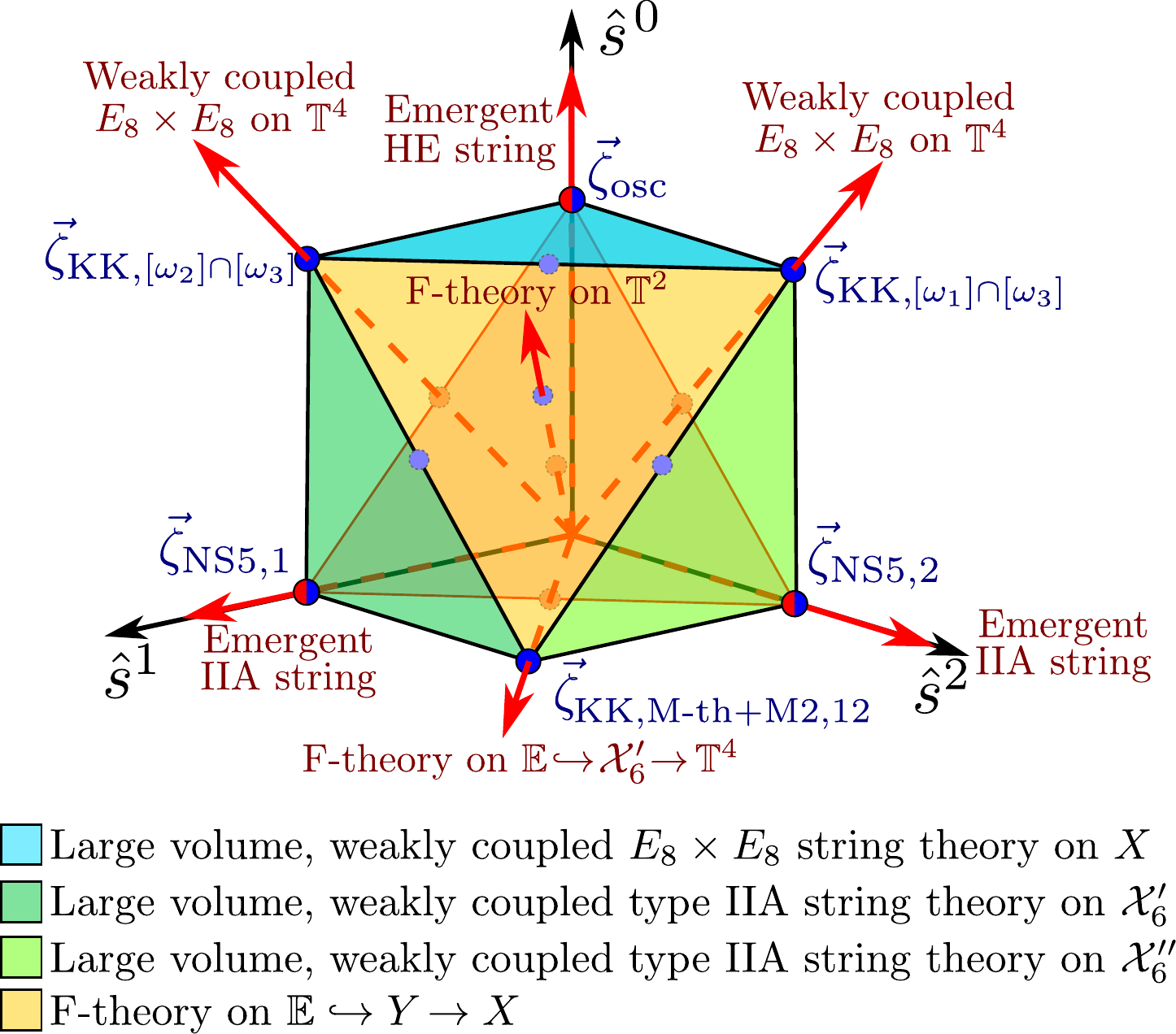}
    }
\caption{Slice $\{s^0,s^1,s^2\}$ for $\mathcal{V}_X\sim s^1s^2s^3$.} \label{fig.bu-HE-4}
\end{subfigure}
\caption{Arrangement of duality frames, light towers (in blue) and EFT strings (in a red hyperplane, with the EFT string flow directions and the associated dual theory at infinite distance labeled) for different volume leading forms of $\mathcal{V}_X$, in (locally) flat coordinates, for HO on CY 3-fold compactifications. For simplicity we do not label the $\zeta$-vectors of the EFT strings. Bounded states of light towers and EFT strings are depicted in a lighter shade. See \cite[Section 3]{Grieco:2025bjy} for more details.
\label{fig.bu-HE}}
\end{center}
\end{figure}

Note that, as explained in \cite{Witten:1996mz,Cvetic:2024wsj}, the large volume limit $\mathcal{V}_X\gg s^0$ (for which $\Phi\to\infty$ and we would expect decompactification of the M-theory Hor\v{a}va-Witten interval) might be obstructed by a diverging of the gauge coupling and warp factor at one of the $E_8$-boundaries, at least in the standard embedding of the gauge group. While such singularity can be resolved by the inclusion of non-perturbative effects, see \cite{Cvetic:2025nfx}, it is not expected that the resulting theory corresponds to a perturbative 4d Minkowski vacuum. Note that these limits might not obstructed if one takes a different embedding of the $E_8\times E_8$ gauge group, see e.g., \cite{Reig:2025gch}, in which case the M-theory description could still be valid.

\subsection{Type IIA on CY orientifolds}
Next we take type IIA string theory compactified on a CY orientifold $X/\mathfrak{i}$ with O6 involution $\mathfrak{i}:X\to X$ acting on the K\"ahler and $(3,0)$-forms as $\mathfrak{i}^*J=-J$ and $\mathfrak{i}^*\Omega=\bar\Omega$ \cite{Grimm:2004ua}. We focus on the K\"ahler sector spanned by the saxions $\{s^a\}_{a=1}^{b_2(X)^+}$ measuring the string-frame volume of even 2-cycles of $H_2(X)^+$ (so that $\mathcal{V}_X=\frac{1}{3!}\kappa_{abc}s^as^bs^c$) and the Hitchin function $\mathcal{H}=\frac{i}{8}\int_X e^{-\Phi}\Omega\wedge\bar\Omega$, playing the role of the universal saxion $s^0$. Working in the large-volume/small coupling perturbative regime (n=thus neglecting internal warping and backreacting) the K\"ahler potential is given by $K=-\log\mathcal{H}-\log\mathcal{V}_X$ at leading order, again modulo non-pertubative contributions.

The setting is similar to the previously discussed heterotic $E_8\times E_8$ string on a CY 3-fold, with EFT strings corresponding to the fundamental type IIA string and NS5-branes wrapping \emph{even} (nef and effective) divisors $D^+e^a D^+_a$, see \cite{Katz:2020ewz}.\footnote{One also finds EFT strings associated to the complex structure sector coming from D4-branes wrapping calibrated SLag 3-cycles, see \cite{Lanza:2021udy}, which were not considered in the analysis from \cite{Grieco:2025bjy}.} Additionally, limits for which $\mathcal{V}_X\gg s^0$ would again result in the 10d type IIA dilaton growing, thus decompactifying the M-theory circle. This way, as argued in \cite{Grieco:2025bjy}, the arrangement of duality frames, light towers and EFT strings is analogous to that of heterotic $E_8\times E_8$ string on a CY 3-fold, see Figure \ref{fig.bu-HE}, simply replacing the Ho\v{r}ava-Witten $\mathbb{S}^1/\mathbb{Z}_2$ by $\mathbb{S}^1$. 

Now, as discussed in \cite{Kaufmann:2026tsy}, there can be also generic obstructions to having infinite distance limits (i.e., a perturbative description) $\mathcal{V}_X\gg s^0$, following from the argued absence of an infinite tower of KK modes associated to the $\mathbb{S}^1$ fiber in the M-theory uplift to a $G_2$ manifold \cite{joyce1996a,joyce1996b,joyce2004constructing,Kachru:2001je},\footnote{The $\mathbb{S}^1$ circle is not a direct product, but rather is backreacted by the presence of D6-branes (where the $\mathbb{S}^1$ shrinks to zero size in a Taub-NUT geometry \cite{Townsend:1995kk}) or O6-planes (where the full 7-dimensional compact geometry is  locally given by a Atiyah-Hitchin manifold \cite{Seiberg:1996nz}).} as well as from possible inconsistencies in the worldsheet theory of the EFT string associated to such limit with the orientifold projection. Though these obstructed limits in the 4d $\mathcal{N}=1$ theory would be equivalent to those in the $E_8\times E_8$ construction with standard embedding, this would not be a problem in the 4d $\mathcal{N}=2$ case (compactifying type IIA on $X$ without further orientifolding), where the strong coupling limit is simply M-theory on the product $\mathbb{S}^1\times X$. In this case, the arrangement of duality frames would be a direct translation of those in Figure \ref{fig.bu-HE} upon $\mathbb{S}^1/\mathbb{Z}_2$ by $\mathbb{S}^1$ replacement.

\subsection{Heterotic \texorpdfstring{$SO(32)$}{SO(32)}/Type I on CY 3-folds}
Considering HO string theory on a CY 3-fold $X$ results in a similar 4d ingredients as in the $E_8\times E_8$ counterpart, with again EFT strings coming from the heterotic F1-string and NS5-branes wrapping effective divisors of $X$. One big difference is, however, the strong coupling limit, where in this case is the S-dual Type I string theory on the same 3-fold $X$ rather than decompactification to M-theory. The new theory has a non-BPS fundamental string (which does not correspond to an EFT string).\footnote{From the type I perspective, EFT strings correspond to the D1-brane and D5-branes wrapping divisors, in exact analogy with the heterotic F1-string and the wrapped NS5-branes.} Now, both in the heterotic and type I frames one could perform even-numbers of T-dualities on curves, divisors or the full $X$ to access additional saxionic directions, though generally this will not be allowed unless for special fibration points with SUSY enhancement, with said infinite distance limits being obstructed.

In Figure \ref{fig.bu-HO} we depict the arrangement of duality frames in the Heterotic $SO(32)$/Type I compactification on $X$, with obstructed directions colored in gray.

\begin{figure}[ht]
\begin{center}
\begin{subfigure}[b]{0.45\textwidth}
\centering
     \resizebox{0.95\textwidth}{!}{%
    \includegraphics[scale=1]{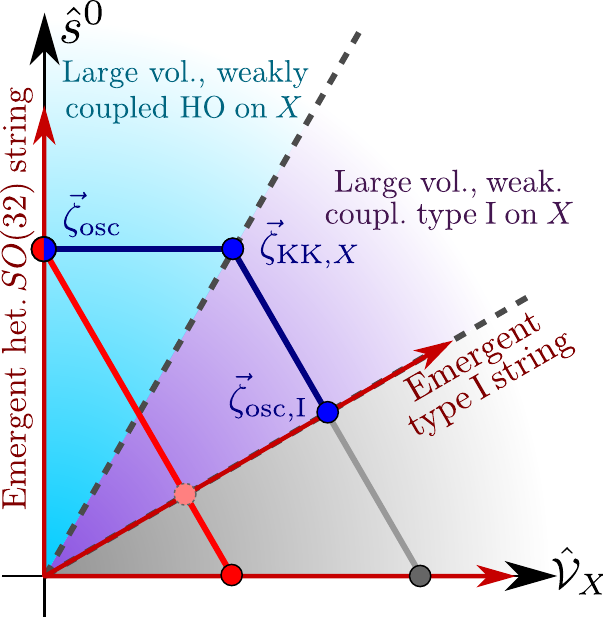}
    }
\caption{Slice $\{s^0,\mathcal{V}_X\}$} \label{fig.bu-HO-1}
\end{subfigure}
\begin{subfigure}[b]{0.54\textwidth}
\centering
     \resizebox{\textwidth}{!}{%
    \includegraphics[scale=1]{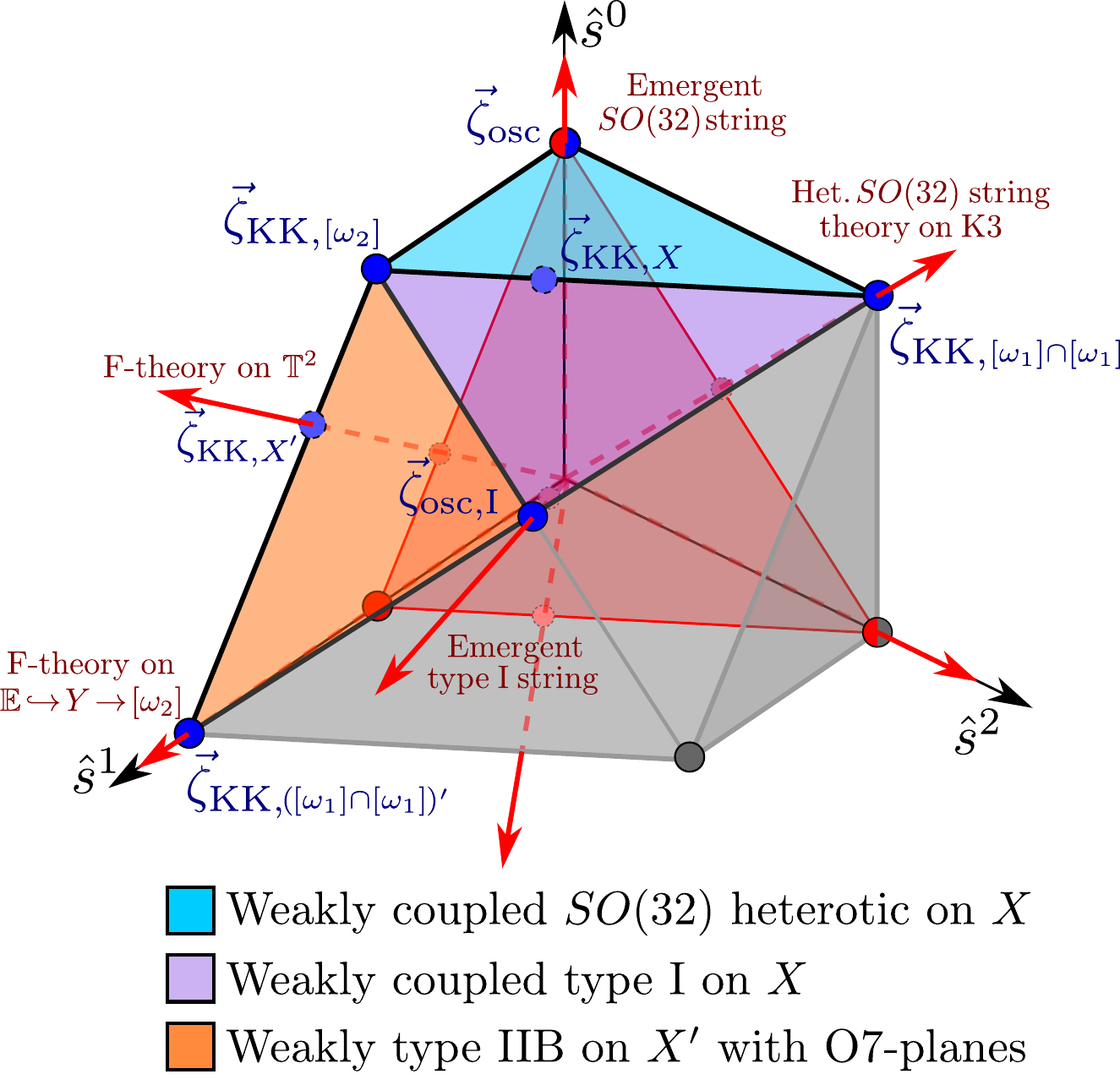}
    }
\caption{Slice $\{s^0,s^1,s^2\}$ for $\mathcal{V}_X\sim (s^1)^2s^2$.} \label{fig.bu-HO-2}
\end{subfigure}

\caption{Arrangement of duality frames, light towers (in blue) and EFT strings (in a red hyperplane, with the EFT string flow directions and the associated dual theory at infinite distance labeled) for different volume leading forms of $\mathcal{V}_X$, in (locally) flat coordinates, for HO on CY 3-fold compactifications. Obstructed regions are depicted in gray. For simplicity we do not label the $\zeta$-vectors of the EFT strings. Bounded states of light towers and EFT strings are depicted in a lighter shade. See \cite[Section 4.2]{Grieco:2025bjy} for more details.
\label{fig.bu-HO}}
\end{center}
\end{figure}

\subsection{Type IIB on CY orientifolds/F-theory on CY 4-folds}
We then turn to type IIB on CY orientifolds, better described by F-theory on elliptically fibered CY 4-folds $\mathbb{E}\hookrightarrow Y\to X$, with the base $X$ a K\"ahler manifold. We can expand the (Einstein frame) K\"ahler form of the base $J=v_a[D^a]$, where $\{D^a\}_{a=1}^{b_4(X)}$ is a base of $H_4(X,\mathbb{Z})$ divisors whose volume in 10d Planck units is measure by the 4d $\mathcal{N}=1$ saxions $\{s^a=\frac{1}{2}\in_{D^a}J\wedge J\}=\frac{1}{2}\kappa^{abc}v_bv_c$. Note the non-trivial relation between the 4d saxions and the K\"ahler moduli $\{v_a\}_{a=1}^{b_4(X)}$. The K\"ahler potential of the K\"ahler sector is given (again, modulo non-perturbative terms) by
\begin{equation}
	K=-2\log\int_X J\wedge J\wedge J=-2\log\big(\underbrace{\kappa^{abc}v_av_bv_c}_{3!V_X}\big)\,,
\end{equation}
of degree $\frac{3}{2}$ on the $\{s^a\}_{a=1}^{b_2(X)}$ saxions , where $V_X$ is the base volume in 10d Planck. We can additionally consider the complex structure moduli of the $Y$ 4-fold, of which we focus on that of the elliptic fiber, identified as the type IIB axio-dilaton $\tau=C_0+ie^{-\Phi}$. Along asymptotic limits in the above saxions, the 4d EFT strings come from D3-branes wrapped along effective curves $C=e^aC_a$ (with $\{C_a\}_{a=1}^{b_2(X)=b_4(X)}$ a dual base to that of the previous divisors) and a D7-brane wrapped on the whole $X$ base. Note that the type IIB F1-string is not an EFT string, as it can break (and thus is non BPS) due to the presence of O7-planes spanning along the 4d spacetime.

Depending on the fibration structure of the base ($\mathbb{T}^2\hookrightarrow X\to D$ or $\mathbb{P}^1\hookrightarrow X\to D$) of the K\"ahler base, the arrangement of duality frames (and the microscopic interpretation of the light towers) differs outside of the weakly-coupled, large volume initial type IIB compactification. We depict this in Figure \ref{fig.bu-IIB}. Note that, while in principle we can perform T-dualities along the directions of the fiber of the $X$ base when these become string-sized, this cannot generally be done along the larger divisors (or the whole base $X$), unless $X$ is a Cartesian product, which would imply some SUSY enhancement. This then prevents us from taking directions in the saxionic cone where ${\rm Im}\tau$ (and as a result the string length) grows faster then the internal volumes of $X$, see Figure \ref{fig.bu-IIB}. 

\begin{figure}[ht]
\begin{center}
\begin{subfigure}[b]{0.51\textwidth}
\centering
     \resizebox{0.95\textwidth}{!}{%
    \includegraphics[scale=1]{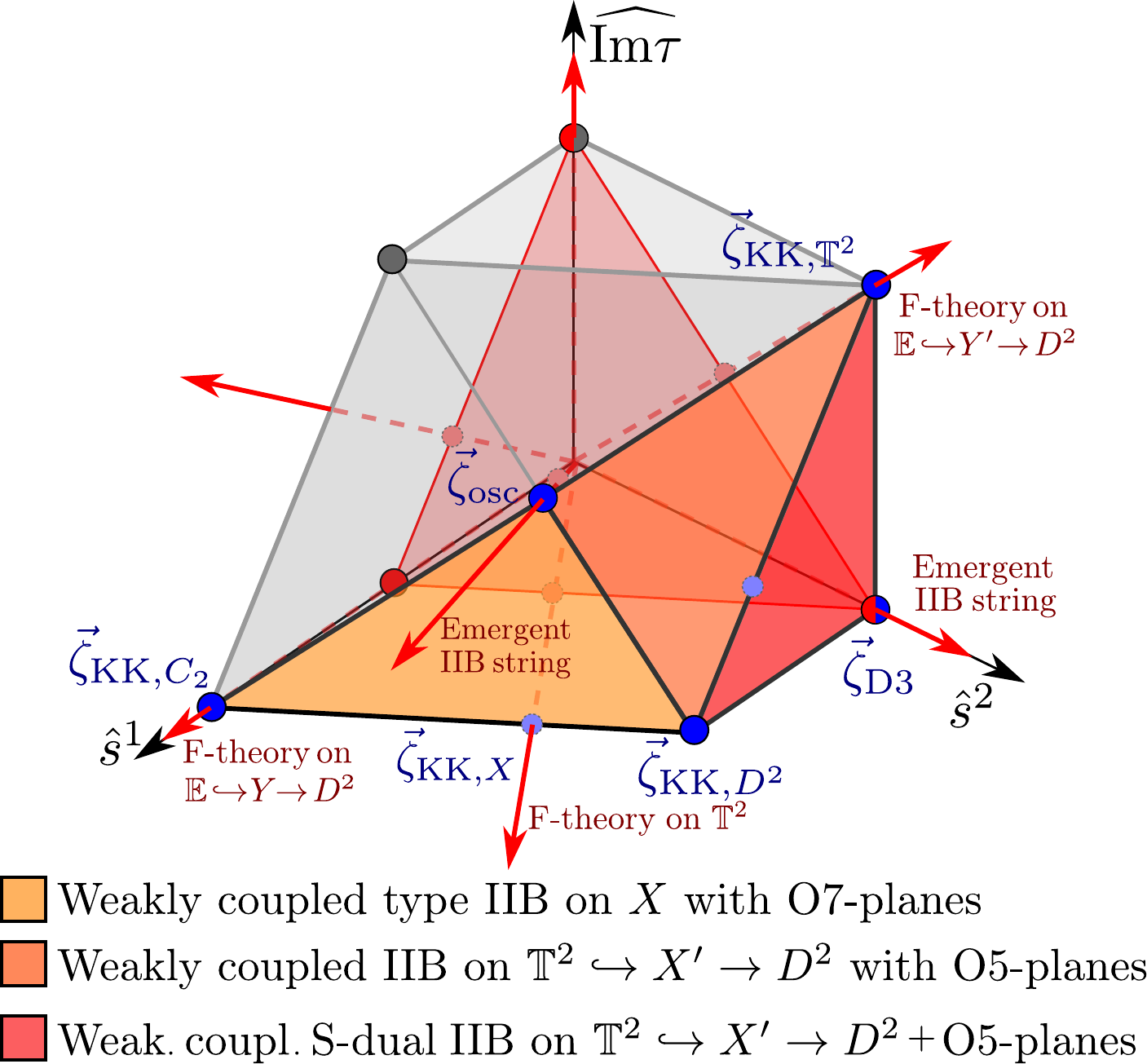}
    }
\caption{Base with $\mathbb{T}^2\hookrightarrow X\to D$ fibration.} \label{fig.bu-IIB-1}
\end{subfigure}
\begin{subfigure}[b]{0.48\textwidth}
\centering
     \resizebox{0.95\textwidth}{!}{%
    \includegraphics[scale=1]{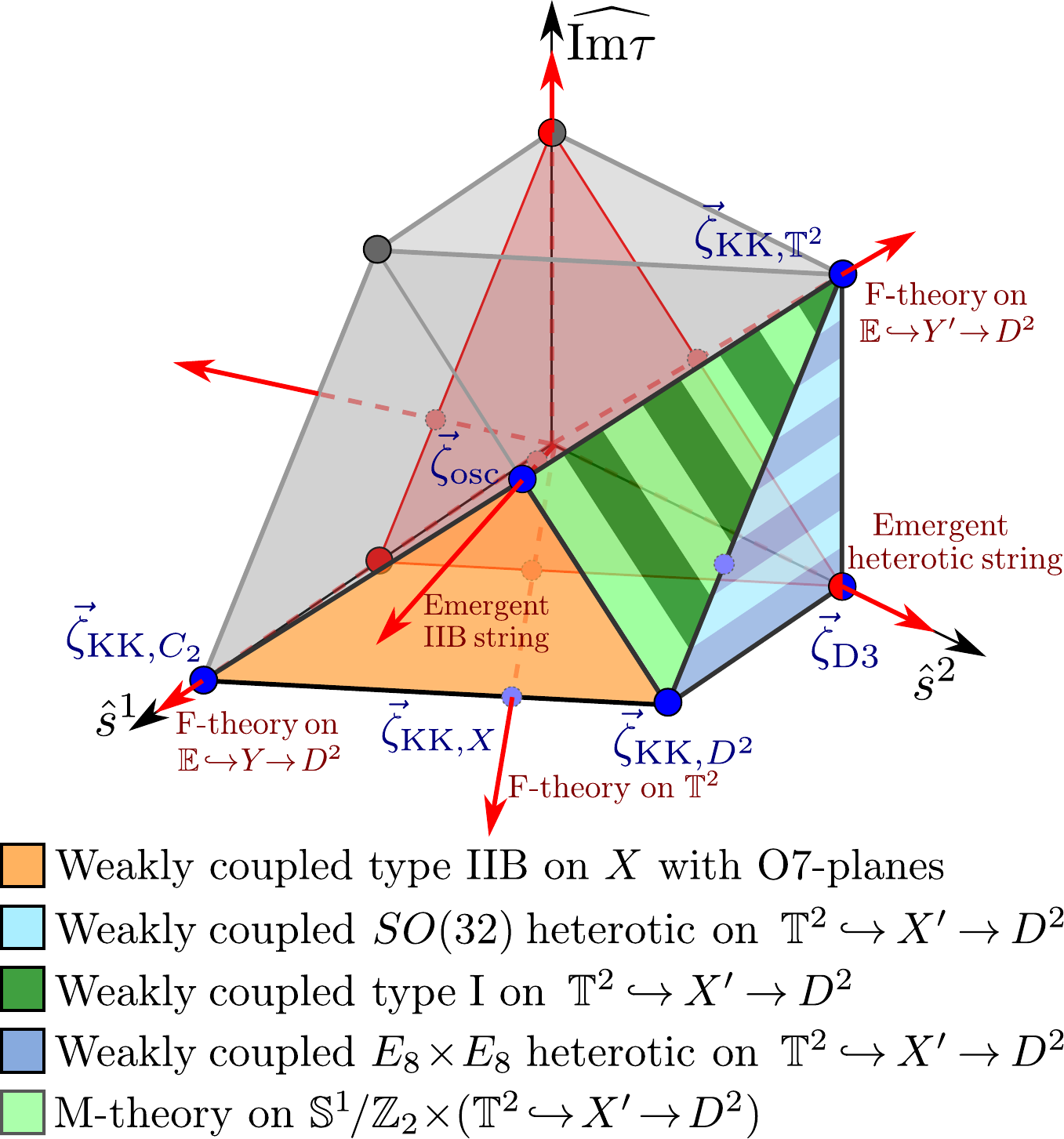}
    }
\caption{Base with $\mathbb{P}^1\hookrightarrow X\to D$ fibration.} \label{fig.bu-IIB-2}
\end{subfigure}

\caption{Arrangement of duality frames, light towers (in blue) and EFT strings (in a red hyperplane, with the EFT string flow directions and the associated dual theory at infinite distance labeled) for different volume leading forms of $\mathcal{V}_X$, in (locally) flat coordinates, for F-theory on a CY 4-fold $\mathbb{E}\hookrightarrow Y\to X$, where $V_X\sim v_1^2v_2\sim s^1\sqrt{s^2}$. Obstructed regions are depicted in gray. For simplicity we do not label the $\zeta$-vectors of the EFT strings. Bounded states of light towers and EFT strings are depicted in a lighter shade. See \cite[Section 4.3]{Grieco:2025bjy} for more details.
\label{fig.bu-IIB}}
\end{center}
\end{figure}

As explored in \cite{Kaufmann:2026fli,Kaufmann:2026mha}, not only is the perturbative description for limits where ${\rm Im}\tau$ grows much faster than the K\"ahler directions of the base, but this generically applies to the saxions associated to the complex structure moduli of the 4-fold $Y$. Pure complex structure limits are expected to be obstructed from quantum effects, in such a way that we need to rescale the K\"ahler moduli (at a rate equal or faster than the complex structure moduli) in order to reach asymptotic regions at perturbative control.

\subsection{M-theory on \texorpdfstring{$G_2$}{G2} Joyce manifolds\label{app.Joyce}}

We finally consider 11d M-theory compactified on $G_2$-manifolds, 7-dimensional Ricci-flat spaces equipped with an associative 3-form $\Phi$, which we can expand in terms of a base $\{[\Sigma_a]\}_{a=1}^{b_3(X_7)}$ for $H^3(X_7)$ as $\Omega=s^a[\Sigma_a]$, with the 4d saxions $s^a$ measuring the volume of the Poincar\'e dual associative 3-cycles $\tilde\Sigma^a$ (with $\tilde\Sigma^a\cdot\Sigma_b=\delta^a_b$) in 11d Planck units. The volume of $X$ in 11d Planck units and 4d $\mathcal{N}=1$  K\"ahler potential (again, modulo non-perturbative contributions) are expressed in terms of the associative 3-form as
\begin{equation}
	V_X=\frac{1}{7}\int_{X}\Phi\wedge\star\Phi\,,\quad K=-\log V_X\,.
\end{equation}
Both the coassociative 4-cycles and the dual associative 3-cycles are calibrated by $\star \Phi$ and $\Phi$, respectively, with EFT strings corresponding to M5-branes wrapped over the coassociative 4-cycles $\Sigma=e^a\Sigma_a$.

Generically, limits in which a single saxion $s^a$ is sent to infinity (i.e., elementary EFT string limits) correspond to asymptotically associative fibrations for which the Joyce manifold takes the form $\mathbb{T}^4\hookrightarrow X\to \mathcal{B}_3$ or ${\rm K3}\hookrightarrow X\to  \mathcal{B}_3$, with a relatively shrinking fiber and a growing base. These limits are well within perturbative control, and respectively correspond to an emergent type II or heterotic string. However, in limits where several saxions co-scale the fibration structures can be way more involved.

Rather than study the full generality of such spaces, whose geometries along asymptotic limits of their moduli spaces is way less understood than for Calabi-Yau manifolds, we take a particular subset consisting in \textbf{Joyce} compact \textbf{manifolds} (of the first kind)\cite{joyce1996a,joyce1996b,joyce2004constructing}: toroidal orbifolds $X=\mathbb{T}^7/(\mathbb{Z}_2\oplus\mathbb{Z}_2\oplus\mathbb{Z}_2)$, whose asymptotic fibrations are well known, see e.g., \cite{Liu:1998tha} or \cite[Appendix A]{Grieco:2025bjy}. For these manifolds $H_\bullet(X;\mathbb{Z})=(\mathbb{Z},0,0,\mathbb{Z}^7,\mathbb{Z}^7,0,0,\mathbb{Z},0,\dots)$, with 3-cycle basis topologically equivalent to tori, $\{\tilde\Sigma_a\simeq\mathbb{T}^3\}_{a=1}^7$ of $H_3(X;\mathbb{Z})$. While after quotienting by $\mathbb{Z}_2\oplus\mathbb{Z}_2\oplus\mathbb{Z}_2$ the manifold ceases to have 1-cycles, we can identify the size of the 3-cycles in terms of the original radii of the $\mathbb{T}^7$.\footnote{The quotient $\mathbb{T}^7/(\mathbb{Z}_2\oplus\mathbb{Z}_2\oplus\mathbb{Z}_2)$ results in singularities consisting in $\sqcup_{i=1}^{12}\mathbb{T}^3$, whose resolution results in the introduction of 36 extra cycles \cite{Liu:1998tha,Lanza:2021udy}, whose volume is controlled by additional \emph{twisted saxions} $\{\tilde{s}^\alpha\}_{\alpha=1}^{36}$. For fixed values their contribution to the overall volume is finite and thus in asymptotic limits of the initial saxions can be safely ignored.} For example, for a particular choice   
\begin{equation}
	\Phi=\eta_{123}+\eta_{145}+\eta_{167}+\eta_{246}-\eta_{257}-\eta_{347}-\eta_{356}\,\quad\text{with}\;\eta_{abc}=R_aR_bR_c\dd y^a\wedge\dd y^b\dd y^c\,,
\end{equation}
where $y^a\sim y^a+1$ and $\{R^a\}_{a=1}^7$ are the initial radii of $\mathbb{T}^7$ in $M_{\rm Pl,11}$ units, we have \cite{Lanza:2021udy,Castellano:2023jjt}
\begin{equation}\label{eq.radii saxion joyce}
	s^a=R_{a_I}R_{a_J}R_{a_K}\qquad\text{with}\quad\begin{array}{c|ccccccc}
	a&1&2&3&4&5&6&7\\\hline
	a_I&1&1&1&2&2&3&3\\
	a_J&2&4&6&4&5&4&5\\
	a_K&3&5&7&6&7&7&6
	\end{array}\;,
\end{equation}
 and modulo corrections from twisted saxions from the resolutions, $V_X=\frac{1}{8}R_1\dots R_7\sim(s^1\dots s^7)^{1/3}$. Now, apart from the coassociative fibrations  $\mathbb{T}^4\hookrightarrow X\to \mathcal{B}_3$ or ${\rm K3}\hookrightarrow X\to  \mathcal{B}_3$, for Joyce manifolds we can have additional asymptotic fibrations for non-elementary saxionic limits, such as \cite{Liu:1998tha}
 \begin{equation}
 	\mathbb{T}^3\hookrightarrow X\to\mathbb{T}^4\;,\quad \mathbb{T}^3\hookrightarrow X\to{\rm K3}\quad\text{or}\quad Y_6\hookrightarrow X\to\mathbb{S}^1/\mathbb{Z}_2\,,
 \end{equation}
 where $Y_6$ its a Calabi-Yau 3-fold which can further present asymptotic fibrations such as
 \begin{equation}
 	\mathbb{T}^3\hookrightarrow Y_6\to Q_3\quad\text{or}\quad {\rm K3}\hookrightarrow Y_6\to\mathbb{T}^2/\mathbb{Z}_2\,,
 \end{equation}
 with $Q_3$ and $\mathbb{T}^2/\mathbb{Z}_2$ orbifolds homeomorphic to $\mathbb{S}^3$ and $\mathbb{S}^2$, respectively. On top of the different fibrations, we can have distinct limits where the relative growth of fiber and base changes, see \cite[Section 5]{Grieco:2025bjy}. This includes limits where the fiber grows relatively to the base, which could in principle result in problematic quantum obstructions such as those of \cite{Witten:1996mz,Cvetic:2024wsj,Kaufmann:2026fli,Kaufmann:2026mha,Kaufmann:2026tsy}. An exhaustive study of the possible obstruction of infinite distance limits in $G_2$-manifolds has not been performed in the literature, and indeed would be required to understand which of these limits correspond to perturbative regimes. In \cite{Grieco:2025bjy} the authors performed a prospective study of the different limits, taking as an assumption that said limit indeed existed. Then the light tower of states would correspond to the new degrees of freedom of the dual theory emerging at infinite distance, thus proposing a set of new duality frames completing the K\"ahler cone. These are depicted in Figures \ref{fig.buG2-2} and \ref{fig.buG2-3}.

\begin{figure}[H]
\begin{center}
\begin{subfigure}[b]{0.32\textwidth}
\captionsetup{width=.95\linewidth}
\center
\includegraphics[width=\textwidth]{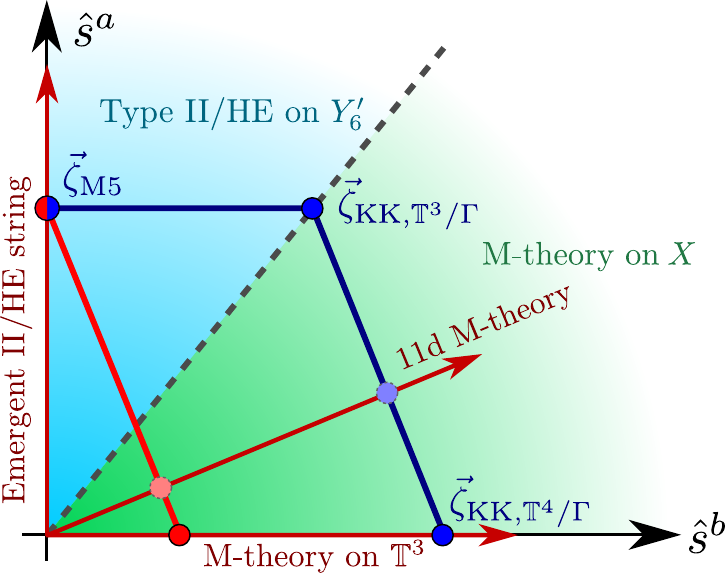}
\caption{\hspace{-0.3em}\scriptsize $(s^a,s^b)\sim(s^{a_1},s^{a_2}\dots s^{a_7})$.}\label{f.SlicesMth-1}
\end{subfigure}
\begin{subfigure}[b]{0.32\textwidth}
\captionsetup{width=.95\linewidth}
\center
\includegraphics[width=\textwidth]{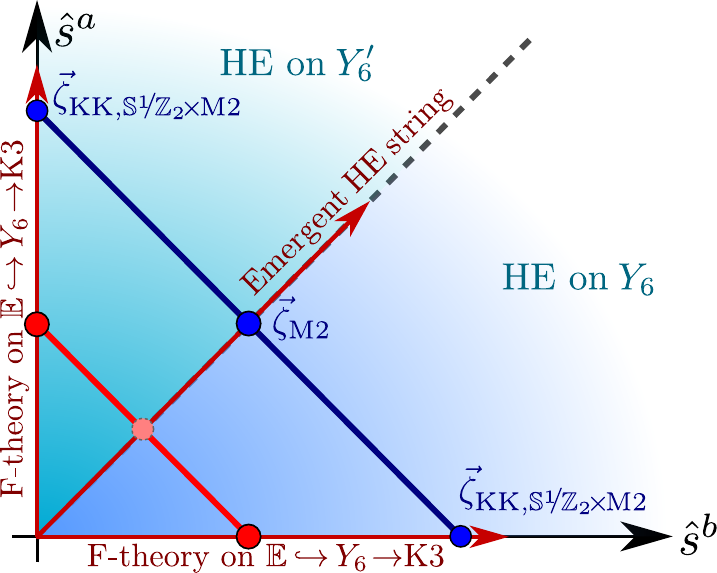}
\caption{\hspace{-0.3em} $(s^a,s^b)\sim(s^{a_1}s^{a_2},s^{a_3}s^{a_4})$.} \label{f.SlicesMth-2}
\end{subfigure}
\hfill
\begin{subfigure}[b]{0.32\textwidth}
\captionsetup{width=.95\linewidth}
\center
\includegraphics[width=\textwidth]{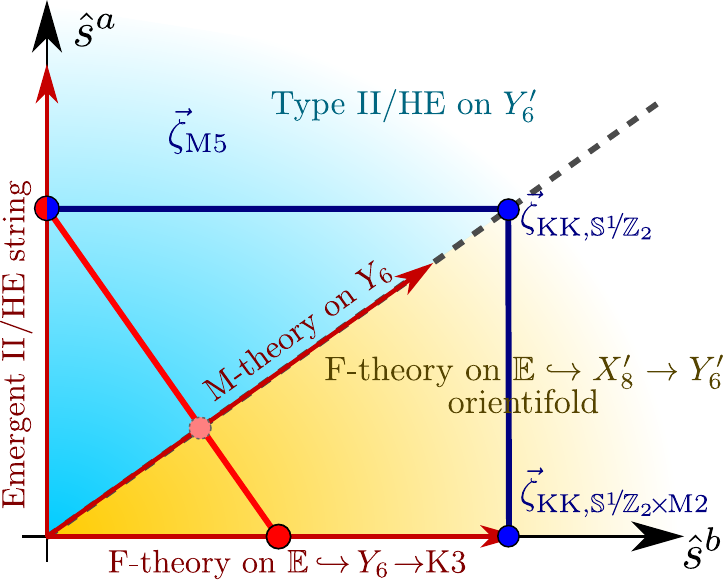}
\caption{\hspace{-0.3em} $(s^a,s^b)\sim(s^{a_1},s^{a_2}s^{a_3})$}\label{f.SlicesMth-3}
\end{subfigure}
\begin{subfigure}[b]{0.32\textwidth}
\captionsetup{width=.95\linewidth}
\center
\includegraphics[width=\textwidth]{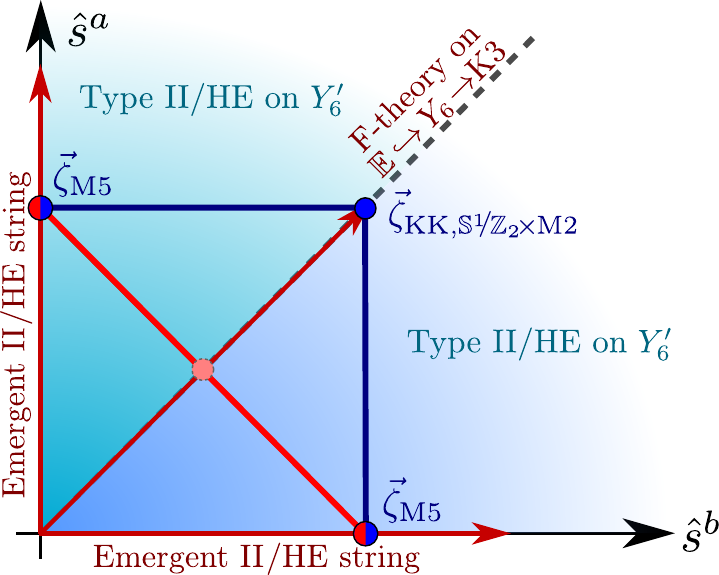}
\caption{\hspace{-0.3em} $(s^a,s^b)\sim(s^{a_1},s^{a_2})$.} \label{f.SlicesMth-4}
\end{subfigure}
\hfill
\begin{subfigure}[b]{0.32\textwidth}
\captionsetup{width=1.1\linewidth}
\center
\includegraphics[width=\textwidth]{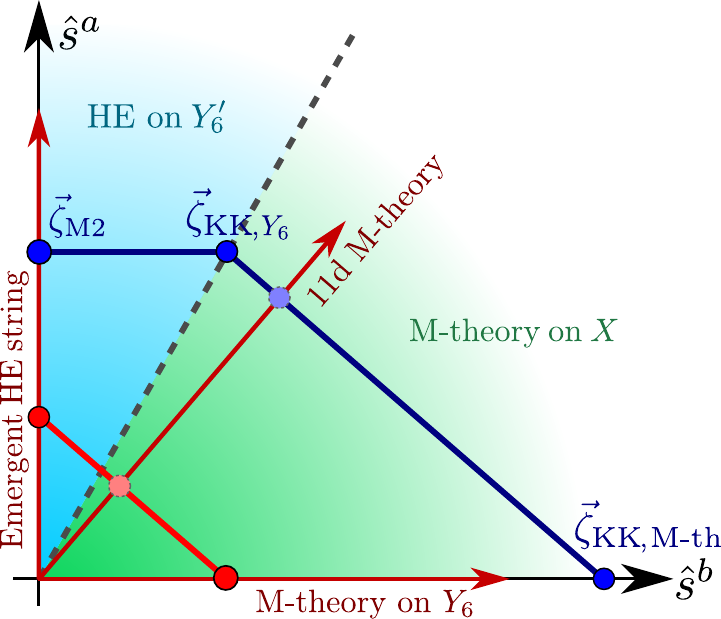}
\caption{\hspace{-0.3em} $(s^a,s^b)\sim(s^{a_1}\dots s^{a_4},s^{a_5}s^{a_6} s^{a_7})$}\label{f.SlicesMth-5}
\end{subfigure}
\begin{subfigure}[b]{0.32\textwidth}
\captionsetup{width=.95\linewidth}
\center
\includegraphics[width=\textwidth]{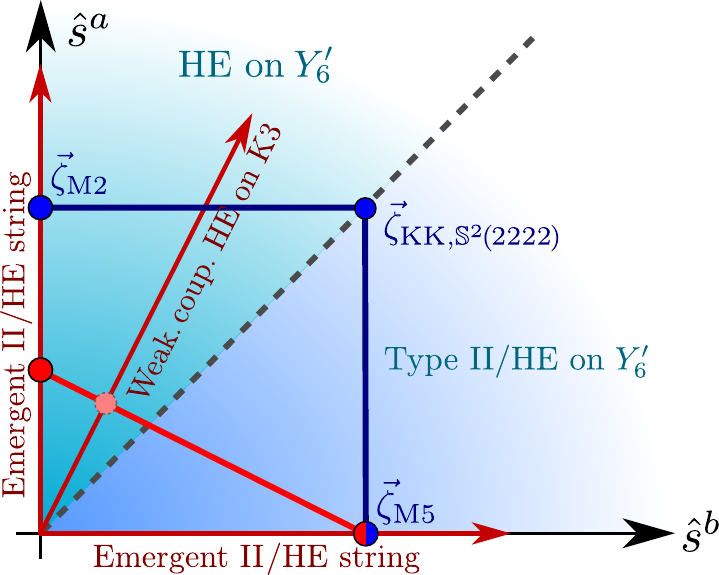}
\caption{\hspace{-0.3em} $(s^a,s^b)\sim (s^{a_1}\dots s^{a_4},s^{a_5})$} \label{f.SlicesMth-6}
\end{subfigure}
\hfill
\begin{subfigure}[b]{0.32\textwidth}
\captionsetup{width=1.1\linewidth}
\center
\includegraphics[width=\textwidth]{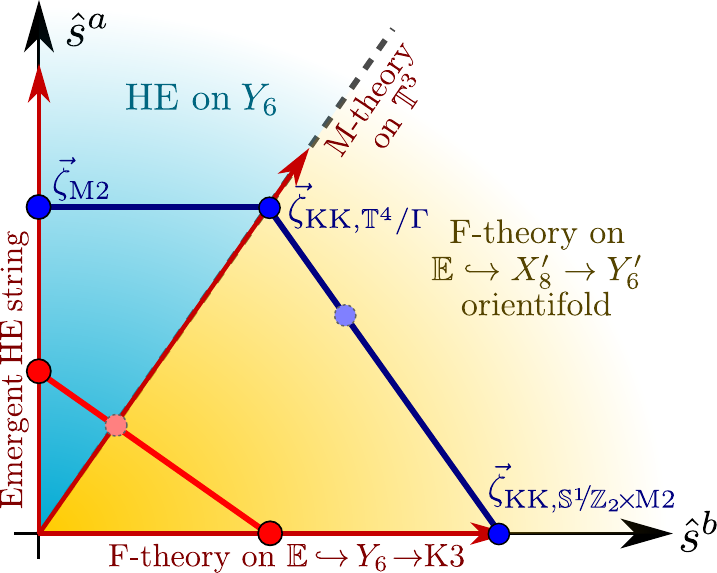}
\caption{\hspace{-0.3em} $(s^a,s^b)\sim (s^{a_1}\dots s^{a_4},s^{a_5}s^{a_6})$.} \label{f.SlicesMth-7}
\end{subfigure}
\begin{subfigure}[b]{0.32\textwidth}
\captionsetup{width=1.1\linewidth}
\center
\includegraphics[width=\textwidth]{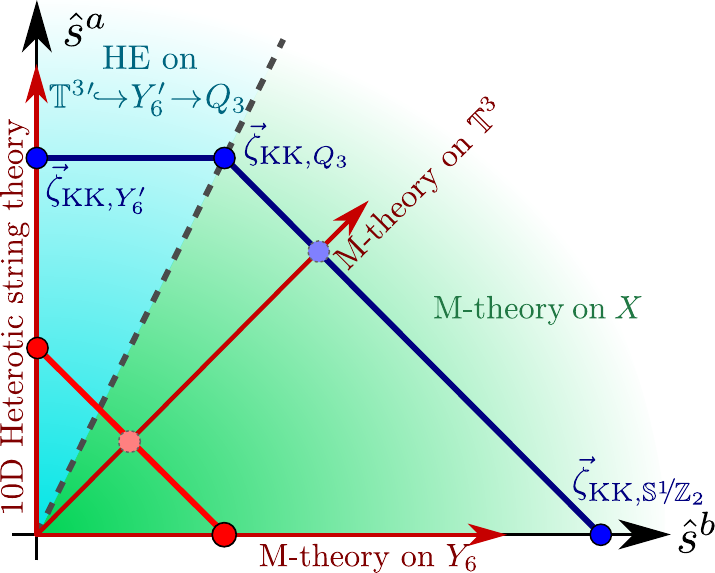}
\caption{\hspace{-0.3em} $(s^a,s^b)\sim (s^{a_1}s^{a_2} s^{a_3},s^{a_4}s^{a_5}s^{a_6})$.} \label{f.SlicesMth-8}
\end{subfigure}
\begin{subfigure}[b]{0.32\textwidth}
\captionsetup{width=1.1\linewidth}
\center
\includegraphics[width=\textwidth]{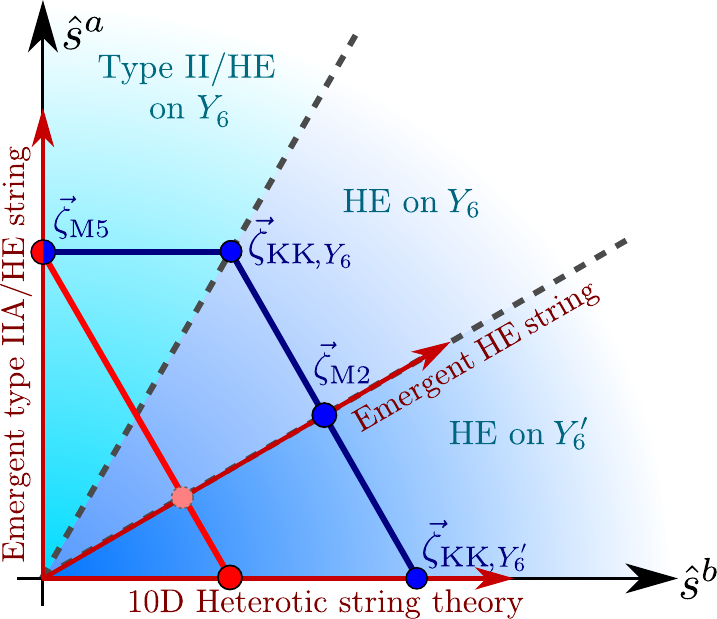}
\caption{\hspace{-0.3em} $(s^a,s^b)\sim (s^{a_1},s^{a_2} s^{a_3}s^{a_4})$.} \label{f.SlicesMth-9}
\end{subfigure}
\hfill
		\caption{Arrangement of duality frames, light towers (in blue) and EFT strings (in a red hyperplane, with the EFT string flow directions and the associated dual theory and infinite distance labeled) for different 2-dimensional slices of the $\{s^1,\dots,s^7\}$ moduli space of M-theory on $\mathbb{T}^7/(\mathbb{Z}_2\oplus\mathbb{Z}_2\oplus\mathbb{Z}_2)$. Here $(s^a,s^b)\sim(s^{a_1}\dots s^{a_k},s^{a_{k+1}}\dots s^{a_{k+l}})$ denotes the slice of moduli space where $k$ and $l$ saxions are sent to infinity homogeneously, with the new saxionic coordinates $s^a$ and $s^b$ controlling the respective growth, while the remaining $7-k-l$ saxions are kept fixed. Bounded states of light towers and EFT strings are depicted in a lighter shade. All quantities are depicted in canonically normalized coordinates. See \cite[Section 6]{Grieco:2025bjy} for more details.}
		 
			\label{fig.buG2-2}
	\end{center}
\end{figure}

 \begin{figure}[H]
\begin{center}
\begin{subfigure}[b]{0.40\textwidth}
\captionsetup{width=\linewidth}
\center
\includegraphics[width=\textwidth]{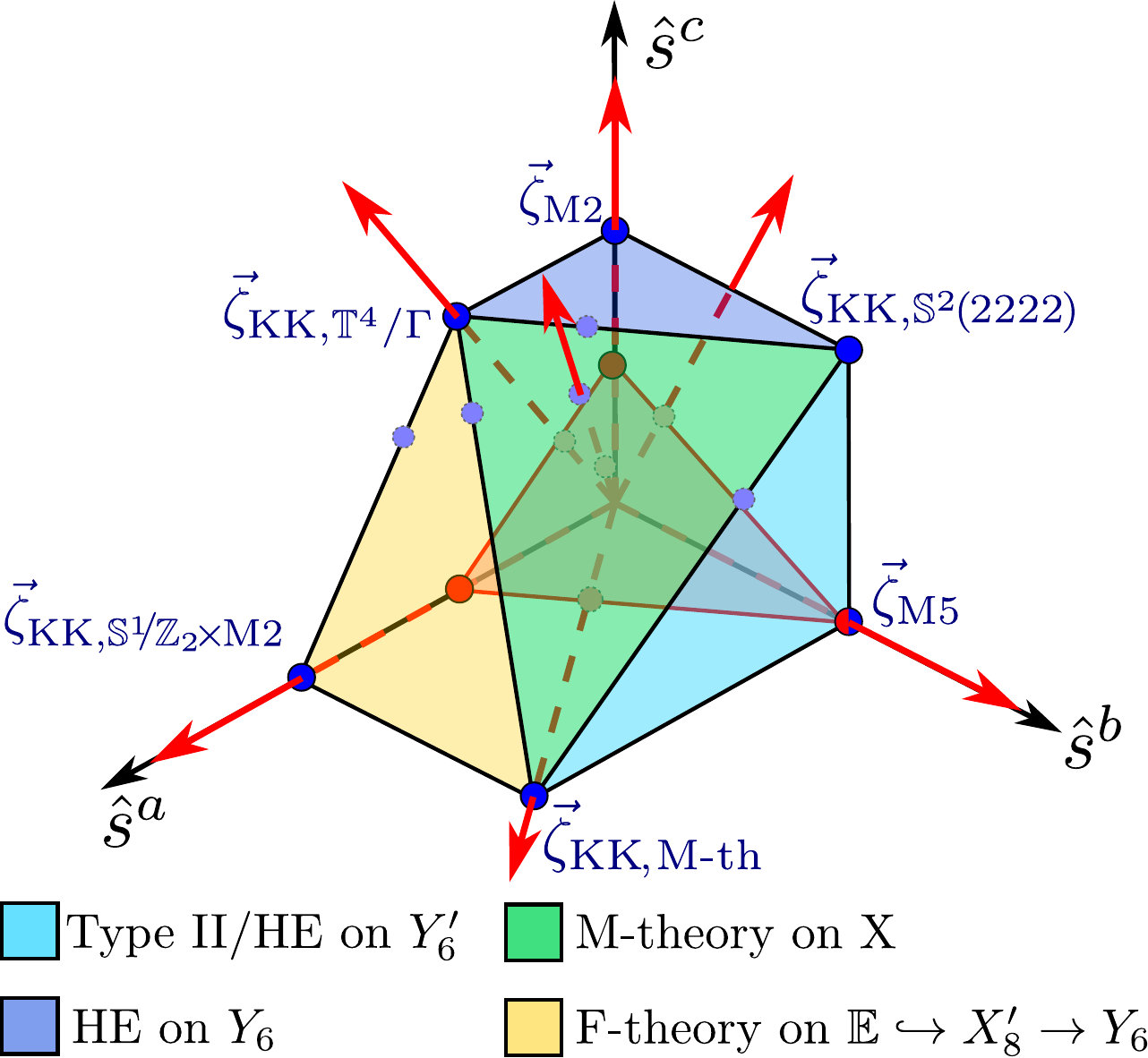}
\caption{\hspace{-0.3em} $(s^a,s^b,s^c)\sim(s^{a_1}s^{a_2},s^{a_3}\dots s^{a_6},s^{a_7})$.} \label{f.slicesMth3-1}
\end{subfigure}
\begin{subfigure}[b]{0.40\textwidth}
\captionsetup{width=\linewidth}
\center
\includegraphics[width=\textwidth]{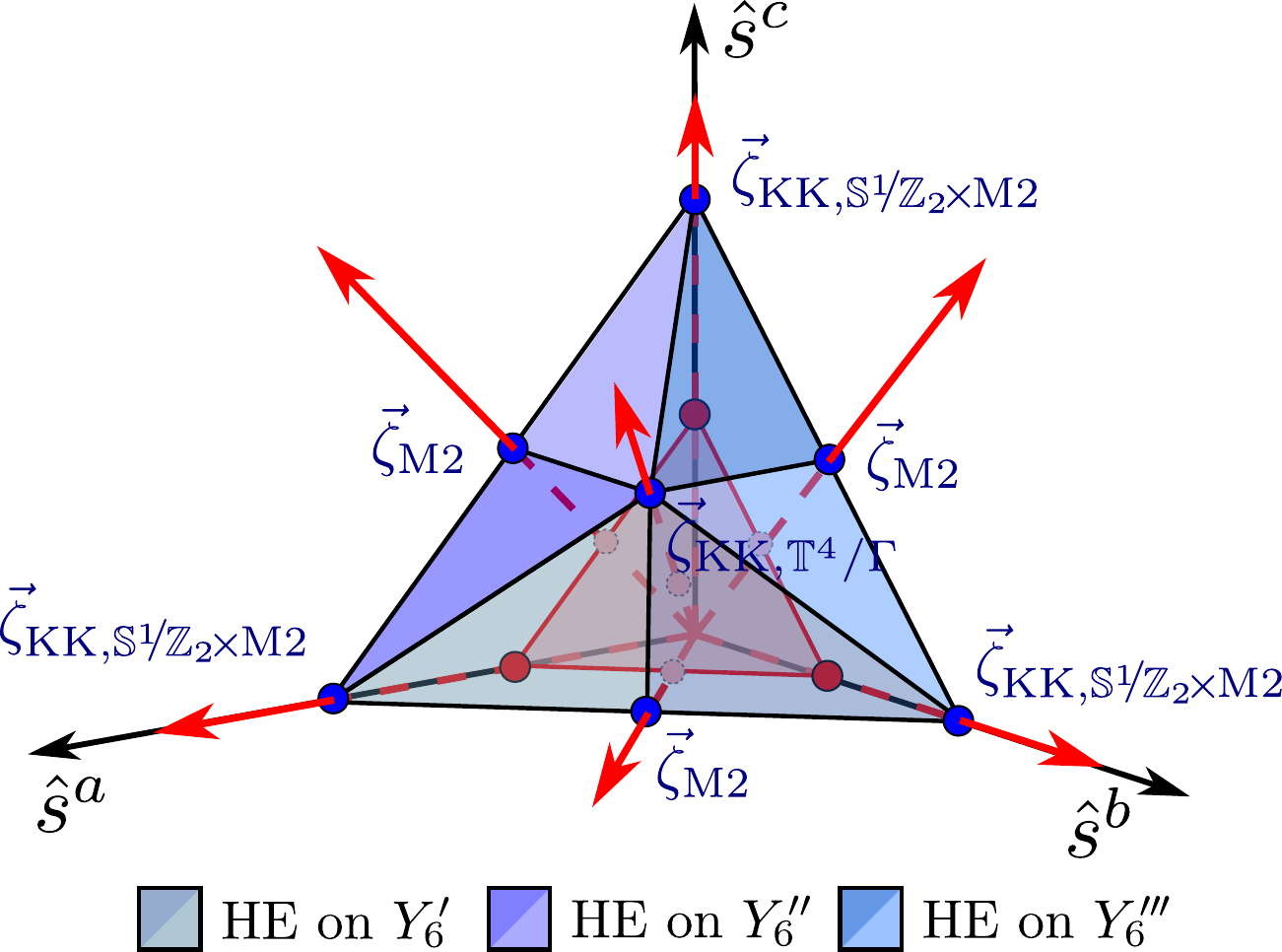}
\caption{\hspace{-0.3em} $(s^a,s^b,s^c)\sim(s^{a_1}s^{a_2},s^{a_3}s^{a_4},s^{a_5}s^{a_6})$.} \label{f.slicesMth3-2}
\end{subfigure}
\begin{subfigure}[b]{0.40\textwidth}
\captionsetup{width=\linewidth}
\center
\includegraphics[width=\textwidth]{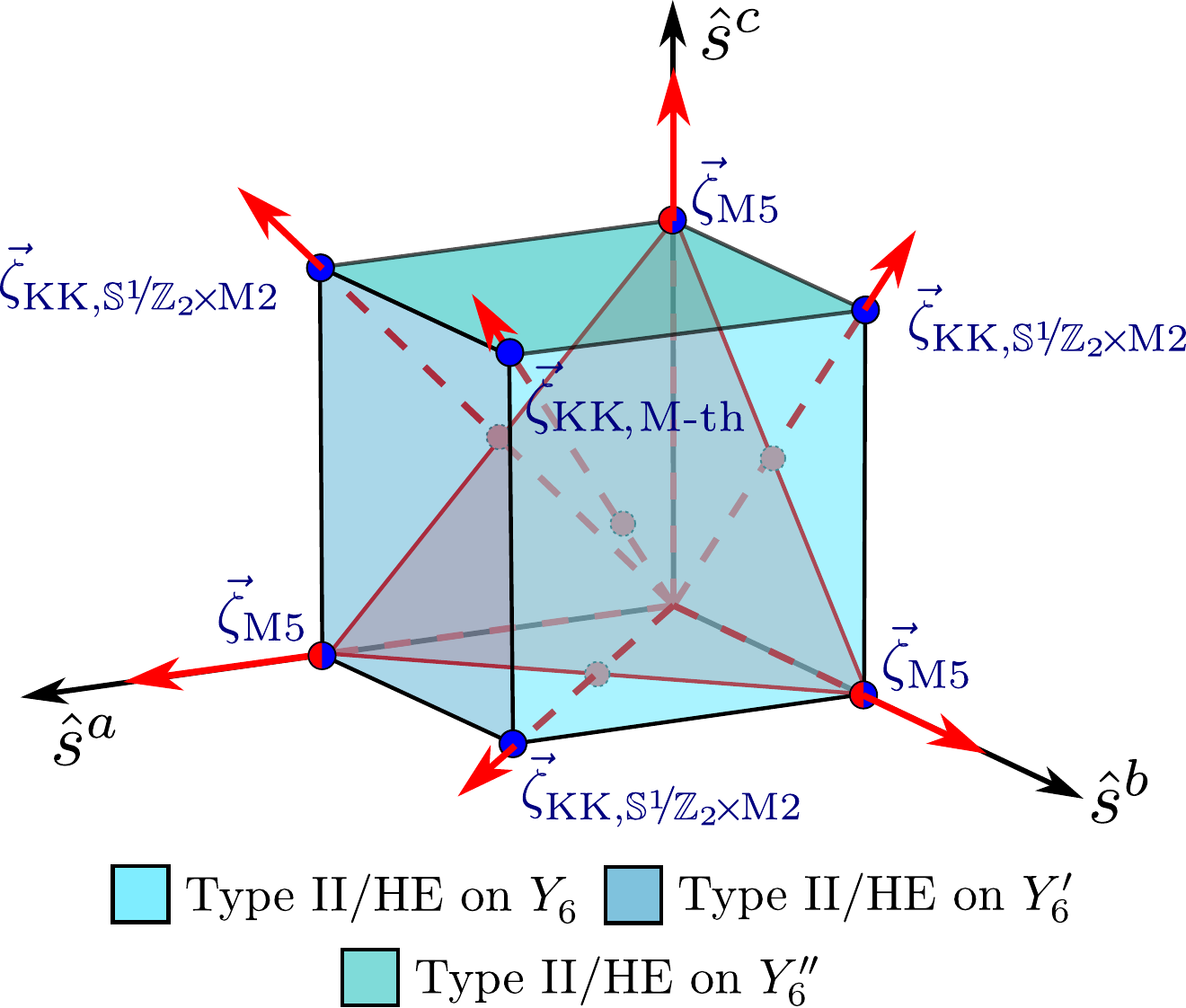}
\caption{\hspace{-0.3em} $(s^a,s^b,s^c)\sim(s^{a_1},s^{a_2},s^{a_3})$.} \label{f.slicesMth3-3}
\end{subfigure}
\begin{subfigure}[b]{0.40\textwidth}
\captionsetup{width=\linewidth}
\center
\includegraphics[width=\textwidth]{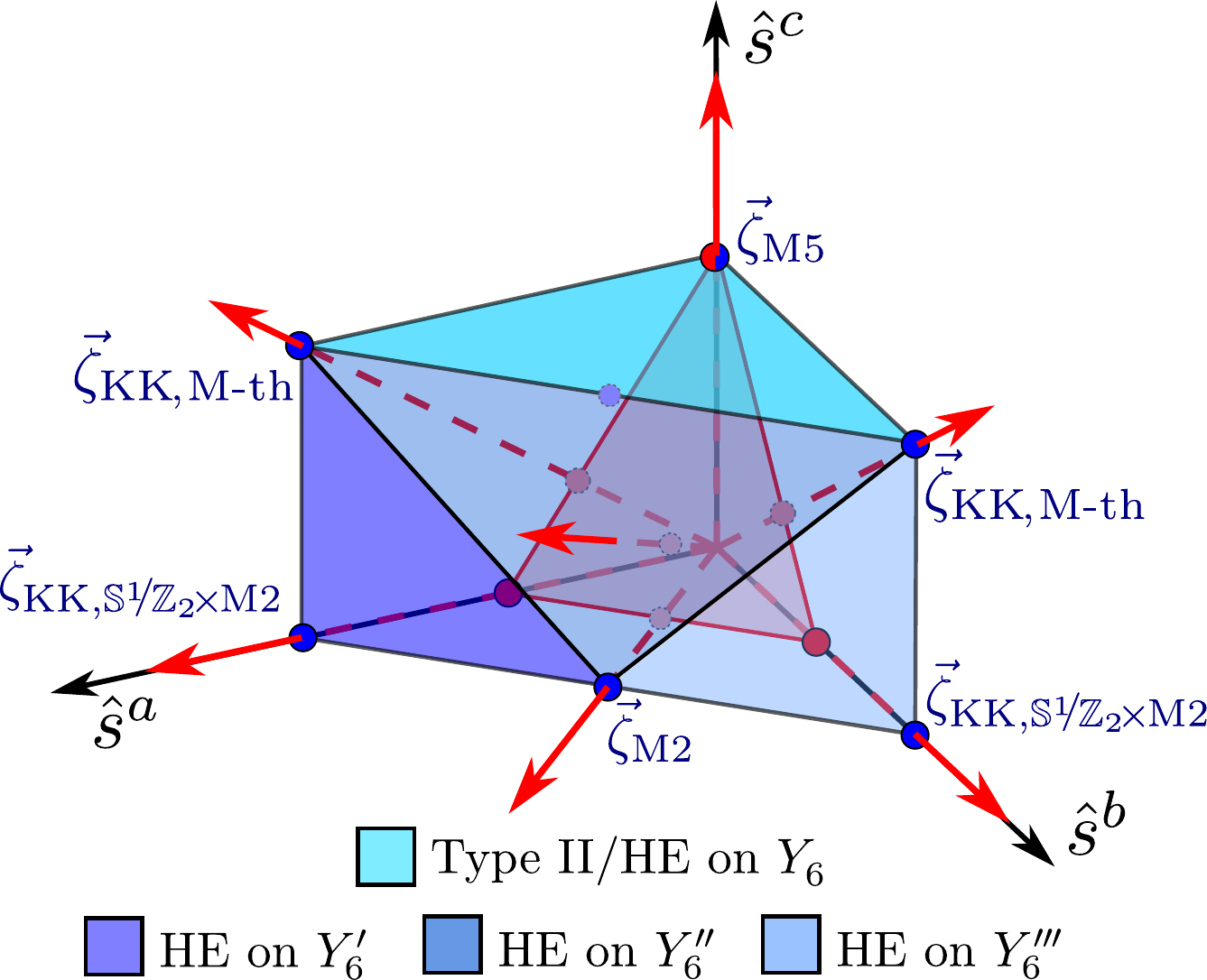}
\caption{\hspace{-0.3em} $(s^a,s^b,s^c)\sim(s^{a_1},s^{a_2}s^{a_3},s^{a_4}s^{a_5})$.} \label{f.slicesMth3-4}
\end{subfigure}
\begin{subfigure}[b]{0.55\textwidth}
\captionsetup{width=\linewidth}
\center
\includegraphics[width=\textwidth]{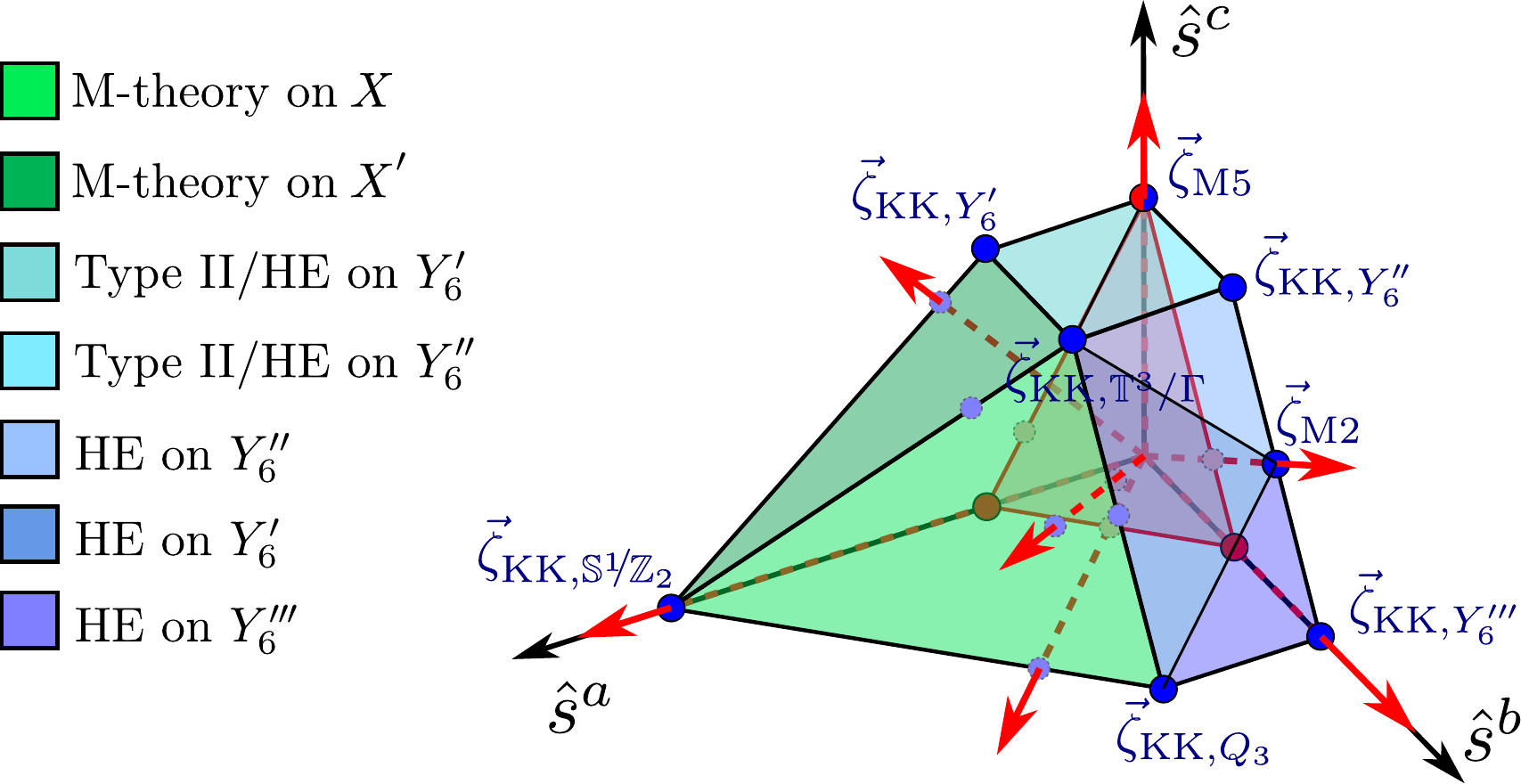}
\caption{\hspace{-0.3em} $(s^a,s^b,s^c)\sim(s^{a_1}s^{a_2}s^{a_3},s^{a_4}s^{a_5}s^{a_6},s^{a_7})$.} \label{f.slicesMth3-5}
\end{subfigure}
		\caption{Arrangement of duality frames, light towers (in blue) and EFT strings (in a red hyperplane, with the EFT string flow directions and the associated dual theory and infinite distance labeled) for different 3-dimensional slices of the $\{s^1,\dots,s^7\}$ moduli space of M-theory on $\mathbb{T}^7/(\mathbb{Z}_2\oplus\mathbb{Z}_2\oplus\mathbb{Z}_2)$. All quantities are depicted in canonically normalized coordinates, with $(s^a,s^b,s^c)\sim(s^{a_1}\dots s^{a_k},s^{a_{k+1}}\dots s^{a_{k+l}},s^{a_{k+l+1}}\dots s^{a_{k+l+h}})$ denoting the $k+l+h$ slice of moduli space with co-scaling saxions as in Figure \ref{fig.buG2-2}. Bounded states of light towers and EFT strings are depicted in a lighter shade. All quantities are depicted in canonically normalized coordinates. See \cite[Section 6]{Grieco:2025bjy} for more details.}
			\label{fig.buG2-3}
	\end{center}
\end{figure}

\bibliographystyle{JHEP}

\bibliography{refs}

\end{document}